\numberwithin{equation}{section}
\newtheorem{Theorem}{Theorem}[section]
\newtheorem*{Theorem*}{Theorem}
\newtheorem{Corollary}[Theorem]{Corollary}
\newtheorem{Lemma}[Theorem]{Lemma}
\newtheorem{Proposition}[Theorem]{Proposition}
 { \theoremstyle{definition}
\newtheorem{Definition}[Theorem]{Definition}

\newtheorem{Remark}[Theorem]{Remark} }
\begin{document}
\allowdisplaybreaks

\newcommand{\arXivNumber}{2306.07062}

\renewcommand{\PaperNumber}{020}

\FirstPageHeading

\ShortArticleName{Non-Integrability of the Sasano System of Type $D_5^{(1)}$ and Stokes Phenomena}

\ArticleName{Non-Integrability of the Sasano System of Type $\boldsymbol{D_5^{(1)}}$\\ and Stokes Phenomena}

\Author{Tsvetana STOYANOVA}

\AuthorNameForHeading{Ts.~Stoyanova}

\Address{Faculty of Mathematics and Informatics, Sofia University ``St. Kliment Ohridski'', \\
5 J. Bourchier Blvd., Sofia 1164, Bulgaria}
\Email{\href{mailto:cveti@fmi.uni-sofia.bg}{cveti@fmi.uni-sofia.bg}}

\ArticleDates{Received November 28, 2023, in final form March 10, 2025; Published online March 27, 2025}

\Abstract{In 2006, Y.~Sasano proposed higher-order Painlev\'{e} systems, which admit affine Weyl group symmetry of type \smash{$D^{(1)}_l$}, $l=4, 5, 6, \dots$. In this paper, we study the integrability of a four-dimensional Painlev\'{e} system, which has symmetry under the extended affine Weyl group \smash{$\widetilde{W}\bigl(D^{(1)}_5\bigr)$} and which we call the Sasano system of type \smash{$D^{(1)}_5$}. We prove that one family of the Sasano system of type \smash{$D^{(1)}_5$} is not integrable by rational first integrals. We describe Stokes phenomena relative to a subsystem of the second normal variational equations. This approach allows us to compute in an explicit way the corresponding differential Galois group and therefore to determine whether the connected component of its unit element is not Abelian. Applying the Morales--Ramis--Sim\'{o} theory, we establish a non-integrable result.}

\Keywords{Sasano systems; non-integrability of Hamiltonian systems; differential Galois theory; Stokes phenomenon}

\Classification{34M55; 37J30; 34M40; 37J65}

\section{Introduction}

 In 2006, using the methods from algebraic geometry Y.~Sasano \cite{Sa} introduced higher-order Painlev\'{e}
 systems, which have symmetry under the affine Weyl group of type \smash{$D^{(1)}_l$}, $l=4, 5, 6, \dots$. Subsequently Fuji and Suzuki \cite{FS} derived the higher-order
 Painlev\'{e} systems of type \smash{$D^{(1)}_{2 n +2}$} from the Drinfeld--Sokolov hierarchy by similarity reduction.
 These higher-order Painlev\'{e} systems have four essential properties:
 \begin{enumerate}\itemsep=0pt
 \item They are Hamiltonian systems.
 	\item
 	They admit an affine Weyl group symmetry of type \smash{$D^{(1)}_l$} as B\"{a}cklund transformations.
 	\item
 	They can be considered as higher-order analogues of the Painlev\'{e} V and Painlev\'e VI systems.
 	\item
 	They have several symplectic coordinate systems, on which the Hamiltonians are polynomial.
 \end{enumerate}

 In this paper, we study the integrability of the following fourth-order Painlev\'{e} system
 \begin{gather}
 \dot{x} =
 \frac{2 x^2 y}{t} + x^2 - \frac{2 x y}{t}
 -\biggl(1 + \frac{\beta}{t}\biggr) x
 +\frac{\alpha_2+\alpha_5}{t} + \frac{2 z ((z-1) w + \alpha_3)}{t},\nonumber\\
 \dot{y} =
 -\frac{2 x y^2}{t} + \frac{y^2}{t} - 2 x y
 +\biggl(1 + \frac{\beta}{t}\biggr) y -\alpha_1,\nonumber\\
 \dot{z} =
 \frac{2 z^2 w}{t} + z^2 - \frac{2 z w}{t}
 -\biggl(1 + \frac{\alpha_5+\alpha_4}{t}\biggr) z
 +\frac{\alpha_5}{t} + \frac{2 y z (z-1)}{t},\nonumber\\
 \dot{w} =
 -\frac{2 z w^2}{t} + \frac{w^2}{t} - 2 z w
 +\biggl(1 + \frac{\alpha_5+\alpha_4}{t}\biggr) w - \alpha_3 -
 \frac{2 y (-w + 2 z w + \alpha_3)}{t}\label{s}
 \end{gather}
 with $\beta=2 \alpha_2+2 \alpha_3 + \alpha_4+\alpha_5$, where $\alpha_0, \alpha_1, \dots, \alpha_5$
 are complex parameters, which satisfy the relation
 \begin{equation*}
 \alpha_0+\alpha_1 + 2 \alpha_2 + 2 \alpha_3 + \alpha_4 + \alpha_5=1 .%\label{r}
 \end{equation*}
 The system~\eqref{s} is one of the systems introduced by Sasano in \cite{Sa}. It admits the extended affine Weyl
 group \smash{$\widetilde{W}\bigl(D^{(1)}_5\bigr)$} as a group of B\"{a}cklund transformations. For these reasons, throughout this
 paper we call the system~\eqref{s} the Sasano system of type \smash{$D^{(1)}_5$} or in short the Sasano system.
 The system~\eqref{s} is a two-degree-of-freedom non-autonomous Hamiltonian system with the Hamiltonian
 \begin{gather}
 H = H_{V} (x, y, t; \alpha_2+\alpha_5, \alpha_1, \alpha_2 + 2 \alpha_3 +\alpha_4) +
 H_{V} (z, w, t; \alpha_5, \alpha_3, \alpha_4) \nonumber\\
 \hphantom{H =}{} +
 \frac{2 y z ((z-1) w + \alpha_3)}{t} ,\label{H51}
 \end{gather}
 where $H_{V}(q, p, t; \gamma_1, \gamma_2, \gamma_3)$ is the Hamiltonian associated with the fifth Painlev\'{e} equation,
 i.e.,
 \[
 H_{V}(q, p, t ; \gamma_1, \gamma_2, \gamma_3)=
 \frac{q (q-1) p (p+t) - (\gamma_1+\gamma_3) q p + \gamma_1 p + \gamma_2 t q}{t} .
 \]
 Hence the system~\eqref{s} is considered as coupled Painlev\'{e} V systems in dimension 4. The
 non-autonomous Hamiltonian system~\eqref{s} can be turned into an autonomous one with three degrees of freedom
 by introducing two new dynamical variables: $t$ and its conjugate variable $-F$. The~new Hamiltonian becomes
 \[
 \widetilde{H}=H+F ,
 \]
 where $H$ is given by~\eqref{H51}. Then the extended Hamiltonian system~\eqref{s} becomes
 \begin{alignat}{3}
 &\frac{{\rm d} x}{{\rm d} s}=\frac{\partial \widetilde{H}}{\partial y},\qquad &&
 \frac{{\rm d} y}{{\rm d} s}=-\frac{\partial \widetilde{H}}{\partial x},\nonumber&\\
 &\frac{{\rm d} z}{{\rm d} s}=\frac{\partial \widetilde{H}}{\partial w},\qquad &&
 \frac{{\rm d} w}{{\rm d} s}=-\frac{\partial \widetilde{H}}{\partial z},\nonumber&\\
 &\frac{{\rm d} t}{{\rm d} s}=\frac{\partial \widetilde{H}}{\partial F},\qquad &&
 \frac{{\rm d} F}{{\rm d} s}=-\frac{\partial \widetilde{H}}{\partial t} .&\label{es}
 \end{alignat}
 The symplectic structure $\omega$ is canonical in the variables $(x, y, z, w, t, F)$, i.e.,
 $\omega={\rm d} x \wedge {\rm d} y + {\rm d} z \wedge {\rm d} w + {\rm d} t \wedge {\rm d} F$.

 In this paper, we are interested in the non-integrability
 of the Hamiltonian system~\eqref{es}.
 Recall that from the theorem of Liouville--Arnold \cite{Ar} this means the non-existence of
 three first integrals $f_1=\widetilde{H}, f_2, f_3$ functionally independent and in involution.
 We prove that the Sasano system~\eqref{s} is non-integrable by rational first integrals. Our approach comes under the frame of the Morales--Ramis--Sim\'{o} theory. This theory
 reduces the problem of integrability of a given analytic Hamiltonian system to the problem of integrability of
 the variational equation of this Hamiltonian system along one particular non-stationary solution. Since the
 variational equations are linear ordinary differential equations their integrability is well defined in the context
 of the differential Galois theory. The Morales--Ramis--Sim\'o theory finds applications in the study of non-integrability of a huge range of dynamical systems like
 $N$-body problems~\mbox{\cite{BC, Co3, MPSS, PM, Ts1, Ts2}}, problems with homogeneous potentials \cite{CDMP, Co1, CMP}, the Painlev\'e equation and their $q$-analogues~\mbox{\cite{CR, F, M, St1, St2, St3}},
 the higher-order analogues of Painlev\'e systems \cite{St4},
 as well as in the study of non-integrability of non-Hamiltonian systems \cite{AZ, C}, in the study of integrability in the Jacobi sense \cite{MP}, in the study of the irreducibility
of the Painlev\'e equations \cite{CW}, etc.

 In this paper, we study the extended Sasano system~\eqref{es} when $\alpha_1=\alpha_2=\alpha_3=0$, ${\alpha_4=1}$, ${\alpha_0=-\alpha_5}$.
 It turns out that the differential Galois group of the first
 variational equations is a~commutative group.
 To establish a non-integrable result, we find an obstruction to integrability studying the linearized second normal variational equations
 $({\rm LNVE})_2$, which is a linear homogeneous system of thirteen order. To determine the Galois group of such a higher-order linear system, we find
 a subsystem of the $({\rm LNVE})_2$, whose Galois group can be computed explicitly.
 It turns out that this subsystem is a system with non-trivial Stokes phenomena at the infinity point. Computing the corresponding
 Stokes matrices we deduce that the connected component of the unit element of the differential Galois group
 of this subsystem and hence the Galois group of the $({\rm LNVE})_2$ is not an Abelian group. Then
 the key result of this paper (Theorem \ref{main-s} in Section~\ref{section3}) states

 \begin{Theorem}\label{key}
 Assume that $\alpha_1=\alpha_2=\alpha_3=0$, $\alpha_4=1$, $\alpha_0=-\alpha_5$, where $\alpha_5$ is arbitrary. Then the
 Sasano system~\eqref{s} is not integrable in the Liouville--Arnold sense by rational first integrals.
 \end{Theorem}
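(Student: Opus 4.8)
The plan is to apply the Morales--Ramis--Sim\'o theorem, which asserts that if the Hamiltonian system~\eqref{es} possesses the set of rational first integrals required for Liouville--Arnold integrability, then for every $k\geq 1$ the identity component $G_k^0$ of the differential Galois group of the $k$-th order variational equations $({\rm VE})_k$ along any fixed non-stationary integral curve must be Abelian. Consequently, to prove Theorem~\ref{key} it suffices to exhibit one integral curve and one order $k$ for which $G_k^0$ fails to be commutative. Since the first variational equations are already known to have a commutative Galois group, the obstruction must be manufactured at the level of the second variational equations, and the entire difficulty is concentrated there.

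First I would fix a convenient particular solution. Specializing the parameters to $\alpha_1=\alpha_2=\alpha_3=0$, $\alpha_4=1$, $\alpha_0=-\alpha_5$, one checks directly from~\eqref{s} that $\dot y$ and $\dot w$ vanish on the affine subspace $\{y=w=0\}$, so this subspace is invariant, and on it the two surviving coordinates satisfy one and the same Riccati equation
\[
\dot u = u^2-\Bigl(1+\tfrac{1+\alpha_5}{t}\Bigr)u+\tfrac{\alpha_5}{t}.
\]
Choosing $x=z=u(t)$, $y=w=0$ together with the induced $t$ and $F$ therefore furnishes an explicit non-stationary integral curve $\Gamma$ of the extended system~\eqref{es}. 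The substitution $u=-\dot\psi/\psi$ linearizes this Riccati equation into a second-order equation of confluent (Whittaker) type, regular at $t=0$ and irregular of Poincar\'e rank one at $t=\infty$; this irregular point is precisely the source of the Stokes phenomena to be exploited.

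Next I would build the variational hierarchy along $\Gamma$. Linearizing~\eqref{es} once gives $({\rm VE})_1$, whose normal part is governed by the Whittaker equation above and whose identity component is Abelian, so it yields no obstruction. Passing to second order and separating the tangential direction produces the linearized second normal variational equations $({\rm LNVE})_2$, a linear system of order thirteen, block-triangular over $({\rm VE})_1$, whose coupling terms are quadratic in the solutions of $({\rm VE})_1$. Rather than attack this $13\times 13$ system head on, I would isolate an invariant subsystem large enough to carry the non-commutativity yet small enough to be analyzed, and study it at $t=\infty$. Because the base equation is irregular there, the subsystem inherits an irregular singularity, so its Galois group is controlled by the Ramis density theorem: $G^0$ is topologically generated by the exponential torus and the (unipotent) Stokes operators, while the formal monodromy accounts for the rest of $G$. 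The plan is then to compute the relevant Stokes multipliers explicitly, by Borel--Laplace summation of the divergent formal series, and to check that at least one non-trivial Stokes matrix fails to commute with the exponential torus.

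The main obstacle is exactly this explicit determination of the Stokes matrices for the chosen subsystem: one must control the formal solutions at infinity, locate the anti-Stokes directions, and evaluate the connection constants, which usually forces a reduction of the subsystem to a solvable model or an identification with known special functions. Once a single non-trivial, non-commuting Stokes operator is produced, the identity component of the Galois group of the subsystem is non-Abelian; since the subsystem sits inside $({\rm LNVE})_2$, the identity component $G_2^0$ of the full second variational equations is non-Abelian as well. Invoking the Morales--Ramis--Sim\'o theorem then contradicts the existence of the required rational first integrals, which establishes Theorem~\ref{key}.
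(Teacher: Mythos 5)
Your overall strategy coincides with the paper's: pass to the second (linearized normal) variational equations along a solution lying in the invariant plane $\{y=w=0\}$ with $x=z$, isolate a manageable subsystem, exhibit a non-trivial Stokes matrix at the irregular point, and conclude via the Ramis density theorem and Morales--Ramis--Sim\'o. However, as written the proposal has two genuine gaps. First, the particular solution is left unspecified: the Morales--Ramis--Sim\'o theorem is applied here over $K=\mathbb{C}(t)$, so you must take the \emph{rational} solution of your Riccati equation, namely $x=z=\alpha_5/t$, $y=w=0$; a generic Riccati solution (the one your Whittaker linearization parametrizes) is transcendental and the variational equations along it no longer have rational coefficients. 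Moreover, along the rational solution the first normal variational equations are not ``governed by the Whittaker equation'' --- they form an explicitly integrable triangular $4\times4$ system with solutions of the type ${\rm e}^{\pm t}t^{\pm(\alpha-1)}$, which is what makes $(G_1)^0$ Abelian and forces the passage to second order.

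Second, and more seriously, the proposal stops exactly where the actual proof begins. You acknowledge that ``the main obstacle is exactly this explicit determination of the Stokes matrices,'' but that determination \emph{is} the proof: one must exhibit the fourth-order subsystem of the $({\rm LNVE})_2$ in the variables $w_2$, $w_1z_1$, $y_1w_1$, $w_1^2$, construct its formal fundamental solution at infinity (which contains two divergent series), compute their Borel--Laplace sums and the two Stokes multipliers $\mu_1$, $\mu_2$ by Hankel-contour integrals in terms of Gamma functions, and then verify that for \emph{every} value of $\alpha_5$ at least one multiplier is non-zero. This last verification cannot be waved away, because the theorem claims non-integrability for arbitrary $\alpha_5$: when $\alpha_5\in\mathbb{N}$ one of the two series is a polynomial and the corresponding Stokes matrix is the identity, and symmetrically for $\alpha_5\in\mathbb{Z}_{\leq 0}$, so one must check in each case that the surviving multiplier does not vanish and that the resulting unipotent matrix fails to commute with the exponential torus (which also requires noting that the torus is non-trivial, since ${\rm e}^{1/\tau}\notin\mathbb{C}(\tau)$). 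Without producing the subsystem and these non-vanishing statements, the non-commutativity of $(G_2)^0$ is asserted rather than proved.
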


 Using B\"{a}cklund transformations of the Sasano system~\eqref{s}, which are rational canonical transformations~\cite{SY}, we can extend the result of the key Theorem~\ref{key} to
 the main results of this paper (Theorems \ref{1} and~\ref{2} in Section~\ref{section4}).
 \begin{Theorem}\label{main}
 Let $\alpha$ be an arbitrary complex parameter, which is not an integer. Assume that the parameters $\alpha_j$ are either of the kind
 $\pm \alpha + n_j$ or of the kind $l_j, n_j, l_j\in\mathbb{Z}$ in such a way that~${1-\alpha_0-\alpha_1}$ and
 $\alpha_4+\alpha_5$ are together of the kind $\pm \alpha + m_i$, $m_i\in\mathbb{Z}$, $i=1, 2$. Then the Sasano system
 \eqref{s} is not integrable in the Liouville--Arnold sense by rational first integrals.
 \end{Theorem}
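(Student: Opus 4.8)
The strategy is to transport the non-integrability established in the key Theorem~\ref{key} along the B\"{a}cklund transformations of the Sasano system. Recall from~\cite{Sa,SY} that the generators of the extended affine Weyl group $\widetilde{W}\bigl(D_5^{(1)}\bigr)$ act on the phase space of~\eqref{s} as birational canonical transformations and simultaneously act on the parameter vector $\alpha=(\alpha_0,\alpha_1,\dots,\alpha_5)$ by the associated affine Weyl group element. For a fixed $w\in\widetilde{W}\bigl(D_5^{(1)}\bigr)$, the corresponding transformation conjugates the extended autonomous Hamiltonian system~\eqref{es} with parameters $\alpha$ to the same system with parameters $w(\alpha)$. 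Since this transformation is rational and preserves the symplectic form $\omega$, the pullback of a rational first integral is again a rational first integral, while functional independence and involutivity are preserved. Consequently, integrability by rational first integrals depends only on the $\widetilde{W}\bigl(D_5^{(1)}\bigr)$-orbit of $\alpha$: if one vector of an orbit is non-integrable, then so is every vector of that orbit.

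First I would write the action on parameters explicitly. With the nodes of the $D_5^{(1)}$ diagram arranged as the two forks $\{0,1\}$ and $\{4,5\}$ joined by the central edge between nodes $2$ and $3$, the simple reflections act by $s_i(\alpha_i)=-\alpha_i$ and $s_i(\alpha_j)=\alpha_j+\alpha_i$ for each node $j$ adjacent to $i$, leaving the remaining parameters fixed; the extended part is generated by the diagram automorphisms $0\leftrightarrow 1$, $4\leftrightarrow 5$ and the end-reversal exchanging the two forks. This reduces Theorem~\ref{main} to the combinatorial statement that the parameter family in the hypothesis is contained in the $\widetilde{W}\bigl(D_5^{(1)}\bigr)$-orbit of the distinguished vector $\alpha^{\ast}=(-\alpha,0,0,0,1,\alpha)$ of Theorem~\ref{key}, where $\alpha=\alpha_5$ is a fixed non-integer.

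Next I would run the orbit computation. The only non-integer entries of $\alpha^{\ast}$ are $\alpha_0=-\alpha$ and $\alpha_5=\alpha$, and every generator acts by a sign flip or by adding a neighbouring parameter; hence each entry of any vector $w(\alpha^{\ast})$ necessarily lies in $\{\,l_j,\ \pm\alpha+n_j\,\}$ with $l_j,n_j\in\mathbb{Z}$, which is precisely the first condition of the theorem. To capture the finer condition, I would follow the two combinations $1-\alpha_0-\alpha_1$ and $\alpha_4+\alpha_5$, each attached to one of the two forks (for $\alpha^{\ast}$ both equal $\alpha+1$), and check that they stay of the form $\pm\alpha+m_i$, $m_i\in\mathbb{Z}$, under all generators, the diagram automorphisms being exactly what realizes the two admissible signs and the interchange of the roles of the two forks. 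The explicit case analysis, according to which fork carries the $+\alpha$ and which the $-\alpha$ part, is what splits into the two statements of Section~\ref{section4}.

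The step I expect to be the main obstacle is the converse inclusion in this last computation: one must verify not only that every $w(\alpha^{\ast})$ satisfies the stated arithmetic conditions, but that every parameter vector satisfying them is genuinely reachable from $\alpha^{\ast}$, so that the hypotheses of Theorem~\ref{main} cut out exactly the union of the relevant orbits. This demands careful bookkeeping of which independent integer translations of $1-\alpha_0-\alpha_1$ and $\alpha_4+\alpha_5$ are attainable by words in the generators, together with a check that no arithmetic restriction beyond membership in $\pm\alpha+\mathbb{Z}$ survives. Once this is settled, the invariance principle of the first paragraph finishes the proof.
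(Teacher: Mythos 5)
Your proposal follows the paper's own route: the B\"{a}cklund transformations are shown (Theorem \ref{bct}) to be rational canonical maps, so the non-integrability of Theorem \ref{main-s} is transported along the \smash{$\widetilde{W}\bigl(D^{(1)}_5\bigr)$}-orbit of $V=(-\alpha,0,0,0,1,\alpha)$, and Lemma \ref{orbit} supplies exactly the orbit description you outline. The ``main obstacle'' you flag --- verifying that every admissible parameter vector is actually reachable from $V$ --- is what the paper handles via the translation operators $T_1,\dots,T_6$, whose integer translation vectors span the parameter hyperplane $\alpha_0+\alpha_1+2\alpha_2+2\alpha_3+\alpha_4+\alpha_5=1$; beyond invoking these, the paper's inductive proof of Lemma \ref{orbit} is no more detailed than your sketch.
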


 \begin{Theorem}\label{main0}
 Assume that all of the parameters $\alpha_j$, $0 \leq j \leq 5$, are integer in such a way that~$1-\alpha_0-\alpha_1$
 and $\alpha_4+\alpha_5$ are together either even or odd integer. Then the Sasano system~\eqref{s}
 is not integrable in the Liouville--Arnold sense by rational first integrals.
 \end{Theorem}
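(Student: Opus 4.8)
The plan is to deduce Theorem~\ref{main0} from the key Theorem~\ref{key} by transporting non-integrability along the B\"{a}cklund transformations of the Sasano system. These transformations form the extended affine Weyl group $\widetilde{W}\bigl(D_5^{(1)}\bigr)$ acting on \eqref{s} as rational canonical transformations \cite{SY}. First I would observe that integrability in the Liouville--Arnold sense by rational first integrals is invariant under such a transformation: a rational canonical map preserves the symplectic form, so it carries a complete set of rational first integrals in involution of the transformed system to one of the original system and conversely. Hence, if the parameter tuple $(\alpha_0,\dots,\alpha_5)$ of some system \eqref{s} can be mapped by an element of $\widetilde{W}\bigl(D_5^{(1)}\bigr)$ into the family
\[
F=\bigl\{\alpha_1=\alpha_2=\alpha_3=0,\ \alpha_4=1,\ \alpha_0=-\alpha_5,\ \alpha_5\in\mathbb{Z}\bigr\}
\]
covered by Theorem~\ref{key}, then that system is non-integrable. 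The whole argument thus reduces to a combinatorial statement: every integer tuple admissible in Theorem~\ref{main0} lies in the $\widetilde{W}\bigl(D_5^{(1)}\bigr)$-orbit of $F$.

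Second, I would record the induced affine action on the parameters. For the $D_5^{(1)}$ Dynkin diagram, with the two forks $\{\alpha_0,\alpha_1\}$ and $\{\alpha_4,\alpha_5\}$ attached to the chain $\alpha_2-\alpha_3$, each reflection $s_i$ acts by $s_i(\alpha_i)=-\alpha_i$ and $s_i(\alpha_j)=\alpha_j+\alpha_i$ for $j$ adjacent to $i$, while the diagram automorphisms permute the $\alpha_i$. Setting $P=1-\alpha_0-\alpha_1$ and $Q=\alpha_4+\alpha_5$, the normalization $\alpha_0+\alpha_1+2\alpha_2+2\alpha_3+\alpha_4+\alpha_5=1$ gives $P-Q=2(\alpha_2+\alpha_3)$, so for integer parameters one has $P\equiv Q\pmod 2$ automatically, and one checks on the generators that this congruence, which is the precise content of the parity hypothesis of Theorem~\ref{main0}, is preserved throughout the orbit. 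Since in $F$ one has $P=Q=1+\alpha_5$ running over all integers, $F$ meets the admissible congruence class for every integer value of $P$.

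Third comes the reduction itself. Starting from an admissible integer tuple, I would use the fork reflections $s_0,s_1$ together with the automorphism exchanging $\alpha_0$ and $\alpha_1$ to lower $|\alpha_0|+|\alpha_1|$, then the chain reflections $s_2,s_3$ and the coroot-lattice translations contained in $\widetilde{W}\bigl(D_5^{(1)}\bigr)$ to push $\alpha_2$ and $\alpha_3$ to $0$, and finally $s_4,s_5$ with the exchange of $\alpha_4,\alpha_5$ to normalize the second fork to $\alpha_4=1$. At each step a nonnegative integer such as $\sum_i|\alpha_i|$ is arranged to decrease strictly, which will guarantee termination, while the relation $\alpha_0+\alpha_1+2\alpha_2+2\alpha_3+\alpha_4+\alpha_5=1$ forces the residual configuration to satisfy $\alpha_0=-\alpha_5$, that is, to belong to $F$.

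The main obstacle is this combinatorial core: proving that the reduction is exhaustive, so that the $\widetilde{W}\bigl(D_5^{(1)}\bigr)$-orbit of $F$ is \emph{exactly} the set of admissible integer tuples, and that the decreasing-quantity argument always terminates at $F$ rather than stalling at some other fixed configuration. I expect the bookkeeping of how the affine reflections interact with the translations, and the verification that the congruence $P\equiv Q\pmod 2$ is the only obstruction, to require the most care. Once this is established, invariance of rational integrability under B\"{a}cklund transformations together with Theorem~\ref{key} yields non-integrability of \eqref{s} for every admissible integer parameter tuple, completing the proof; the same scheme with $\alpha_5$ replaced by a non-integer $\pm\alpha+m$ proves Theorem~\ref{main}.
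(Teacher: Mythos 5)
Your proposal follows the paper's own route: non-integrability is transported from the family of Theorem~\ref{key} by the B\"{a}cklund transformations (which the paper establishes are rational canonical transformations via Theorem~\ref{bct}), and the remaining content is the combinatorial claim that the $\widetilde{W}\bigl(D_5^{(1)}\bigr)$-orbit of that family covers all admissible integer tuples, which the paper states as a lemma describing the orbit of $V=(-\alpha,0,0,0,1,\alpha)$ using the translation operators $T_1,\dots,T_6$ and proves ``inductively'' in about the same level of detail as your sketch. Your reduction-to-normal-form phrasing of the orbit computation and your observation that the normalization $\alpha_0+\alpha_1+2\alpha_2+2\alpha_3+\alpha_4+\alpha_5=1$ forces $1-\alpha_0-\alpha_1\equiv\alpha_4+\alpha_5\pmod 2$ automatically are fine and consistent with the paper.
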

 In fact, the result of Theorem~\ref{main} contains the result of Theorem~\ref{main0}. We present Theorem~\ref{main0} as an independent
 result because of the additional specification of the quantities $1-\alpha_0-\alpha_1$ and $\alpha_4+\alpha_5$ when $\alpha\in\mathbb{Z}$.

 This paper is organized as follows. In the next section, we briefly review the basics of the Morales--Rams--Sim\'{o} theory
 of the non-integrability of the Hamiltonian systems and the relation of the differential Galois theory to the linear
 systems of ordinary differential equations. In~Section~\ref{section3}, we prove non-integrability of the Sasano system~\eqref{s} when
 $\alpha_1=\alpha_2=\alpha_3=0$, $\alpha_4=1$ and $\alpha_0=-\alpha_5$. In~Section~\ref{section4}, using B\"{a}cklund transformations
 of the Sasano system~\eqref{s}, we extend the result of the Section~\ref{section3} to the entire orbits of the parameters $\alpha_j$
 and establish the main theorems of this paper.

 \section{Preliminaries}

 \subsection{Non-integrability of Hamiltonian systems and differential Galois theory}
 In this subsection, we briefly recall Morales-Ruiz--Ramis--Sim\'{o} theory of non-integrability of Hamiltonian systems following
 \cite{MR,MR2, MR1, MRS}.

Let $M$ be a symplectic analytical complex manifold of complex dimension $2 n$. Consider on~$M$ a Hamiltonian system
 \begin{equation}
 \dot{x}=X_H(x)\label{H}
 \end{equation}
 with a Hamiltonian $H \colon M \rightarrow \mathbb{C}$. Let $x(t)$ be a particular solution of~\eqref{H}, which is not an equilibrium point.
 Denote by $\Gamma$ the phase curve corresponding to this solution. The first variational equations $({\rm VE})_1$ of
 \eqref{H} along $\Gamma$ are written
 \begin{equation}
 \dot{\xi}=\frac{\partial X_H}{\partial x} (x(t)) \xi,\qquad \xi\in T_{\Gamma}M .\label{fve0}
 \end{equation}
 Using the Hamiltonian $H$, we can always reduce the degrees of freedom of the variational equations~\eqref{fve0} by one in the
 following sense. Consider the normal bundle of $\Gamma$ on the level variety $M_h=\{ x \mid H(x)=h\}$. The projection of the
 variational equations~\eqref{fve0} on this bundle induces the so called first normal variational equations $({\rm NVE})_1$ along $\Gamma$.
 The dimension of the $({\rm NVE})_1$ is~${2 n-2}$. Assume now that $x(t)$ is a rational non stationary particular solution of~\eqref{H} and let as above
 $\Gamma$ be the phase curve corresponding to it. Assume also that the field $K$ of the coefficients of the $({\rm NVE})_1$ is the field of rational functions in $t$,
 that is $K=\mathbb{C}(t)$. Assume also that $t=\infty$ is an irregular singularity for the $({\rm NVE})_1$.
 The entries of a fundamental matrix solution of $({\rm NVE})_1$ define a Picard--Vessiot extension $L_1$ of the field $K$.
 This in its turn defines a~differential Galois group
 $G_1={\rm Gal}(L_1/K)$.
 Then the main theorem of the Morales-Ruiz--Ramis theory states \cite{MR, MR2, MRS}.

\begin{Theorem}[Morales--Ramis]%\label{MR}
 Assume that the Hamiltonian system~\eqref{H} is completely integrable with rational first integrals in a
 neighbourhood of $\Gamma$, not necessarily independent on $\Gamma$ itself. Then the identity component $(G_1)^0$ of the
 differential Galois group $G_1={\rm Gal}(L_1/K)$ is Abelian.
 \end{Theorem}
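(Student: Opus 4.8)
The plan is to exploit that the normal variational equation $({\rm NVE})_1$ is a \emph{linear Hamiltonian} system, so that its differential Galois group is constrained to lie in a symplectic group, and then to translate the involutivity of the first integrals of \eqref{H} into a commutation property that pins down the identity component $(G_1)^0$.

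First I would check that the symplectic structure on the normal bundle of $\Gamma$ is preserved by the flow of $({\rm NVE})_1$, so that a fundamental matrix takes values in ${\rm Sp}(2n-2,\mathbb{C})$ and hence $G_1\subset {\rm Sp}(2n-2,\mathbb{C})$ and $\mathfrak{g}:=\mathrm{Lie}\big((G_1)^0\big)\subset\mathfrak{sp}(2n-2,\mathbb{C})$. Here the passage from $({\rm VE})_1$ to $({\rm NVE})_1$ already uses the Hamiltonian $f_1=H$ itself to remove one degree of freedom, leaving $n-1$ further first integrals $f_2,\dots,f_n$ to work with on the $(2n-2)$-dimensional solution space $V$ of $({\rm NVE})_1$.

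The core mechanism I would set up next is the standard correspondence between first integrals of the nonlinear system and invariants of the linearized one. Linearizing each rational first integral $f_j$ along the non-stationary solution $x(t)$ (which is rational, so the coefficients stay in $K=\mathbb{C}(t)$) produces a rational first integral of the variational equation; projecting onto the normal bundle yields $G_1$-invariant functions $I_2,\dots,I_n$ on $V$. Encoding each $I_j$ by the associated element $M_j\in\mathfrak{sp}(V)$ (its symplectic gradient / Hamiltonian operator), two facts follow: the Poisson relations $\{f_i,f_j\}=0$ are inherited as the commutation relations $[M_i,M_j]=0$, while $G_1$-invariance of the $I_j$ is equivalent to $[\mathfrak{g},M_j]=0$. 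Thus $\mathfrak{g}$ centralizes the Abelian subalgebra $\mathfrak{a}=\langle M_2,\dots,M_n\rangle\subset\mathfrak{sp}(V)$, and functional independence of $f_2,\dots,f_n$ on $\Gamma$ gives $\dim\mathfrak{a}=n-1$.

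The remaining, and genuinely hard, step is the Lie-theoretic closure: to conclude that $\mathfrak{g}$ --- being contained in the centralizer of $\mathfrak{a}$ --- is itself Abelian, so that $(G_1)^0$ is Abelian. A naive count is not enough: although $\dim\mathfrak{a}=n-1$ equals the rank of $\mathfrak{sp}(2(n-1),\mathbb{C})$, Abelian subalgebras of $\mathfrak{sp}$ can far exceed the rank and need not be self-centralizing, so one cannot simply invoke ``Cartan subalgebra equals its own centralizer''. The genuine argument (as carried out by Ziglin and by Morales-Ruiz--Ramis) must analyze the Jordan structure of the commuting operators $M_j$ and the symplectic geometry of their common invariant subspaces, using both the involution and the independence to rule out any non-commuting direction surviving in the centralizer of $\mathfrak{a}$. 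I would expect the two supporting technical points --- verifying that the reduction by $H$ preserves independence and involutivity of the projected integrals on the normal bundle, and making the correspondence $I_j\mapsto M_j$ precise when the linearized integrals degenerate along $\Gamma$ --- to be routine but necessary, with the centralizer-is-Abelian lemma as the true crux.
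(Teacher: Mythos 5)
The paper does not prove this theorem at all: it is quoted verbatim from Morales-Ruiz and Ramis with citations, so your attempt can only be measured against the proof in those references. Your setup is the right one (symplectic Galois group on the normal bundle, first integrals of the nonlinear flow inducing $G_1$-invariant rational functions on the solution space $V$, involution and independence passing to the induced invariants), but the proposal stops exactly at the step that constitutes the theorem, and the route you gesture at for closing it is not the one that works. Two concrete problems. First, your encoding of each invariant $I_j$ by an element $M_j\in\mathfrak{sp}(V)$ (``its symplectic gradient / Hamiltonian operator'') presupposes that $I_j$ is a quadratic form; the invariants obtained by taking the leading jets of the $f_j$ along $\Gamma$ (this is where ``not necessarily independent on $\Gamma$ itself'' bites, via Ziglin's lemma) are homogeneous rational functions of arbitrary degree, so there is no Abelian subalgebra $\mathfrak{a}\subset\mathfrak{sp}(V)$ to centralize, and the whole ``centralizer of $\mathfrak{a}$'' framing collapses. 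Second, even granting that framing, you correctly observe that ``centralizer of an $(n-1)$-dimensional Abelian subalgebra'' does not force commutativity, and then defer to an unspecified analysis of Jordan structure; that is not the argument, and no proof is actually given.

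The mechanism in the Morales--Ramis key lemma is different and more geometric: for $a\in\mathfrak{g}=\mathrm{Lie}\bigl((G_1)^0\bigr)$ the linear vector field $X_a$ is Hamiltonian with quadratic Hamiltonian $h_a$, and $G_1$-invariance of the $m=n-1$ independent, involutive invariants $I_2,\dots,I_n$ forces $X_a(v)$ to be tangent, at a generic point $v\in V$, to the common level set of the $I_j$, which is Lagrangian (generically) precisely because there are $m$ independent functions in involution on a $2m$-dimensional symplectic space. Hence $\{h_a,h_b\}(v)=\omega\bigl(X_a(v),X_b(v)\bigr)=0$ on a Zariski-dense set, so the polynomial $\{h_a,h_b\}$ vanishes identically, and $[a,b]=X_{\{h_a,h_b\}}=0$, i.e., $\mathfrak{g}$ is Abelian. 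It is the isotropy of the $(G_1)^0$-orbits inside the Lagrangian foliation, not any centralizer or Jordan-form computation, that yields commutativity; without this lemma (or an equivalent), your proposal is a correct reduction but not a proof.
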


 The problem considered in this paper is one among many examples illustrating that the opposite is not true in general.
 That is if the connected component $(G_1)^0$ of the unit element of the differential Galois group ${\rm Gal}(L_1/K)$ is
 Abelian, one cannot deduce that the corresponding Hamiltonian system~\eqref{H} is completely integrable. Beyond the first
 variational equations Morales-Ruiz, Ramis and Sim\'{o} suggest in \cite{MRS} to use higher-order variational equations
 to solve such integrability problems. Let as above $x(t)$ be a particular rational non stationary solution of the Hamiltonian
 system~\eqref{H}. We write the general solution as $x(t, z)$, where $z$ parametrizes it near $x(t)$ as $x(t, z_0)=x(t)$.
 Then we can write the system~\eqref{H} as
 \begin{equation}
 \dot{x}(t, z)=X_H(x(t, z)) .\label{H1}
 \end{equation}
 Denote by \smash{$x^{(k)}(t, z)$}, $k \geq 1$ the derivatives of $x(t, z)$ with respect to $z$ and by
 \smash{$X^{(k)}_H(x)$}, $k \geq 1$ the derivatives of $X_H(x)$ with respect to~$x$. By successive derivations of~\eqref{H1} with respect
 to~$z$ and evaluations at $z_0$, we obtain the so called $k$-th variational equations $({\rm VE})_k$ along the solution~$x(t)$%
 \begin{equation}
 \dot{x}^{(k)}(t)=X^{(1)}_H(x(t)) x^{(k)}(t) + P\bigl(x^{(1)}(t), x^{(2)}(t), \dots, x^{(k-1)}(t)\bigr) .\label{vek}
 \end{equation}
 Here $P$ denotes polynomial terms in the monomials of order $|k|$ of the components of its arguments. The coefficients
 of $P$ depend on $t$ through \smash{$X^{(j)}_H(x(t))$}, $j < k$. For every $k > 1$, the linear non-homogeneous system~\eqref{vek}
 can be arranged as a linear homogeneous system of higher dimension by making the monomials of order $|k|$ in $P$ new
 variables and adding to~\eqref{vek} their differential equations. If we restrict the system~\eqref{vek} to the variables
 that define the $({\rm NVE})_1$, the corresponding linear homogeneous system is the so called $k$-th linearized
 normal variational equations $({\rm LNVE})_k$. The solutions of the chain of $({\rm LNVE})_k$ define a chain
 of Picard--Vessiot extensions of the main field $K=\mathbb{C}(t)$ of the coefficients of $({\rm NVE})_1$, i.e., we have
 $K \subset L_1 \subset L_2 \subset \cdots \subset L_k$, where $L_1$ is above, $L_2$ is the Picard--Vessiot extension
 of $K$ associated with $({\rm LNVE})_2$, etc. Then we can define the differential Galois groups
 $G_1={\rm Gal}(L_1/K), G_2={\rm Gal}(L_2/K), \dots, G_k={\rm Gal}(L_k/K)$. Assume as above that $t=\infty$ is an irregular singularity for the $({\rm NVE})_1$
 and therefore for $({\rm NVE})_k$ for all $k \geq 2$. Then the main theorem of the
 Morales-Ruiz--Ramis--Sim\'{o} theory states the following.

 \begin{Theorem}[Morales--Ramis--Sim\'{o}]\label{MRS}
 If the Hamiltonian system~\eqref{H} is completely integrable with rational first integrals, then for every $k\in\mathbb{N}$
 the connected component of the unit element~$(G_k)^0$ of the differential Galois group $G_k={\rm Gal}(L_k/K)$
 is Abelian.
 \end{Theorem}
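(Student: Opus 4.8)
The plan is to show that complete integrability forces, at every level $k$, a supply of invariants of the variational system large and structured enough to confine $(G_k)^0$ inside an Abelian group. Assume \eqref{H} admits $n$ rational first integrals $f_1=H, f_2, \dots, f_n$, functionally independent and in pairwise involution in a neighbourhood of $\Gamma$. I would proceed in three stages: (i) manufacture from the $f_i$ rational first integrals of each $(\mathrm{VE})_k$ in \eqref{vek}; (ii) translate these into $(G_k)^0$-invariant elements of the Picard--Vessiot field $L_k$, using that the variational systems of a Hamiltonian vector field are themselves linear Hamiltonian, so that $G_k$ embeds in a symplectic group; and (iii) use the involution relations, lifted to the variational level, to force $(G_k)^0$ to be Abelian.

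For stage~(i) I would Taylor-expand each $f_i$ in the transverse jet variables along $\Gamma$. Writing $f_i=\sum_{m\ge m_i} f_i^{(m)}$ with $f_i^{(m)}$ homogeneous of degree $m$ in the components of $x^{(1)},\dots,x^{(k)}$, the lowest nonvanishing part is a polynomial first integral of $(\mathrm{VE})_1$ in \eqref{fve0}, and collecting the homogeneous pieces up to a fixed order yields polynomial first integrals of $(\mathrm{VE})_k$ with coefficients in $K=\mathbb{C}(t)$. Passing to higher $k$ is essential precisely because functional independence of the $f_i$ near $\Gamma$ may degenerate on $\Gamma$ itself: $(\mathrm{VE})_1$ may then see too few independent integrals, while the higher jets recover the missing independence. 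This is exactly the phenomenon that will be exploited for the system~\eqref{es}, where an obstruction surfaces at level $k=2$ although $(G_1)^0$ is already Abelian.

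For stage~(ii), evaluating such a polynomial integral $\Phi$ with coefficients in $K$ on the columns of a fundamental matrix of $(\mathrm{VE})_k$ produces a constant; since each $\sigma\in G_k$ acts as a differential $K$-automorphism permuting the solutions, $\Phi$ is $G_k$-invariant, so it descends to a $(G_k)^0$-invariant rational function on $L_k$. Because $(\mathrm{VE})_k$ preserves the symplectic form inherited from $\omega$, the group $G_k$ sits inside the associated symplectic group and $\mathrm{Lie}(G_k)$ consists of infinitesimally symplectic endomorphisms. The reduction by $H$ replaces $(\mathrm{VE})_k$ by $(\mathrm{NVE})_k$ and then by $(\mathrm{LNVE})_k$, on which the same invariance statements continue to hold.

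Stage~(iii) is where the main difficulty lies. One must verify that the involution $\{f_i,f_j\}=0$ descends to Poisson commutation of the induced invariants at the variational level, and then invoke the structural fact that a connected algebraic subgroup of a symplectic group admitting a maximal family of functionally independent invariants in involution must be Abelian --- the algebraic incarnation of the Liouville--Arnold and Ziglin mechanism. The genuinely technical obstacle is the bookkeeping: tracking how the orders $m_i$ of the various $f_i$ interact under Taylor truncation, and checking that both independence and involution survive the passage to $(\mathrm{LNVE})_k$. It is precisely this control of the higher jets that makes Theorem~\ref{MRS} strictly stronger than the first-order Morales--Ramis criterion and lets it detect non-integrability even when $(G_1)^0$ yields no obstruction.
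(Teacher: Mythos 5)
First, a point of order: the paper does not prove this statement. Theorem~\ref{MRS} is quoted as a known result of Morales-Ruiz, Ramis and Sim\'o \cite{MRS}, whose proof occupies a substantial part of that reference, so there is no in-paper argument to compare yours against. Judged on its own terms, your proposal is a fair description of the overall strategy of \cite{MRS} --- expand the first integrals in jets along $\Gamma$, observe that the resulting truncations are first integrals of the variational systems and hence Galois invariants, and conclude abelianity from an algebraic-group lemma --- but it is an outline rather than a proof, and the part you defer to ``bookkeeping'' is where essentially all of the content of the theorem lives.

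Concretely, three steps are asserted rather than established. (a) For $k=1$ the group $G_1$ does embed in a symplectic group, and the Morales--Ramis key lemma (a connected algebraic subgroup of ${\rm Sp}(2m,\mathbb{C})$ admitting $m$ functionally independent rational invariants in involution is Abelian) applies; but for $k\geq 2$ the system \eqref{vek}, rewritten as a first-order homogeneous linear system in the monomial variables, is not a linear Hamiltonian system on the jet space in any obvious sense, so the claim that ``$G_k$ sits inside the associated symplectic group'' is not available as stated. The argument of \cite{MRS} replaces it by an analysis of the tower $G_k\to G_{k-1}\to\cdots\to G_1$, of the unipotent kernels of these projections, and by a jet-space version of the key lemma. (b) Functional independence of the induced invariants can fail even when the $f_i$ are independent near $\Gamma$: lowest-order homogeneous parts may coincide or degenerate on $\Gamma$, and recovering independence at some finite jet order is a genuine theorem, not an observation. (c) The descent of the involution relations $\{f_i,f_j\}=0$ to the truncated jets presupposes a Poisson structure on the space on which $G_k$ acts, which again does not exist naively for $k\geq 2$. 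These are the right difficulties to have identified, but until they are resolved your text is a reading guide to \cite{MRS} rather than a proof of Theorem~\ref{MRS}.
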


 From Theorem \ref{MRS}, it follows that if we find a group $(G_k)^0$, which is not Abelian, then the Hamiltonian system~\eqref{H} will be non-integrable
 by means of rational first integrals.
 Note that this non-commutative group $(G_k)^0$ will be a solvable group. In this way,
 non-integrability in the sense of the Hamiltonian dynamics will correspond to integrability in the Picard--Vessiot sense.

 \subsection[Differential Galois group of a linear system of ordinary differential equations]{Differential Galois group of a linear system \\ of ordinary differential equations}

 In this subsection, we briefly recall some facts, notations and definitions from the differential Galois theory, needed
 to compute the differential Galois group of a linear system with one irregular and one regular singularity. We follow
 the works of van der Put, Mitschi, Singer and Ramis \cite{Mi,R3, Si, vPS}. Throughout this paper, all angular directions
 and sectors are defined on the Riemann surface of the natural logarithm.

 Consider a linear system of ordinary differential equations of order $n$
 \begin{equation}
 \dot{\upsilon}=A(t) \upsilon ,\label{ls1}
 \end{equation}
 where $A(t)\in {\rm GL}_n(\mathbb{C}(t))$.

 \begin{Definition}\label{Gal}
The differential Galois group $G$ of the system~\eqref{ls1} over $\mathbb{C}(t)$ is the group of all differential $\mathbb{C}(t)$-automorphisms of
 a Picard--Vessiot extension of $\mathbb{C}(t)$ relative to~\eqref{ls1}. This group is isomorphic to an algebraic subgroup of ${\rm GL}_n(\mathbb{C})$
 with respect to a fundamental matrix solution of~\eqref{ls1}.
 \end{Definition}

 Assume that the system~\eqref{ls1} has two singular points over $\mathbb{C}\mathbb{P}^1$ taken at $t=0$ and $t=\infty$. Assume that
 the origin is a regular singularity while $t=\infty$ is a
 non-resonant irregular singularity of
 Poincar\'{e} rank 1. Denote by $S$ the set of singular points of the system~\eqref{ls1}, that is, $S=\{0, \infty\}$.
 If we replace in Definition \ref{Gal} the field $\mathbb{C}(t)$ with the field of germs of meromorphic functions at $a\in S$, we define the so called local differential Galois
 group $G_a$ of~\eqref{ls1}. In what follows, we present effective theorems for computing the local differential Galois groups $G_a$, $a\in S$ of the system
 \eqref{ls1}.

 Let $\Phi(t)$ be a local fundamental matrix solution near the origin of the system~\eqref{ls1}.
 The~following result of Schlesinger \cite{Sch} describes the local differential Galois group at the origin of the system~\eqref{ls1}.

 \begin{Theorem}[Schlesinger]\label{reg}
 Under the above assumptions the monodromy group around the origin with respect to the fundamental matrix solution
 $\Phi(t)$ is a Zariski dense subgroup of the differential Galois group~$G$ of the
 system~\eqref{ls1}.
 \end{Theorem}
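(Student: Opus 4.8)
The plan is to prove Schlesinger's theorem through the fundamental theorem of the differential Galois correspondence, reducing the density claim to a statement about single-valued elements of the Picard--Vessiot field. Throughout, let $K_0$ denote the field of meromorphic germs at the origin, and let $G_0$ be the local differential Galois group at $0$ (the group denoted $G$ in the statement, in its local version). First I would record the structure of solutions at a regular singular point. By the classical theory of Fuchsian singularities, the system admits a local fundamental matrix of the form $\Phi(t)=S(t)\,t^{C}$, where $S\in\mathrm{GL}_n(K_0)$ is a single-valued invertible meromorphic germ at $0$ and $C$ is a constant matrix, with $t^{C}:=\exp(C\log t)$. Analytic continuation of $\log t$ once around the origin sends $\Phi(t)\mapsto\Phi(t)\,e^{2\pi i C}$, so the monodromy transformation with respect to $\Phi$ is right-multiplication by the fixed matrix $\gamma=e^{2\pi i C}$. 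Since monodromy commutes with $d/dt$ and fixes $K_0$ pointwise, each power $\gamma^{k}$ defines a $K_0$-differential automorphism of the Picard--Vessiot field $L_0=K_0(\text{entries of }\Phi)$. Hence the monodromy group, and therefore its Zariski closure $H:=\overline{\langle\gamma\rangle}$, is a subgroup of $G_0$.

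Next I would invoke the Galois correspondence to upgrade the inclusion $H\subseteq G_0$ to an equality. By the fundamental theorem, it suffices to show that the subfield of $L_0$ fixed by $H$ equals $K_0$; and since $\langle\gamma\rangle$ is Zariski dense in $H$, its fixed field coincides with the fixed field of the monodromy itself, namely the single-valued elements of $L_0$. So the entire proof reduces to the assertion that \emph{every monodromy-invariant element of $L_0$ is meromorphic at the origin}, i.e.\ lies in $K_0$.

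The decisive input, and the step I expect to be the main obstacle, is the \emph{moderate-growth} property characterizing regular singularities. Because $\Phi=S\,t^{C}$ with $S$ meromorphic, the entries of both $\Phi$ and $\Phi^{-1}$ grow at most like a power of $|t|^{-1}$ in any sector as $t\to0$; consequently every element of $L_0$, being a rational expression over $K_0$ in these entries, also has moderate growth. Now take a monodromy-invariant $f\in L_0$: it is a single-valued holomorphic function on a punctured neighbourhood of $0$ with at most polynomial growth, so by the removable-singularity argument a single-valued germ of moderate growth has at most a pole, whence $f$ is meromorphic at $0$ and $f\in K_0$. This yields $L_0^{H}=K_0=L_0^{G_0}$, and by the Galois correspondence $H=G_0$; that is, the monodromy group is Zariski dense in the local differential Galois group at the origin.

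Finally I would remark that moderate growth is exactly what fails at an irregular singular point: there the single-valued elements produced by the Stokes operators need not be meromorphic, the fixed field is strictly larger than $K_0$, and the density statement breaks down. The local Galois group is then enlarged beyond the formal monodromy by the Stokes matrices—precisely the mechanism that is exploited at $t=\infty$ in the later sections of the paper.
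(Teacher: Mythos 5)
The paper offers no proof of this statement: it is quoted as a classical theorem of Schlesinger with a reference to \cite{Sch}, so there is no internal argument to compare yours against. What you propose is the standard modern proof: write $\Phi=S\,t^{C}$ with $S$ meromorphic at $0$, observe that the monodromy $\gamma={\rm e}^{2\pi{\rm i}C}$ acts as a differential automorphism over the field $K_0$ of meromorphic germs, and use the Galois correspondence to reduce Zariski density to the claim that the monodromy-invariant elements of the Picard--Vessiot field $L_0$ are exactly $K_0$. You are also right to read the group in the statement as the \emph{local} group $G_0$ at the origin; taken literally over $\mathbb{C}(t)$ the assertion would be false (for $\dot{\upsilon}=\upsilon$ the monodromy at $0$ is trivial while the Galois group is $\mathbb{C}^{*}$), and the local reading is exactly how the author applies the theorem in Section~3.

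There is, however, one step that fails as literally written: the claim that ``every element of $L_0$, being a rational expression over $K_0$ in these entries, also has moderate growth.'' Moderate growth is not preserved under taking reciprocals: the inverse of a moderate-growth function can blow up along a sequence of zeros accumulating at the origin. Concretely, $t^{\rm i}-1$ lies in the Picard--Vessiot field of the regular singular equation $\dot{\upsilon}=({\rm i}/t)\upsilon$ and is bounded on the sector $|\arg t|<\pi$, yet it vanishes at $t={\rm e}^{2\pi n}$ with $n\to-\infty$, so $\bigl(t^{\rm i}-1\bigr)^{-1}$ is neither of moderate growth nor even holomorphic on a punctured neighbourhood of $0$. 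Consequently you cannot conclude that a monodromy-invariant $f\in L_0$ is a single-valued function of moderate growth on $D^{*}$ merely from $f$ being a rational combination of moderate-growth entries. The conclusion you need is still true, but it must be secured differently: either (a) algebraically, using that $S\in{\rm GL}_n(K_0)$ forces $L_0=K_0\bigl(t^{\lambda_1},\dots,t^{\lambda_r},\log t\bigr)$ and computing directly that the fixed field of $t^{\lambda}\mapsto{\rm e}^{2\pi{\rm i}\lambda}t^{\lambda}$, $\log t\mapsto\log t+2\pi{\rm i}$ is $K_0$ (no nonconstant rational function is invariant under an infinite-order scaling $u\mapsto cu$, and invariance under $\log t\mapsto\log t+2\pi{\rm i}$ kills the $\log$); or (b) by invoking a genuine regularity lemma for single-valued elements of the fraction field of the ring of moderate-growth functions, rather than asserting moderate growth of the quotient itself. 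With either repair the argument is complete; the removable-singularity step and the final application of the Galois correspondence are fine.
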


Since we prefer to work with an irregular singularity at the origin to at $t=\infty$, we make the change $t=1/\tau$ in the
 system~\eqref{ls1}.
 This transformation takes the system~\eqref{ls1} into the system
 \begin{equation}
 \upsilon'=A(\tau) \upsilon,\qquad '=\frac{{\rm d}}{{\rm d} \tau} ,\label{ls}
 \end{equation}
 for which the origin is a non-resonant irregular singularity of Poincar\'{e} rank~1.
 Denote by $\mathbb{C}(\tau)$, $\mathbb{C}((\tau))$ and~$\mathbb{C}\{\tau\}$ the differential fields of rational functions, formal power series and convergent
 power series, respectively. Note that
 \[
 \mathbb{C}(\tau) \subset \mathbb{C}\{\tau\} \subset \mathbb{C}((\tau)) .
 \]
 In what follows, we determine the local differential Galois groups of the system~\eqref{ls} around the~origin.

 From the Hukuhara--Turrittin theorem \cite{W}, it follows that the
 system~\eqref{ls} admits a formal fundamental matrix solution at the origin of the form
 \begin{equation}
 \hat{\Psi}(\tau)=\hat{H}(\tau) \tau^{\Lambda} \exp\biggl(\frac{Q}{\tau}\biggr) ,\label{fls}
 \end{equation}
 where
 \[
 \Lambda={\rm diag} (\lambda_1, \lambda_2, \dots, \lambda_n),\qquad
 Q={\rm diag} (q_1, q_2, \dots, q_n),\qquad \hat{H}(t)\in{\rm GL}_n(\mathbb{C}((t)))
 \]
 with $\lambda_j$, $q_j \in\mathbb{C}$, $j=1, \dots, n$. Consider the system~\eqref{ls} and its formal fundamental matrix solution
 $\hat{\Psi}(\tau)$ over the field $\mathbb{C}((\tau))$.

 \begin{Definition}%\label{fm}
 {With respect to the formal fundamental matrix solution $\hat{\Psi}(\tau)$ from~\eqref{fls}, we define the formal
 monodromy matrix $\hat{M}_0\in{\rm GL}_n(\mathbb{C})$ around the origin as
 \[
 \hat{\Psi}\bigl(\tau. {\rm e}^{2 \pi  {\rm i}}\bigr)=\hat{\Psi}(\tau) \hat{M}_0 .
 \]
 In particular,
 \[
 \hat{M}_0={\rm e}^{2 \pi {\rm i} \Lambda} .
 \]}
 \end{Definition}

 \begin{Definition}%\label{et}
 With respect to the formal fundamental matrix solution $\hat{\Psi}(\tau)$ from~\eqref{fls} we define the exponential
 torus $\mathcal{T}$ as the differential Galois group ${\rm Gal}(E/F)$, where
 \[
 F=\mathbb{C}((\tau))\bigl(\tau^{\lambda_1}, \tau^{\lambda_2}, \dots, \tau^{\lambda_n}\bigr) \qquad \text{and} \qquad
 E=F\bigl({\rm e}^{q_1/\tau}, {\rm e}^{q_2/\tau}, \dots, {\rm e}^{q_n/\tau}\bigr).
 \]
 \end{Definition}
 We may consider $\mathcal{T}$ as a subgroup of $(\mathbb{C}^*)^n$. The Zariski closure of the group
 generated by the formal monodromy matrix and exponential torus yields the so called formal differential Galois group
 at the origin of the system~\eqref{ls} (see \cite{Si, vPS}).

 Consider now the system~\eqref{ls} over the field $\mathbb{C}\{\tau\}$. In general, the entries $\hat{h}_{ij}(\tau)$, $1 \leq i, j \leq n$
 of the matrix \smash{$\hat{H}(\tau)$} in~\eqref{fls} are either divergent or convergent power series in $\tau$.
 The existence of divergent power series entries in \smash{$\hat{H}(\tau)$} ensures an observation of a non-trivial Stokes
 phenomenon at the origin.

 \begin{Definition}\label{st}
 Under the above notations for every divergent power series $\hat{h}_{i j}(\tau)$, we define a~set~$\Theta_j$
 	of admissible singular directions $\theta_{ji}$, $0 \leq \theta_{ji} < 2 \pi$, where $\theta_{ji}$ is the
 	bisector of the maximal angular sector \smash{$\bigl\{{\rm Re} \bigl(\frac{q_i-q_j}{\tau}\bigr) < 0 \bigr\}$}. In
 	particular,
 	 \[
 	 \Theta_j=\{ \theta_{ji},\, 0 \leq \theta_{ji} < 2 \pi,\, \theta_{ji}= \arg (q_j-q_i),\,
 	 1 \leq i, j  \leq n,\, i \neq j \} .
 	 \]
 \end{Definition}
 In order to compute the analytic invariants at the origin of the system~\eqref{ls}, we have to lift the formal
 fundamental matrix solution $\hat{\Psi}(\tau)$ from~\eqref{fls} to such an actual. To solve this problem,
 in this paper we utilize the summability theory. The application of the summability theory to ordinary
 differential equations generalizes the theorem of Hukuhara--Turrittin to the following theorem of Ramis~\cite{R2}.

 \begin{Theorem}\label{R}
 In the formal fundamental matrix solution at the origin $\hat{\Psi}(\tau)$ from~\eqref{fls} the entries of the
 matrix \smash{$\hat{H}(\tau)$} are $1$-summable along any non-singular direction $\theta$. If we denote by $H_{\theta}(\tau)$
 the $1$-sum of the matrix \smash{$\hat{H}(\tau)$} along $\theta$, then
 $\Psi_{\theta}(\tau)=H_{\theta}(\tau) \tau^{\Lambda} \exp(Q/\tau)$ is an actual fundamental matrix solution at the origin
 of the system~\eqref{ls}.
 \end{Theorem}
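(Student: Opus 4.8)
The plan is to prove both assertions by the Borel--Laplace method of $1$-summation, which is precisely the summation process attached to a single level equal to the Poincar\'e rank $1$. After the preliminary constant gauge transformation that diagonalises the leading coefficient (possible because the $q_j$ are distinct), I may assume $\hat H=I+O(\tau)$ and that the leading term of $B:=\tau^2A$ is $B_0=-Q$ (the sign reflecting the convention $\exp(Q/\tau)$ in~\eqref{fls}). Substituting the ansatz~\eqref{fls} into~\eqref{ls} and cancelling the invertible factor $\tau^{\Lambda}\exp(Q/\tau)$ on the right yields the matrix equation
\[
\tau^2\hat H' + \tau\hat H\Lambda - \hat H Q = B(\tau)\hat H, \qquad B(\tau)=\tau^2A(\tau)=\sum_{k\ge 0}B_k\tau^k ,
\]
where $B$ is holomorphic at the origin because the singularity has Poincar\'e rank $1$. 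Writing $\hat H=\sum_{m\ge 0}H_m\tau^m$ with $H_0=I$ and comparing coefficients gives the recursion
\[
[Q,H_m]=QH_m-H_mQ=\sum_{j=0}^{m-1}B_{m-j}H_j-(m-1)H_{m-1}-H_{m-1}\Lambda .
\]
Since $Q={\rm diag}(q_1,\dots,q_n)$ has pairwise distinct entries (non-resonance), the operator $X\mapsto[Q,X]$ is invertible on off-diagonal matrices with smallest eigenvalue modulus $\delta=\min_{i\ne j}|q_i-q_j|>0$, so each off-diagonal $H_m$ is determined (the diagonal entries being accounted for by the exponents $\Lambda$ of the Hukuhara--Turrittin normal form). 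The decisive feature is the term $-(m-1)H_{m-1}$: dividing by $q_i-q_j$ costs a factor $\delta^{-1}$ and multiplies $\|H_{m-1}\|$ by $m$ at each step. Bounding the convolution with the convergent series $B$ by a geometric majorant and arguing by induction gives the Gevrey-$1$ (Maillet--Ramis) estimate $\|H_m\|\le CA^m\,m!$ for suitable constants $C,A>0$.

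Next I would pass to the Borel plane. The Gevrey-$1$ bound guarantees that the formal Borel transform of order $1$, $\varphi(\zeta)=\sum_{m\ge 1}\frac{H_m}{(m-1)!}\zeta^{m-1}$, has positive radius of convergence $\ge A^{-1}$, hence defines a germ of analytic matrix function at $\zeta=0$. Applying $\mathcal B_1$ to the displayed equation turns the irregular ODE into a linear convolution equation in $\zeta$ whose only possible singular points are the finitely many differences $q_i-q_j$, $i\ne j$. A direction $\theta$ is non-singular in the sense of Definition~\ref{st} exactly when the ray $\{\zeta=r{\rm e}^{{\rm i}\theta}:r>0\}$ avoids this set. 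Along such a ray one shows that $\varphi$ continues analytically and grows at most like ${\rm e}^{c|\zeta|}$, with exponential growth of order $1$; this is the heart of the matter and the step I expect to be the main obstacle, since it requires controlling both the analytic continuation of the solution of the Borel-transformed equation and its growth beyond a neighbourhood of the origin. For Poincar\'e rank $1$ this is tractable because the transformed equation is again of a simple type---regular-singular at the points $q_i-q_j$ and exponential at infinity---so a single level suffices and no genuine multisummation is required.

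Finally, along a non-singular direction $\theta$ the Laplace transform of order $1$,
\[
H_{\theta}(\tau)=I+\int_{0}^{{\rm e}^{{\rm i}\theta}\infty}\varphi(\zeta)\,{\rm e}^{-\zeta/\tau}\,{\rm d}\zeta ,
\]
converges for $\tau$ in a sector of opening larger than $\pi$ bisected by $\theta$ and produces the $1$-sum $H_{\theta}$ of $\hat H$; by Watson's lemma $H_{\theta}$ admits $\hat H$ as Gevrey-$1$ asymptotic expansion and is the unique function with this property, which is exactly the statement that the entries of $\hat H$ are $1$-summable along $\theta$. Because Borel--Laplace summation along a non-singular direction is a homomorphism of differential algebras---it commutes with multiplication by $\tau$, with $\tau^2\,{\rm d}/{\rm d}\tau$, and with multiplication by the convergent matrix $B$---the identity satisfied formally by $\hat H$ is inherited by its sum, giving $\tau^2H_{\theta}'+\tau H_{\theta}\Lambda-H_{\theta}Q=BH_{\theta}$. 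Reversing the factorisation of the first step, $\Psi_{\theta}(\tau)=H_{\theta}(\tau)\tau^{\Lambda}\exp(Q/\tau)$ then solves~\eqref{ls}, and since $H_{\theta}(0)=I$ while $\tau^{\Lambda}\exp(Q/\tau)$ is invertible, $\Psi_{\theta}$ is an actual fundamental matrix solution at the origin, as claimed.
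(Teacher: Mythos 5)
First, a point of comparison: the paper does not prove Theorem~\ref{R} at all --- it is stated as a known theorem of Ramis, with all details explicitly delegated to the literature \cite{LR, R3, R2} --- so your attempt can only be measured against the standard proof, which is indeed the Borel--Laplace route you follow. The formal half of your argument is correct and is the classical one: the reduction to $\hat H=I+O(\tau)$, the equation $\tau^2\hat H'+\tau\hat H\Lambda-\hat HQ=B\hat H$ with $B=\tau^2A$ holomorphic, the recursion $[Q,H_m]=\sum_{j<m}B_{m-j}H_j-(m-1)H_{m-1}-H_{m-1}\Lambda$, and the Maillet-type bound $\|H_m\|\le CA^m m!$ all check out, as does the concluding step (Borel--Laplace summation along a non-singular direction is a morphism of differential algebras, so $H_\theta$ inherits the equation and $\Psi_\theta=H_\theta\tau^\Lambda\exp(Q/\tau)$ is a genuine fundamental solution).

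Two genuine gaps remain. (i) You assume at the outset that the $q_j$ are pairwise distinct, but the theorem as stated --- and as used in this paper, where the system \eqref{eq} has $Q={\rm diag}(1,0,2,2)$ --- allows repeated $q_j$. Then $X\mapsto QX-XQ$ is invertible only on the entries transversal to the blocks of equal $q$'s; inside each such block the recursion degenerates, and it must be closed by a separate block-diagonal normalization (a convergent, regular-singular reduction governed by the exponents $\Lambda$) before your induction even defines the $H_m$, let alone estimates them. (ii) The step you yourself flag as the expected obstacle --- analytic continuation of the Borel transform along every non-singular ray together with exponential growth of order at most $1$ at infinity --- is not a technical detail but the entire content of the theorem: a Gevrey-$1$ estimate alone never yields $1$-summability (there exist Gevrey-$1$ series whose Borel transform has a natural boundary and which are $1$-summable in no direction), so asserting that the Borel-transformed convolution equation is ``regular-singular at the $q_i-q_j$ and exponential at infinity'' assumes precisely what must be proved. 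A complete argument here either analyses the convolution equation directly (\'Ecalle's method, or Balser's treatment of single-level systems) or proceeds cohomologically via the Ramis--Sibuya theorem applied to a sectorial Hukuhara--Sibuya normalization. A smaller repair: for a fixed $\theta$ the Laplace integral converges only on a disc $\bigl\{{\rm Re}\bigl({\rm e}^{{\rm i}\theta}/\tau\bigr)>c\bigr\}$ of opening at most $\pi$; the sector of opening $>\pi$ on which Watson--Nevanlinna uniqueness applies arises by sweeping $\theta$ through an interval of non-singular directions, exactly as the paper does concretely in Lemma~\ref{sum} and Remark~\ref{sum1}.
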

 Since this paper is not devoted to the summability theory, rather we only use it, we will not consider it in details.
 For the needed facts, notation and definitions, we refer to the works of Loday-Richaud \cite{LR}, as well as
 the works of Ramis \cite{R3, R2}.

 Let $\varepsilon > 0$ be a small number. Let $\theta-\varepsilon$ and $\theta+\varepsilon$ be two non-singular neighboring
 directions to the singular direction $\theta\in \Theta_j$. Let $\Psi_{\theta-\varepsilon}(\tau)$ and $\Psi_{\theta+\varepsilon}(\tau)$
 be the actual fundamental matrix solutions at the origin of the system~\eqref{ls} corresponding to
 the directions $\theta-\varepsilon$ and $\theta+\varepsilon$ in the sense of Theorem~\ref{R}.
 \begin{Definition}\label{Sto}
With respect to the actual fundamental matrix solutions $\Psi_{\theta-\varepsilon}(\tau)$ and $\Psi_{\theta+\varepsilon}(\tau)$
 	the Stokes matrix $St_{\theta}\in{\rm GL}_n(\mathbb{C})$ related to the singular direction $\theta$ is defined as
 	 \[
 	 St_{\theta}=\left(\Psi_{\theta+\varepsilon}(\tau)\right)^{-1} \Psi_{\theta-\varepsilon}(\tau) .
 	 \]
 \end{Definition}

 The next theorem of Ramis \cite{R3} determines the differential Galois group at the origin of the system~\eqref{ls} over
 the field $\mathbb{C}(\tau\}$.
 \begin{Theorem}[Ramis]\label{Ra}
 The differential Galois group at the origin of the system~\eqref{ls} over $\mathbb{C}\{\tau\}$ is the Zariski closure
 of the group generated by the formal differential Galois group at the origin and the collection of the Stokes
 matrices $\{St_{\theta}\}$ for all singular directions $\theta$.
 \end{Theorem}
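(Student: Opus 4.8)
The plan is to prove this as a Zariski-density statement inside the Tannakian category of germs of meromorphic connections at $\tau=0$. Write $M$ for the differential module over $\mathbb{C}\{\tau\}$ attached to the system~\eqref{ls}; the local differential Galois group $G$ is the Tannakian fundamental group of the subcategory $\langle M\rangle^{\otimes}$ generated by $M$, realized inside $\mathrm{GL}_n(\mathbb{C})$ through the fibre functor ``space of solutions''. I would let $H$ be the Zariski closure of the abstract group generated by the formal monodromy $\hat{M}_0$, the exponential torus $\mathcal{T}$, and the Stokes matrices $\{St_{\theta}\}$ of Definition~\ref{Sto}, so that the goal is to prove $H=G$.

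First I would settle the easy inclusion $H\subseteq G$, i.e.\ that each generator lies in $G(\mathbb{C})$. Because $\mathbb{C}\{\tau\}\subset\mathbb{C}((\tau))$ is an extension of differential base fields, restriction of automorphisms embeds the formal Galois group $G_f=\mathrm{Gal}\bigl(\widehat{\mathrm{PV}}/\mathbb{C}((\tau))\bigr)$ as a closed subgroup of $G$; by Hukuhara--Turrittin together with the decomposition~\eqref{fls}, $G_f$ is exactly the Zariski closure of $\langle\hat{M}_0,\mathcal{T}\rangle$, so $\hat{M}_0$ and $\mathcal{T}$ belong to $G$. For the Stokes matrices I would invoke Theorem~\ref{R}: since the Poincar\'e rank is $1$, the formal matrix $\hat{H}(\tau)$ is $1$-summable, so $\Psi_{\theta-\varepsilon}$ and $\Psi_{\theta+\varepsilon}$ are genuine fundamental matrices of~\eqref{ls} over convergent germs. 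They therefore differ by the constant matrix $St_{\theta}$, and the $\mathbb{C}\{\tau\}$-differential automorphism of the Picard--Vessiot ring sending one to the other is precisely right multiplication by $St_{\theta}$; hence $St_{\theta}\in G(\mathbb{C})$.

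The substantive part is the reverse inclusion, for which I would use Chevalley's theorem: it suffices to show that every line stabilized by $H$ inside an arbitrary tensor construction $T=M^{\otimes a}\otimes(M^{*})^{\otimes b}$ is also stabilized by $G$, equivalently that every $H$-invariant vector $v\in T$ is $G$-invariant. Fix such a $v$. Invariance of $v$ under $\mathcal{T}$ and $\hat{M}_0$ says $v$ is fixed by $G_f$, so by the formal Galois correspondence $v$ corresponds to a solution $\hat{s}$ of the associated construction that is defined over the formal field $\mathbb{C}((\tau))$. The remaining hypothesis, invariance of $v$ under every $St_{\theta}$, translates into the statement that for each singular direction $\theta$ of Definition~\ref{st} the two lateral $1$-sums of $\hat{s}$ along $\theta-\varepsilon$ and $\theta+\varepsilon$ coincide. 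Thus the sectorial sums of $\hat{s}$ have no jump across any singular ray and glue into a single-valued analytic function on a full punctured neighbourhood of the origin; by the summability theory underlying Theorem~\ref{R} a $1$-summable series with no active singular direction is convergent, so $\hat{s}\in\mathbb{C}\{\tau\}$. Consequently $v$ is defined over the base field and fixed by all of $G$, giving $H=G$.

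The main obstacle is the implication ``all Stokes matrices trivial on the sub-object $\Rightarrow$ convergence of $\hat{s}$'' used in the last step; this is the genuinely analytic heart of the argument and rests on the Ramis--Sibuya cohomological description of the Stokes phenomenon, in which the formal solution defines a class in $H^{1}$ of the sheaf of flat sections with exponential decay whose vanishing is measured exactly by the collection $\{St_{\theta}\}$. The rank-$1$, non-resonant hypothesis is what keeps us at a single level ($k=1$), so that multisummation collapses to the $1$-summation of Theorem~\ref{R} and the singular directions are precisely those listed in Definition~\ref{st}, making the bookkeeping of the Stokes cocycle manageable.
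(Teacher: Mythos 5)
The paper does not prove this statement at all: Theorem~\ref{Ra} is quoted as Ramis's density theorem with a pointer to \cite{R3}, so there is no in-paper argument to compare yours against. Judged on its own, your outline is essentially the standard proof of the local density theorem (as in van der Put--Singer or Loday-Richaud): the easy inclusion via base change $\mathbb{C}\{\tau\}\subset\mathbb{C}((\tau))$ for the formal generators and via the observation that the two lateral $1$-sums of Theorem~\ref{R} are genuine fundamental solutions over convergent germs for the Stokes matrices; and the reverse inclusion by showing that an $H$-invariant element corresponds to a formal solution with no jump across any singular ray, hence to a convergent one, hence to an element of the base field. Two remarks. First, your reduction ``every $H$-stable line is $G$-stable, equivalently every $H$-invariant vector is $G$-invariant'' is not literally an equivalence as written: if you go through Chevalley you must still handle the character by which $H$ acts on the line (e.g.\ by passing to $L\otimes L^{*}$ inside $T\otimes T^{*}$). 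The cleaner route, and the one usually taken, is the differential Galois correspondence itself: a closed subgroup $H\leq G$ equals $G$ if and only if the $H$-fixed subfield of the Picard--Vessiot extension is the base field, which places you directly in the invariant-element situation. Second, you correctly identify the analytic heart of the matter --- that a Gevrey-$1$ series which is $1$-summable in every direction is convergent --- but in a complete write-up this is precisely the step that needs either a proof via the Ramis--Sibuya isomorphism or an explicit reference; as presented it is an announced lemma rather than an argument. With those two points repaired, your sketch is a correct proof of the theorem, and it is the proof one finds in the literature the paper cites.
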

 For more details about the relation between the Stokes phenomenon and the differential Galois theory, we refer to the very recent
 work of Ramis \cite{R1}.
 We make note that one can introduce the differential Galois group at $t=\infty$ of the system~\eqref{ls1} in the same way.

 Let $t_0$ be a base point of $\mathbb{C}\mathbb{P}^1 \backslash S$ and
 let $\Sigma_{t_0}$ denote an analytic germ of a fundamental matrix solution of~\eqref{ls1} at $t_0$. Let $U_a$, $a\in S$, be an open disc with center $a$,
 together with a local parameter $t_a$ at $a$, and such that $U_a \cap S=\{a\}$. Let $d_a$ be a fixed ray from $a$ in $U_a$, together with a point
$b_a\in d_a$ in $U_a$ and a path $\gamma_a$ from $t_0$ to $b_a$. Analytic continuation of $\Sigma_{t_0}$ along~$\gamma_a$ and~$d_a$ provides
an analytic germ $\Sigma_a$ of fundamental matrix solution on a germ of open sector with vertex $a$, bisected by $d_a$. Let $G_a$ be the local differential Galois group
of the system~\eqref{ls1} over the field of germs of meromorphic function at $a$ with respect to $\Sigma_a$. If we conjugate elements of $G_a$ by the analytic continuation
described above, we get an injective morphism of algebraic groups $G_a \hookrightarrow G$ with respect to the representation of these groups in
 ${\rm GL}_n(\mathbb{C})$ given by $\Sigma_a$ and $\Sigma_{t_0}$, respectively. In this way all $G_a$, $a\in S$, can be simultaneously identified with closed subgroups of $G$.
 Then we have the following important result of Mitschi \cite[Proposition 1.3]{Mi}.

 \begin{Theorem}[Mitschi]\label{Mi}
 The differential Galois group $G$ of the system~\eqref{ls1} is topologically generated in ${\rm GL}_n(\mathbb{C})$
 by the local differential Galois groups $G_a$, where $a$ runs over $S$.
 \end{Theorem}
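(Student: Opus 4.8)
The plan is to verify the hypotheses of the Mitschi theorem (Theorem~\ref{Mi}) and then assemble the two local Galois groups $G_0$ and $G_\infty$ into the global group $G$. First I would fix a base point $t_0\in\mathbb{C}\mathbb{P}^1\backslash S$ and an analytic germ $\Sigma_{t_0}$ of a fundamental matrix solution of~\eqref{ls1} at $t_0$. For each $a\in S=\{0,\infty\}$, I would choose the open disc $U_a$, the local parameter $t_a$, the ray $d_a$, the point $b_a\in d_a$, and the path $\gamma_a$ from $t_0$ to $b_a$ exactly as in the setup preceding the statement. Analytic continuation of $\Sigma_{t_0}$ first along $\gamma_a$ and then along $d_a$ produces the sectorial germ $\Sigma_a$, and with respect to $\Sigma_a$ one obtains the local differential Galois group $G_a$ over the field of germs of meromorphic functions at $a$.

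The key structural point I would emphasize is that conjugation by the analytic continuation along $\gamma_a$ gives an injective morphism of algebraic groups $G_a\hookrightarrow G$, compatible with the matrix representations induced by $\Sigma_a$ and $\Sigma_{t_0}$. This is precisely the identification described in the excerpt, so each $G_a$ is realized as a closed subgroup of $G$ inside ${\rm GL}_n(\mathbb{C})$. Having arranged all $G_a$ as closed subgroups of the same ambient group $G$, the claim to be proven is that $G$ is the smallest closed subgroup of ${\rm GL}_n(\mathbb{C})$ containing all of them, i.e., $G$ is topologically generated by the $G_a$.

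The heart of the argument is the standard differential-Galois correspondence combined with a monodromy/summability description of the full Picard--Vessiot extension $L$ of $\mathbb{C}(t)$. An element $\sigma\in G$ that fixes every local group $G_a$ (equivalently, lies in the closure of the subgroup they generate) must act trivially on all local solution data, so by the correspondence it fixes $L$ pointwise and hence equals the identity. Concretely, I would invoke the fact that the analytic invariants generating $G$ at $t=\infty$ are the formal monodromy, exponential torus, and Stokes matrices (Theorem~\ref{Ra}), while the monodromy around the regular singularity at the origin is Zariski dense in the corresponding local factor (Theorem~\ref{reg}); these generators, transported to the base point, already generate a Zariski-dense subgroup of $G$. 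Thus the Zariski closure of the group generated by all the $G_a$ is all of $G$.

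The main obstacle will be the bookkeeping of the identifications: one must check that the injections $G_a\hookrightarrow G$ are simultaneously compatible, i.e., that conjugating by the different paths $\gamma_a$ lands all local groups in one and the same copy of ${\rm GL}_n(\mathbb{C})$ determined by $\Sigma_{t_0}$, so that ``topologically generated'' is a meaningful statement about subgroups of a fixed group. Once this compatibility is in place, the density argument is formal, since by the Galois correspondence a proper closed subgroup $H\subsetneq G$ containing every $G_a$ would have a nonzero fixed field strictly between $\mathbb{C}(t)$ and $L$, contradicting that local data at the points of $S$ already generate the whole extension; this is exactly the content of Mitschi's Proposition~1.3 cited above.
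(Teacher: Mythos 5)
First, a point of comparison: the paper itself gives no proof of this statement. It is quoted, with the surrounding setup, from Mitschi's Proposition 1.3 (the result is essentially due to Ramis), so there is no internal argument to measure yours against; I can only judge the sketch on its own terms. Your opening paragraph and your ``main obstacle'' paragraph correctly reproduce the bookkeeping that the paper does spell out --- the simultaneous identification of all the $G_a$ as closed subgroups of $G$ via the paths $\gamma_a$ --- and your last sentence identifies the right strategy: by the Galois correspondence it suffices to show that the fixed field in the Picard--Vessiot extension $L$ of the Zariski closure $H$ of the subgroup generated by the $G_a$ is exactly $\mathbb{C}(t)$.

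However, the two steps you offer in place of that verification do not close the argument. The sentence ``an element $\sigma\in G$ that fixes every local group $G_a$ \dots\ must act trivially on all local solution data, so \dots\ equals the identity'' conflates group elements with field elements; read literally it would at best show that the centralizer of $H$ in $G$ is trivial, which does not imply $H=G$. And the closing clause ``contradicting that local data at the points of $S$ already generate the whole extension'' is precisely the assertion to be proved, so invoking it is circular --- as is the appeal to Theorems~\ref{reg} and~\ref{Ra}, which describe each local group separately but say nothing about their joint density in $G$. The missing idea is the rationality argument. Take $f\in L^{H}$. Since $f$ is fixed by $G_a$, its analytic continuation to the sectorial germ at $a$ lies in the fixed field of the local Galois group, which by the local Galois correspondence is the field of germs of meromorphic functions at $a$; so $f$ extends meromorphically to a full neighbourhood of each $a\in S$. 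Since $f$ is fixed by the local monodromies (each of which lies in the corresponding $G_a$, and which together generate the monodromy group of $\mathbb{C}\mathbb{P}^1\setminus S$), it is single-valued on $\mathbb{C}\mathbb{P}^1\setminus S$. Hence $f$ is meromorphic on all of $\mathbb{C}\mathbb{P}^1$, therefore rational, and $L^{H}=\mathbb{C}(t)$, which by the Galois correspondence forces $H=G$. Without this gluing step the proposal is a restatement of the theorem rather than a proof of it.
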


 \section[Non-integrability for alpha\_1=alpha\_2=alpha\_3=0, alpha\_4=1, alpha\_0=-alpha\_5]{Non-integrability for $\boldsymbol{\alpha_1=\alpha_2=\alpha_3=0}$, $\boldsymbol{\alpha_4=1}$, $\boldsymbol{\alpha_0=-\alpha_5}$}\label{section3}

 In this section, we deal with the non-integrability of the Hamiltonian system~\eqref{es} when
 $\alpha_1=\alpha_2=\alpha_3=0$, $\alpha_4=1$, $\alpha_0=-\alpha_5$. Denote $\alpha=\alpha_5$. For these values of the parameters
 the autonomous Hamiltonian system~\eqref{es} becomes
 \begin{gather*}
 \frac{{\rm d} x}{{\rm d} s}
 =
 \frac{2 x^2 y}{t} + x^2 - \frac{2 x y}{t}
 -\biggl(1 + \frac{1+\alpha}{t}\biggr) x + \frac{\alpha}{t}
 + \frac{2 z (z-1) w }{t},\\
 \frac{{\rm d} y}{{\rm d} s} =
 -\frac{2 x y^2}{t} + \frac{y^2}{t} - 2 x y
 +\biggl(1 + \frac{1+\alpha}{t}\biggr) y ,\\
 \frac{{\rm d} z}{{\rm d} s} =
 \frac{2 z^2 w}{t} + z^2 - \frac{2 z w}{t}
 -\biggl(1 + \frac{1+\alpha}{t}\biggr) z + \frac{\alpha}{t}
 + \frac{2 y z (z-1)}{t},\\
 \frac{{\rm d} w}{{\rm d} s} =
 -\frac{2 z w^2}{t} + \frac{w^2}{t} - 2 z w
 +\biggl(1 + \frac{1+\alpha}{t}\biggr) w -
 \frac{2 y (-w + 2 z w )}{t}, \\
 \frac{{\rm d} t}{{\rm d} s} =
 1, \\
 \frac{{\rm d} F}{{\rm d} s} =
 \frac{1}{t^2} (x (x-1) y (y+t) + z (z-1) w (w+t) - (1+\alpha) (x y + z w)
 + \alpha (y+w)) \\
 \hphantom{\frac{{\rm d} F}{{\rm d} s} =}{} - \frac{x (x-1) y}{t} - \frac{z (z-1) w}{t} .
 \end{gather*}
 We choose
 \begin{equation}
 x=z=\frac{\alpha}{s}, \qquad y=w=0, \qquad t=s, \qquad F=0\label{ps}
 \end{equation}
 as a non-equilibrium particular solution, along which we will write the variational equations. Because of the equation
 \smash{$\frac{{\rm d} t}{{\rm d} s}=1$}, from here on we use $t$ instead of $s$.

 For the first normal variational equations $({\rm NVE})_1$ along the solution~\eqref{ps}, we obtain the system
 \begin{gather}
 \dot{x}_1 =
 \biggl(-1 + \frac{\alpha-1}{t}\biggr) x_1 + \biggl(\frac{2 \alpha^2}{t^3} - \frac{2 \alpha}{t^2}\biggr) y_1
 + \biggl(\frac{2 \alpha^2}{t^3} - \frac{2 \alpha}{t^2}\biggr) w_1,\nonumber\\
 \dot{z}_1 =
 \biggl(-1 + \frac{\alpha-1}{t} \biggr) z_1 +
 \biggl(\frac{2 \alpha^2}{t^3} - \frac{2 \alpha}{t^2}\biggr) w_1
 + \biggl(\frac{2 \alpha^2}{t^3} - \frac{2 \alpha}{t^2}\biggr) y_1,\nonumber\\
 \dot{w}_1 =
 \biggl(1 - \frac{\alpha-1}{t}\biggr) w_1,\nonumber\\
 \dot{y}_1 =
 \biggl(1 - \frac{ \alpha -1}{t} \biggr) y_1 .\label{fve}
 \end{gather}
 Note that the $({\rm NVE})_k$, $k\in\mathbb{N}$ of the system~\eqref{es} along the solution~\eqref{ps} are nothing but the $({\rm VE})_k$, $k\in\mathbb{N}$, of the
 system~\eqref{s} along the solution $x=z=\frac{\alpha}{t}$, $y=w=0$ for ${\alpha_1=\alpha_2=\alpha_3=0}$, $\alpha_4=1$, $\alpha_0=-\alpha_5=-\alpha$.

 The system~\eqref{fve} is solvable in quadratures and therefore its differential Galois group $G_1$ is a solvable subgroup
 in ${\rm GL}_4(\mathbb{C})$.

 \begin{Theorem}%\label{ve1}
 The connected component $(G_1)^0$ of the unit element of the differential Galois group $G_1$ of the $({\rm NVE})_1$ is Abelian.
 \end{Theorem}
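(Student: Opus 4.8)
The plan is to analyze the system~\eqref{fve} directly by exploiting its triangular structure. First I would observe that the last two equations for $w_1$ and $y_1$ are decoupled from $x_1$ and $z_1$; moreover they are \emph{identical}, both of the form
\begin{equation*}
\dot{u}=\biggl(1-\frac{\alpha-1}{t}\biggr)u .
\end{equation*}
Each such scalar equation is solved by $u(t)=t^{-(\alpha-1)}\mathrm{e}^{t}$ (up to constants), so the solutions for $w_1$ and $y_1$ live in the Picard--Vessiot extension generated by the single function $\phi(t)=t^{1-\alpha}\mathrm{e}^{t}$. The associated Galois group for this $2\times 2$ diagonal block is therefore at most one-dimensional: it is a subgroup of $\mathbb{C}^{*}$ acting diagonally with equal weights, hence in particular Abelian and connected (a torus, or trivial if $\alpha\in\mathbb{Z}$ so that $\phi$ is, up to the exponential, rational).

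Next I would treat the equations for $x_1$ and $z_1$ as a linear inhomogeneous system whose forcing terms are built from the already-solved $w_1,y_1$. The homogeneous parts are again identical scalar equations
\begin{equation*}
\dot{v}=\biggl(-1+\frac{\alpha-1}{t}\biggr)v ,
\end{equation*}
solved by $\psi(t)=t^{\alpha-1}\mathrm{e}^{-t}$, the reciprocal (up to sign in the exponent) of $\phi$. By variation of parameters the general solution of the $x_1,z_1$ block is $\psi(t)$ times a primitive of $\psi(t)^{-1}$ multiplied by the forcing; since the forcing is a $\mathbb{C}(t)$-combination of $\phi$, and $\phi\cdot\psi^{-1}$ and $\psi^{-1}\phi=\mathrm{e}^{2t}t^{-2(\alpha-1)}$ type products appear, the solutions lie in the Picard--Vessiot extension generated by $\phi$, $\psi$, together with at most finitely many indefinite integrals of elements of $\mathbb{C}(t)[\phi,\psi,\phi^{-1},\psi^{-1}]$. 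The upshot is that $L_1$ is built as a tower of extensions each adjoining either an exponential of a primitive (a character of $\mathbb{G}_m$-type) or a primitive itself (a $\mathbb{G}_a$-type antiderivative). Every such step contributes a connected Abelian factor ($\mathbb{G}_m$ or $\mathbb{G}_a$), which is exactly the statement that the system is solvable in quadratures already noted before the theorem.

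The cleanest way to conclude is to invoke the general structure theorem: if a linear system is solvable in quadratures, its Galois group is triangularizable (Lie--Kolchin), so $(G_1)^0$ is solvable and contained in the group of upper-triangular matrices in a suitable basis. I would then argue that the off-diagonal contributions here are \emph{commuting} one-parameter unipotent or toral pieces, because the integrals that arise are $\mathbb{G}_a$-directions sitting over a single $\mathbb{G}_m$-torus generated by $\phi$: there is essentially one exponential scale $\mathrm{e}^{t}$ and its inverse, so the torus part is at most two-dimensional with the two weights forced to be $\pm$ of each other by the matching homogeneous coefficients. Concretely, I would write down an explicit fundamental matrix in block-triangular form using $\phi,\psi$ and the relevant primitives, read off the structure of the Galois group as a subgroup of $(\mathbb{C}^{*})^{2}\ltimes(\mathbb{C}^{+})^{k}$, and verify directly that the additive (unipotent) directions are acted on trivially or by commuting characters, so the commutator subgroup of $(G_1)^0$ vanishes. \textbf{The main obstacle} will be the bookkeeping in this last step: showing not merely that $(G_1)^0$ is solvable (which is automatic from quadratures) but that it is genuinely \emph{Abelian}, which requires checking that no pair of the arising generators fails to commute --- i.e.\ that the semidirect-product action of the toral part on the unipotent part is trivial on the identity component. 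This amounts to confirming that the exponents $\lambda_j$ and the exponential data $q_j$ of the four solutions are aligned so that the logarithmic derivatives driving the primitives do not couple the torus and the additive directions nontrivially; I expect this to follow from the explicit coincidence of the homogeneous coefficients $\pm(1-(\alpha-1)/t)$ across the two $2\times2$ blocks, which I would verify by direct computation of the fundamental matrix.
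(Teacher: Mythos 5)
Your plan reaches the right conclusion and is workable, but it takes a genuinely different route from the paper and stops just short of the decisive computation. The paper does not argue through a tower of quadrature extensions: it writes down the explicit fundamental matrix $\Phi(t)$ of~\eqref{fve} and assembles $G_1$ from the two local Galois groups via Mitschi's theorem --- the monodromy at the regular singularity $t=0$ (Schlesinger) and, at the irregular point $t=\infty$, the formal monodromy together with the exponential torus; since the formal transformation $H_1(t)$ at infinity is genuinely holomorphic (entries $1$ and $-\alpha t^{-2}$), there are no Stokes matrices and the whole group is diagonal, $(G_1)^0=\bigl\{{\rm diag}\bigl(\mu^{-1},\mu^{-1},\mu,\mu\bigr),\ \mu\in\mathbb{C}^{*}\bigr\}$. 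Your approach instead reads the group off the Picard--Vessiot tower, and you correctly isolate the crux: solvability in quadratures only gives a triangularizable group, and Abelianness requires that no genuine $\mathbb{G}_a$-direction survives. You defer this to ``direct computation of the fundamental matrix''; that computation does succeed, and in the strongest possible way: the variation-of-parameters integrals for the $x_1,z_1$ block are elementary, since $x_1=-\alpha\,{\rm e}^{t}t^{-\alpha-1}$ is an exact particular solution against the forcing $\bigl(\tfrac{2\alpha^{2}}{t^{3}}-\tfrac{2\alpha}{t^{2}}\bigr){\rm e}^{t}t^{1-\alpha}$. Hence $L_1=\mathbb{C}(t)\bigl({\rm e}^{t},t^{\alpha}\bigr)$, no primitives are adjoined, the unipotent part is trivial, and $(G_1)^0$ is the one-dimensional torus above (note that your $\psi$ equals $\phi^{-1}$, so the toral part is one-dimensional, not two-dimensional as you allow, and the Lie--Kolchin step becomes unnecessary once the explicit fundamental matrix is in hand). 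Two corrections: when $\alpha\in\mathbb{Z}$ the group of the diagonal block is not trivial --- the exponential torus generated by ${\rm e}^{t}$ survives and gives $\mathbb{C}^{*}$; and until the elementary-integral verification is actually performed, your argument establishes only that $(G_1)^0$ is solvable, so that verification is the proof rather than an expected afterthought.
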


 \begin{proof}

 The $({\rm NVE})_1$ have two singular points over $\mathbb{C}\mathbb{P}^1$: the points $t=0$ and $t=\infty$. The~origin is a regular singularity while $t=\infty$ is an irregular singularity.
 The system~\eqref{fve} admits a~fundamental matrix solution $\Phi(t)$ in the form
 \[
 \Phi(t)=\begin{pmatrix}
 {\rm e}^{-t} t^{\alpha-1} & 0 & -\alpha {\rm e}^t t^{-\alpha-1} & -\alpha {\rm e}^t t^{-\alpha-1}\\[0.1ex]
 0 & {\rm e}^{-t} t^{\alpha-1} & -\alpha {\rm e}^t t^{-\alpha-1} & -\alpha {\rm e}^t t^{-\alpha-1}\\[0.1ex]
 0 & 0 & {\rm e}^t t^{-\alpha+1} & 0\\[0.1ex]
 0 & 0 & 0 & {\rm e}^t t^{-\alpha+1}
 \end{pmatrix}.
 \]
 We will compute the Galois group $G_1$ of the $({\rm NVE})_1$ with respect to this fundamental matrix solution.
 In this case, from Theorem \ref{Mi}, it follows that
 the differential Galois group $G_1$ of the system~\eqref{fve} is generated topologically by the local Galois groups $G_0$ and $G_{\infty}$, corresponding to the singularities $t=0$ and $t=\infty$, respectively.

 Let us first determine the group $G_0$.
 In a neighborhood of the origin, the above solution $\Phi(t)$ is written as
 \[
 \Phi(t)=P(t) t^A ,
 \]
 where $P(t)$ is the holomorphic matrix
 \[
 P(t)=\begin{pmatrix}
 {\rm e}^{-t} & 0 & -\alpha {\rm e}^t & -\alpha {\rm e}^t\\
 0 & {\rm e}^{-t} & - \alpha {\rm e}^t & -\alpha {\rm e}^t \\
 0 & 0 & t^2 {\rm e}^t & 0\\
 0 & 0 & 0 & t^2 {\rm e}^t
 \end{pmatrix} .
 \]
 For the constant matrix $A$ we have that $A={\rm diag} (\alpha-1, \alpha-1, -\alpha-1, -\alpha-1)$.
 From Theorem~\ref{reg}, it follows that the Galois group $G_0$ over $\mathbb{C}(t)$ is generated topologically by the monodromy matrix~$M_0$ around the origin.
 With respect to the fundamental matrix solution $\Phi(t)$, we obtain
 \[
 M_0={\rm e}^{2 \pi {\rm i} A}=\begin{pmatrix}
 {\rm e}^{2 \pi {\rm i} \alpha} & 0 & 0 & 0\\
 0 & {\rm e}^{2 \pi {\rm i} \alpha} & 0\\
 0 & 0 & {\rm e}^{-2 \pi {\rm i} \alpha} & 0\\
 0 & 0 & 0 & {\rm e}^{-2 \pi {\rm i} \alpha}
 \end{pmatrix} .
 \]
 When $\alpha\in\mathbb{Q}$ but $\alpha\notin\mathbb{Z}$, the group generated by $M_0$ is not connected but it is a finite and cyclic group. In this case,
 \[
 G_0=\left\{\begin{pmatrix}
 \nu^{-1} & 0 & 0 & 0\\
 0 & \nu^{-1} & 0 & 0\\
 0 & 0 & \nu & 0\\
 0 & 0 & 0 & \nu
 \end{pmatrix}, \ \nu \ \text{is a root of unity} \right\}, \qquad (G_0)^0=\{I_4\} ,
 \]
 where $I_4$ is the identity matrix. When $\alpha\in\mathbb{Z}$, we have that $G_0=(G_0)^0=\{I_4\}$. When $\alpha\notin\mathbb{Q}$, the group generated by $M_0$ is a connected group
 and
 \[
 G_0=(G_0)^0=\left\{ \begin{pmatrix}
 \nu^{-1} & 0 & 0 & 0\\
 0 & \nu^{-1} & 0 & 0\\
 0 & 0 & \nu & 0\\
 0 & 0 & 0 & \nu
 \end{pmatrix},\ \nu\in\mathbb{C} \right\} .
 \]

 Consider now the $({\rm NVE})_1$ and the fundamental matrix solution $\Phi(t)$ at $t=\infty$. In a neighborhood of the irregular singularity $t=\infty$, the matrix $\Phi(t)$ is written
 as
 \[
 \Phi(t)=H_1(t) t^{\Lambda_1} \exp (Q_1 t) ,
 \]
 where $H_1(t)$ is the holomorphic matrix
 \[
 H_1(t)=\begin{pmatrix}
 1 & 0 & -\alpha t^{-2} & -\alpha t^{-2}\\
 0 & 1 & -\alpha t^{-2} & -\alpha t^{-2}\\
 0 & 0 & 1 & 0\\
 0 & 0 & 0 & 1
 \end{pmatrix} .
 \]
 The matrices $\Lambda_1$ and $Q_1$ are given by
 \[
 \Lambda_1={\rm diag} (\alpha-1, \alpha-1, -\alpha, -\alpha),\qquad
 Q_1={\rm diag}(-1, -1, 1, 1) .
 \]
 Since we do not observe non-trivial Stokes phenomena, the local Galois group $G_{\infty}$ is generated topologically by the formal monodromy $\hat{M}_{\infty}$ and the
 exponential torus $\mathcal{T}_{\infty}$. The formal monodromy corresponding to a loop around $t=\infty$ is nothing but $(M_0)^{-1}$, that is, $\hat{M}_{\infty}=(M_0)^{-1}$. Therefore, we can consider
 the local differential Galois group $G_0$ at the origin as a subgroup of the local differential Galois group $G_{\infty}$ at $t=\infty$. Thus the Galois group $G_1$ of the $({\rm NVE})_1$
 coincides with the local differential Galois group $G_{\infty}$ at $t=\infty$. For the exponential torus, we have
 \[
 \mathcal{T}_{\infty}=\begin{pmatrix}
 \lambda^{-1} & 0 & 0 & 0\\
 0 & \lambda^{-1} & 0 & 0\\
 0 & 0 & \lambda & 0\\
 0 & 0 & 0 & \lambda
 \end{pmatrix} ,
 \]
 where $\lambda\in\mathbb{C}^*$.

 As a result, we find that when $\alpha\in\mathbb{Q}$ the connected component $(G_1)^0$ of the differential Galois group $G_1$ coincides with $\mathcal{T}_{\infty}$. When $\alpha\notin\mathbb{Q}$ the group
 $(G_1)^0$ is generated by $\hat{M}_{\infty}$ and $\mathcal{T}_{\infty}$. Summarily, the group $(G_1)^0$ is defined as
 \[
 (G_1)^0=\left\{\begin{pmatrix}
 \mu^{-1} & 0 & 0 & 0\\
 0 & \mu^{-1} & 0 & 0\\
 0 & 0 & \mu & 0\\
 0 & 0 & 0 & \mu
 \end{pmatrix}, \ \mu\in\mathbb{C}^*\right\} ,
 \]
 which is an Abelian group.
 %This ends the proof.
 \end{proof}

 For the second normal variational equations $({\rm NVE})_2$ along the solution~\eqref{ps}, we obtain the
 non-homogeneous system
 \begin{gather*}
 \dot{x}_2 =
 \biggl(-1 + \frac{\alpha-1}{t}\biggr) x_2 + \biggl(\frac{2 \alpha^2}{t^3} - \frac{2 \alpha}{t^2}\biggr) y_2
 + \biggl(\frac{2 \alpha^2}{t^3} - \frac{2 \alpha}{t^2}\biggr) w_2
 + x_1^2 \\
 \hphantom{\dot{x}_2 =}{} +\biggl(\frac{4 \alpha}{t^2} - \frac{2}{t}\biggr) x_1 y_1 +
 \biggl(\frac{4 \alpha}{t^2} - \frac{2}{t}\biggr) w_1 z_1,\\
 \dot{y}_2 =
 \biggl(1 - \frac{ \alpha -1}{t} \biggr) y_2 -2 x_1 y_1 + \biggl(\frac{1}{t} - \frac{2 \alpha}{t^2}\biggr) y_1^2,\\
 \dot{z}_2 =
 \biggl(-1 + \frac{\alpha-1}{t} \biggr) z_2 +
 \biggl(\frac{2 \alpha^2}{t^3} - \frac{2 \alpha}{t^2}\biggr) w_2
 + \biggl(\frac{2 \alpha^2}{t^3} - \frac{2 \alpha}{t^2}\biggr) y_2 + z^2_1\\
 \hphantom{\dot{z}_2 =}{} + \biggl(\frac{4 \alpha}{t^2} - \frac{2}{t}\biggr) y_1 z_1 +
 \biggl(\frac{4 \alpha}{t^2} -\frac{2}{t}\biggr) w_1 z_1,\\
 \dot{w}_2 =
 \biggl(1 - \frac{\alpha-1}{t}\biggr) w_2 -
 \biggl(\frac{4 \alpha}{t^2} - \frac{2}{t}\biggr) y_1 w_1 - 2 w_1 z_1
 +\biggl(\frac{1}{t} - \frac{2 \alpha}{t^2}\biggr) w_1^2 .
 \end{gather*}
 Introducing more 9 new variables
 $x_1^2$, $x_1 y_1$, $x_1 w_1$, $w_1 z_1$, $w_1^2$, $w_1 y_1$, $y_1^2$, $z_1^2$, $z_1 y_1$ and their differential equations, we extend the $({\rm NVE})_2$ to the $({\rm LNVE})_2$ \cite{MRS}.
 The $({\rm LNVE})_2$ is a system of thirteenth order.
 The very high order of the $({\rm LNVE})_2$ make the problem of the description of its differential Galois group
 too complicated. Fortunately, it is not necessary to study the whole~$({\rm LNVE})_2$. If we find a subsystem
 of $({\rm LNVE})_2$, for which the connected component $G^0$ of the unit element of the corresponding
 differential Galois group is not Abelian and so is \smash{$(G_2)^0$}.

 For this reason, from here on we study the differential Galois group of the following fourth-order linear homogeneous system:
 \begin{gather}
 \dot{w}_2 =
 \biggl(1 - \frac{\alpha-1}{t}\biggr) w_2 - 2 p -
 \biggl(\frac{4 \alpha}{t^2} - \frac{2}{t}\biggr) q
 +\biggl(\frac{1}{t} - \frac{2 \alpha}{t^2}\biggr) v,\nonumber\\
 \dot{p} =
 \biggl(\frac{2 \alpha^2}{t^3} - \frac{2 \alpha}{t^2}\biggr) q
 + \biggl(\frac{2 \alpha^2}{t^3} - \frac{2 \alpha}{t^2}\biggr) v,\nonumber\\
 \dot{q} =
 2 \biggl(1- \frac{\alpha-1}{t}\biggr) q,\nonumber\\
 \dot{v} =
 2 \biggl(1- \frac{\alpha-1}{t}\biggr) v ,\label{nve2}
 \end{gather}
 where we have denoted $p:=w_1 z_1$, $q:=y_1 w_1$, $v:=w^2_1$. The system~\eqref{nve2} as the $({\rm NVE})_1$ has two singular
 points over $\mathbb{C}\mathbb{P}^1$: $t=0$ and $t=\infty$. The origin is a regular singularity, while $t=\infty$ is an irregular
 singularity of Poincar\'{e} rank 1.

 In order to determine the local differential Galois group at $t=\infty$ of the system~\eqref{nve2}, we make the
 change $t=1/\tau$. This transformation takes the system~\eqref{nve2} into the system
 \begin{gather}
 w'_2 =
 \biggl(\frac{\alpha-1}{\tau} - \frac{1}{\tau^2}\biggr) w_2 + \frac{2}{\tau^2} p
 + \biggl(4 \alpha -\frac{2}{\tau}\biggr) q + \biggl(2 \alpha - \frac{1}{\tau}\biggr) v,\nonumber\\
 p' =
 \bigl(2 \alpha - 2 \alpha^2 \tau\bigr) q + \bigl(2 \alpha - 2 \alpha^2 \tau\bigr) v,\nonumber\\
 q' =
 2 \biggl(\frac{\alpha-1}{\tau} - \frac{1}{\tau^2}\biggr) q,\nonumber\\
 v' =
 2 \biggl(\frac{\alpha-1}{\tau} - \frac{1}{\tau^2}\biggr) v ,\label{eq}
 \end{gather}
 where \smash{$'=\frac{{\rm d}}{{\rm d} \tau}$}. Now the origin is an irregular singularity of Poincar\'{e} rank 1 for the system~\eqref{eq}.
 We note that from here on we use the standard notations $(a)_n$ and \smash{$(a)^{(n)}$} for the falling and the rising factorials
 \begin{gather*}
 (a)_n=a (a-1) (a-2) \cdots (a-n+1), \qquad (a)_0=1,\\
 (a)^{(n)}=a (a+1) (a+2) \cdots (a+n-1), \qquad a^{(0)}=1,
 \end{gather*}
 respectively.

 \begin{Proposition}\label{for}
 The system~\eqref{eq} possesses an unique formal fundamental matrix solution at the origin in the form
 \[
 \hat{\Psi}(\tau)=\hat{H}(\tau) \tau^{\Lambda} \exp\biggl(\frac{Q}{\tau}\biggr) ,
 \]
 where the matrices $\Lambda$ and $Q$ are given by
 \begin{equation*}
 \Lambda={\rm diag} (\alpha-1, 0, 2 \alpha -2, 2 \alpha-2),\qquad
 Q={\rm diag}(1, 0, 2, 2) .%\label{lq}
 \end{equation*}
 The matrix \smash{$\hat{H}(\tau)$} is defined as
 \[
 \hat{H}(\tau)=\begin{pmatrix}
 	 1 & 2+\hat{\varphi}(\tau) & 2 \tau & \hat{\phi}(\tau)\\
 	 0 & 1 & -\alpha \tau^2 & -\alpha \tau^2\\
 	 0 & 0 & 1 & 0\\
 	 0 & 0 & 0 & 1
 	\end{pmatrix} .
 \]
 The elements $\hat{\varphi}(\tau)$ and $\hat{\phi}(\tau)$ are defined as follows:
 \begin{enumerate}
 \itemsep=0pt 	
 \item[$1.$]
 If $\alpha\in\mathbb{N}$, the function $\hat{\varphi}(\tau)$ is the polynomial
 \begin{equation}
 \hat{\varphi}(\tau)= 2 (\alpha-1) \tau + 2 (\alpha-1) (\alpha-2) \tau^2 + \cdots + 2 (\alpha-1)! \tau^{\alpha-1} .\label{fs1}
 \end{equation}
 Otherwise, $\hat{\varphi}(\tau)$ is given by the following divergent power series:
 \[
 \hat{\varphi}(\tau)=2 \sum_{n=1}^{\infty} (\alpha-1)_n \tau^n .
 \]

 \item[$2.$]
 If $\alpha\in\mathbb{Z}_{\leq 0}$, the function $\hat{\phi}(\tau)$ is the polynomial
 \begin{equation}\
 \hat{\phi}(\tau)=\tau + \alpha \tau^2 + \alpha (\alpha+1) \tau^3 + \cdots + (-1)^{-\alpha} (-\alpha)! \tau^{-\alpha+1} .\label{fs2}
 \end{equation}
 Otherwise, $\hat{\phi}(\tau)$ is given by the following divergent power series:
 \[
 \hat{\phi}(\tau)=\sum_{n=0}^{\infty} \alpha^{(n)} \tau^{n+1} .
 \]
 \end{enumerate}
 \end{Proposition}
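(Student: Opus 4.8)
The plan is to build the formal fundamental matrix solution column by column, exploiting the triangular structure of \eqref{eq}: the last two components $q$ and $v$ satisfy identical, decoupled scalar equations, the component $p$ is driven only by $q$ and $v$, and only $w_2$ couples to all four unknowns. Since the Hukuhara--Turrittin theorem already guarantees a formal solution of the shape \eqref{fls}, the task reduces to producing four explicit independent solutions, each of the form (vector of formal power series)$\,\cdot\,\tau^{\lambda}\exp(q/\tau)$, and reading off $\Lambda$, $Q$ and $\hat H$.

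First I would solve the two independent scalar equations $q'=2\bigl(\frac{\alpha-1}{\tau}-\frac{1}{\tau^2}\bigr)q$ and the identical one for $v$. Each integrates in closed form to $\tau^{2\alpha-2}\exp(2/\tau)$, which fixes $\lambda_3=\lambda_4=2\alpha-2$ and $q_3=q_4=2$. Feeding $q$ (resp.\ $v$) into $p'=\bigl(2\alpha-2\alpha^2\tau\bigr)(q+v)$ and integrating gives $p=-\alpha\tau^{2\alpha}\exp(2/\tau)$, which accounts for the entry $-\alpha\tau^2$ in $\hat H$. A solution with $p=q=v=0$ forces $w_2'=\bigl(\frac{\alpha-1}{\tau}-\frac{1}{\tau^2}\bigr)w_2$, whence $w_2=\tau^{\alpha-1}\exp(1/\tau)$; this is the first column and gives $\lambda_1=\alpha-1$, $q_1=1$. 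The remaining degrees of freedom, with $\lambda=q=0$, are captured by the second column, obtained by setting $q=v=0$ and $p\equiv 1$.

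The core computation is the scalar first-order equation for the $w_2$-component in the second and fourth columns. For the second column it reads $w_2'=\bigl(\frac{\alpha-1}{\tau}-\frac{1}{\tau^2}\bigr)w_2+\frac{2}{\tau^2}$; substituting $w_2=\sum_{n\ge 0}a_n\tau^n$ and matching powers of $\tau$ forces $a_0=2$ and the recurrence $a_{n+1}=(\alpha-n-1)a_n$, that is $a_n=2(\alpha-1)_n$, which is precisely $2+\hat\varphi(\tau)$. For the fourth column, writing $w_2=\psi(\tau)\,\tau^{2\alpha-2}\exp(2/\tau)$ and inserting the already-computed $p$ and $v$ reduces the equation to $\psi'+\bigl(\frac{\alpha-1}{\tau}-\frac{1}{\tau^2}\bigr)\psi=-\frac{1}{\tau}$; the power-series ansatz then yields $b_0=0$, $b_1=1$ and $b_{n+1}=(\alpha+n-1)b_n$, so $b_n=(\alpha)^{(n-1)}$ and $\psi=\hat\phi(\tau)$. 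In both recurrences the multiplier vanishes exactly when $\alpha\in\mathbb{N}$ (for $\hat\varphi$, at $n=\alpha$) or $\alpha\in\mathbb{Z}_{\le 0}$ (for $\hat\phi$, at the index where $\alpha+j=0$), producing the stated polynomials \eqref{fs1} and \eqref{fs2}; otherwise the factor grows linearly in $n$, yielding factorial growth of the coefficients and hence divergent series.

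Finally I would assemble the four columns into $\hat\Psi=\hat H\tau^{\Lambda}\exp(Q/\tau)$ with the claimed $\Lambda$, $Q$ and $\hat H$, and address uniqueness. The main delicate point is that the third and fourth columns share the same pair $(\lambda,q)=(2\alpha-2,2)$, so a priori a resonance could force logarithmic terms or leave a $\mathrm{GL}_2$ ambiguity in that block. Here no logarithms appear, because the equations for $q$ and $v$ are literally decoupled and identical, so the corresponding block of $\Lambda$ is $(2\alpha-2)I_2$ and is already diagonal. The prescribed sparsity pattern of $\hat H$ (the fixed positions of $2\tau$, $\hat\phi$ and $-\alpha\tau^2$) then determines each unknown series through its own first-order recurrence, and each such recurrence has a unique formal solution. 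This uniqueness of the recurrence solutions is exactly what yields the uniqueness of $\hat\Psi$ in the stated normal form.
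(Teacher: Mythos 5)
Your proposal is correct and follows essentially the same route as the paper: exploit the triangular structure, solve the $q$, $v$, $p$ equations in closed form, and obtain the $w_2$-component of each column from a scalar first-order equation whose power-series recurrence ($a_{n+1}=(\alpha-1-n)a_n$, resp.\ $b_{n+1}=(\alpha+n-1)b_n$) yields $\hat\varphi$ and $\hat\phi$ and explains the truncation for $\alpha\in\mathbb{N}$, resp.\ $\alpha\in\mathbb{Z}_{\le 0}$ (note only the harmless off-by-one: the multiplier for $\hat\varphi$ first vanishes at $n=\alpha-1$). Your closing remark on the resonant block $(\lambda,q)=(2\alpha-2,2)$ and why no logarithms or $\mathrm{GL}_2$ ambiguity arise is actually more explicit about uniqueness than the paper's own proof.
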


 \begin{proof}
 The formulas
 \[
 p(\tau)=C_3 - C_2 \alpha {\rm e}^{\frac{2}{\tau}} \tau^{2 \alpha} - C_1 \alpha {\rm e}^{\frac{2}{\tau}} \tau^{2 \alpha},\qquad
 q(\tau)=C_2 {\rm e}^{\frac{2}{\tau}} \tau^{2 (\alpha-1)}, \qquad
 v(\tau)=C_1 {\rm e}^{\frac{2}{\tau}} \tau^{2 (\alpha-1)} ,
 \]
 where $C_1$, $C_2$, $C_3$ are constant of integration, give the general solutions of the last three equations of the system~\eqref{eq}.
To build a local fundamental matrix solution $\hat{\Psi}(\tau)$ at the origin, we use that each column of such a matrix is a solution
 of the system~\eqref{eq}.
 Denote by $\hat{\Psi}_j(\tau)$, $j=1, 2, 3, 4$, the columns of the matrix $\hat{\Psi}(\tau)$. Then
 \[
 \hat{\Psi}_1(\tau)=\begin{pmatrix}
 \hat{w}_2^{(1)}(\tau)\\
 0 \\
 0 \\
 0
 \end{pmatrix} ,
 \]
 where \smash{$\hat{w}_2^{(1)}(\tau)$} is a solution of the equation
 \[
 w_2'=\biggl(\frac{\alpha-1}{\tau} - \frac{1}{\tau^2}\biggr) w_2 .
 \]
 We choose \smash{$\hat{w}_2^{(1)}(\tau)={\rm e}^{\frac{1}{\tau}} \tau^{\alpha-1}$}. For the second column $\hat{\Psi}_2(\tau)$, we have
 \[
 \hat{\Psi}_2(\tau)=\begin{pmatrix}
 \hat{w}_2^{(2)}(\tau)\\
 1 \\
 0 \\
 0
 \end{pmatrix} ,
 \]
 where \smash{$\hat{w}_2^{(2)}(\tau)$} is a formal solution of the equation
 \begin{equation}
 w_2'=\biggl(\frac{\alpha-1}{\tau} - \frac{1}{\tau^2}\biggr) w_2 + \frac{2}{\tau^2} .\label{eq1}
 \end{equation}
 The equation~\eqref{eq1} admits a formal solution near the origin in the form
 \[
 \hat{w}_2^{(2)}(\tau)=2 + 2 \sum_{n=1}^{\infty} (\alpha-1)_n \tau^n,
 \]
 which is a polynomial when $\alpha\in\mathbb{N}$. When $\alpha\notin\mathbb{N}$, the solution \smash{$\hat{w}_2^{(2)}(\tau)$} is a divergent power series.

 For the next column, we have
 \[
 \hat{\Psi}_3(\tau)=\begin{pmatrix}
 \hat{w}_2^{(3)}(\tau)\\[0.2ex]
 -\alpha {\rm e}^{\frac{2}{\tau}} \tau^{2 \alpha}\\[0.2ex]
 {\rm e}^{\frac{2}{\tau}} \tau^{2 (\alpha-1)}\\[0.2ex]
 0
 \end{pmatrix} ,
 \]
 where \smash{$\hat{w}_2^{(3)}(\tau)$} is a solution of the equation
 \[
 w_2'=\left(\frac{\alpha-1}{\tau} - \frac{1}{\tau^2}\right) w_2 + 2 \alpha {\rm e}^{\frac{2}{\tau}} \tau^{2 \alpha-2}
 -2 {\rm e}^{\frac{2}{\tau}} \tau^{2 \alpha -3} .
 \]
 We choose
 $\hat{w}_2^{(3)}(\tau)=2 {\rm e}^{\frac{2}{\tau}} \tau^{2 \alpha -1}$ .

 For the last column $\hat{\Psi}_4(\tau)$, we have
 \[
 \hat{\Psi}_4(\tau)=\begin{pmatrix}
 \hat{w}_2^{(4)}(\tau)\\[0.2ex]
 -\alpha {\rm e}^{\frac{2}{\alpha}} \tau^{2 \alpha}\\[0.2ex]
 0\\[0.2ex]
 {\rm e}^{\frac{2}{\tau}} \tau^{2 (\alpha-1)}
 	\end{pmatrix} ,
 \]
 where \smash{$\hat{w}_2^{(4)}(\tau)$} is a solution of the equation
 \begin{equation}
 w_2'=\biggl(\frac{\alpha-1}{\tau} - \frac{1}{\tau^2}\biggr) w_2
 - {\rm e}^{\frac{2}{\tau}} \tau^{2 \alpha -3} .\label{eq3}
 \end{equation}
 Looking for a solution of the equation~\eqref{eq3} in the form
 \smash{$\hat{w}_2^{(4)}(\tau)={\rm e}^{\frac{2}{\tau}} \tau^{2 \alpha-3} \hat{g}(\tau)$}, we find that~$\hat{g}(\tau)$ must satisfy the equation
 \begin{equation}\
 \tau^2 \hat{g}'=(1- (\alpha-2)) \hat{g} - \tau^2 .\label{eq4}
 \end{equation}
 The equation~\eqref{eq4} admits a formal solution near the origin in the form
 \[
 \hat{g}(\tau)=\tau \sum_{n=0}^{\infty} \alpha^{(n)} \tau^{n+1},
 \]
 which is a polynomial when $\alpha\in\mathbb{Z}_{\leq 0}$. Otherwise, $\hat{g}(\tau)$ is a divergent power series.

 Fitting together the so building columns $\hat{\Psi}_j(\tau)$, $j=1, 2, 3, 4$, we complete the proof.
 \end{proof}

 Now we can determine explicitly the formal invariants at the origin of the system~\eqref{eq}.

 \begin{Proposition}%\label{fin}
 With respect to the formal fundamental matrix solution $\hat{\Psi}(\tau)$ introduced by Proposition $\ref{for}$, the
 exponential torus $\mathcal{T}$ and the formal monodromy $\hat{M}_0$ at the origin of the system~\eqref{eq} are given by
 \[
 \mathcal{T}=\left\{\begin{pmatrix}
 \lambda & 0 & 0 & 0\\
 0 & 1 & 0 & 0\\
 0 & 0 & \lambda^2 & 0\\
 0 & 0 & 0 & \lambda^2
 \end{pmatrix},\ \lambda\in\mathbb{C}^*\right\},\qquad
 \hat{M}_0={\rm e}^{2 \pi {\rm i} \Lambda}=\begin{pmatrix}
 {\rm e}^{2 \pi {\rm i} \alpha} & 0 & 0 & 0\\
 0 & 1 & 0 & 0\\
 0 & 0 & {\rm e}^{4 \pi {\rm i} \alpha} & 0\\
 0 & 0 & 0 & {\rm e}^{4 \pi {\rm i} \alpha}
 \end{pmatrix} .
 \]
 \end{Proposition}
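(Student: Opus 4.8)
The plan is to read off both invariants directly from the data $\Lambda = {\rm diag}(\alpha-1, 0, 2\alpha-2, 2\alpha-2)$ and $Q = {\rm diag}(1,0,2,2)$ furnished by Proposition~\ref{for}, using the two Definitions of the exponential torus and the formal monodromy. No new analysis of the system~\eqref{eq} is needed; the content is entirely the bookkeeping of the exponents $q_j$ and $\lambda_j$ attached to the columns of $\hat{\Psi}(\tau)$.

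First I would compute the exponential torus. By its definition it is ${\rm Gal}(E/F)$ with $F=\mathbb{C}((\tau))\bigl(\tau^{\alpha-1},\tau^{2\alpha-2}\bigr)$ and $E=F\bigl({\rm e}^{1/\tau},{\rm e}^{2/\tau}\bigr)$, and it embeds into $(\mathbb{C}^*)^4$ as the diagonal group of those ${\rm diag}(c_1,c_2,c_3,c_4)$ that respect every $\mathbb{Z}$-linear relation among the $q_j$. Reading off $q_1=1$, $q_2=0$, $q_3=q_4=2$, the relations are $q_2=0$ and $q_3=q_4=2q_1$; equivalently ${\rm e}^{q_2/\tau}=1\in F$ and ${\rm e}^{q_3/\tau}={\rm e}^{q_4/\tau}=\bigl({\rm e}^{q_1/\tau}\bigr)^2$, so that $E=F\bigl({\rm e}^{1/\tau}\bigr)$ with ${\rm e}^{1/\tau}$ transcendental over $F$. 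An automorphism $\sigma$ is therefore determined by $\sigma\bigl({\rm e}^{1/\tau}\bigr)=\lambda\, {\rm e}^{1/\tau}$, $\lambda\in\mathbb{C}^*$, which forces $c_1=\lambda$, $c_2=1$, $c_3=c_4=\lambda^2$. Since $\sigma$ fixes $\hat{H}(\tau)$ and $\tau^{\Lambda}$ (all their entries lie in $F$) and acts on ${\rm e}^{Q/\tau}$ by right multiplication by ${\rm diag}(\lambda,1,\lambda^2,\lambda^2)$, one obtains $\sigma(\hat{\Psi})=\hat{\Psi}\,{\rm diag}(\lambda,1,\lambda^2,\lambda^2)$, which is precisely the stated matrix form of $\mathcal{T}$.

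Next I would compute the formal monodromy from its defining relation $\hat{\Psi}\bigl(\tau\, {\rm e}^{2\pi {\rm i}}\bigr)=\hat{\Psi}(\tau)\hat{M}_0$. The key observation is that each factor of $\hat{\Psi}=\hat{H}\,\tau^{\Lambda}\,{\rm e}^{Q/\tau}$ transforms simply under $\tau\mapsto \tau\, {\rm e}^{2\pi {\rm i}}$: the entries of $\hat{H}(\tau)$ are (possibly divergent) power series in $\tau$, hence single-valued; ${\rm e}^{Q/\tau}$ is single-valued because $1/\bigl(\tau\, {\rm e}^{2\pi {\rm i}}\bigr)=1/\tau$; and $\tau^{\Lambda}$ contributes the factor ${\rm e}^{2\pi {\rm i}\Lambda}$. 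Because ${\rm e}^{2\pi {\rm i}\Lambda}$ and ${\rm e}^{Q/\tau}$ are diagonal they commute, so the defining relation collapses to $\hat{M}_0={\rm e}^{2\pi {\rm i}\Lambda}$. Evaluating the diagonal entries with ${\rm e}^{2\pi {\rm i}(\alpha-1)}={\rm e}^{2\pi {\rm i}\alpha}$ and ${\rm e}^{2\pi {\rm i}(2\alpha-2)}={\rm e}^{4\pi {\rm i}\alpha}$ yields the displayed $\hat{M}_0$.

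Since everything is diagonal and the Poincar\'{e} rank is $1$ (so that the $q_j/\tau$ are themselves single-valued and no permutation of the columns occurs), the computation is routine rather than delicate. The one place where a slip could occur — and the only point deserving real attention — is the identification of \emph{all} $\mathbb{Z}$-linear relations among the $q_j$: it is exactly the relation $q_3=q_4=2q_1$ that ties $c_3,c_4$ to $c_1$ and cuts the naive four-parameter diagonal group down to the one-parameter torus $\lambda\mapsto{\rm diag}(\lambda,1,\lambda^2,\lambda^2)$. I would also flag that the off-diagonal entries of $\hat{H}$ (the divergent ones that carry the Stokes data exploited later in the argument) play no role in either invariant here, precisely because they are single-valued and are fixed by the exponential torus.
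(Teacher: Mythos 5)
Your computation is correct and is exactly the (omitted) argument the paper intends: the Proposition is stated without proof as an immediate consequence of Proposition~\ref{for} and the two Definitions, and you supply precisely that bookkeeping — the $\mathbb{Z}$-linear relations $q_2=0$, $q_3=q_4=2q_1$ force $\mathcal{T}=\{{\rm diag}(\lambda,1,\lambda^2,\lambda^2)\}$, and $\hat{M}_0={\rm e}^{2\pi{\rm i}\Lambda}$ gives the stated diagonal entries since ${\rm e}^{2\pi{\rm i}(\alpha-1)}={\rm e}^{2\pi{\rm i}\alpha}$ and ${\rm e}^{2\pi{\rm i}(2\alpha-2)}={\rm e}^{4\pi{\rm i}\alpha}$. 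No gaps.
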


 The application of Definition \ref{st} to the divergent power series $\hat{\varphi}(\tau)$ and $\hat{\phi}(\tau)$ gives
 the following sets of admissible singular directions:
 \[
 \Theta_2=\{ \theta=\arg (0-1)=\arg (0-2)=\pi \}
 \]
 for the series $\hat{\varphi}(\tau)$ and
 \[
 \Theta_3=\{ \theta=\arg (2-1)=\arg (2-0)=0 \}
 \]
 for the series $\hat{\phi}(\tau)$.

 In the next lemma, we compute the 1-sums of the divergent power series $\hat{\varphi}(\tau)$ and
 $\hat{\phi}(\tau)$. These 1-sums illustrate explicitly the dependence of the power series \smash{$\hat{\varphi}(\tau)$} and
 \smash{$\hat{\phi}(\tau)$} on the suggested admissible singular direction.
 Denote $a={\rm Re} (\alpha)$ and by $[|a|]$, $[a]$ the integer parts of the real numbers $|a|$ and $a$, respectively.
 \begin{Lemma}\label{sum}
 Under the above notations, we have
 \begin{enumerate} 	\itemsep=0pt
 \item[$1.$]
 	Assume that $\alpha\notin\mathbb{N}$. Then for every direction $\theta \neq \pi$ the function
 	 \[
 	 \varphi_{\theta}(\tau)=2 (\alpha-1)
 	 \int_0^{+\infty {\rm e}^{{\rm i} \theta}}
 	 (1+\zeta)^{\alpha-2} {\rm e}^{-\frac{\zeta}{\tau}} {\rm d} \zeta
 	 \]
 	 defines the $1$-sum of the power series $\hat{\varphi}(\tau)$ in such a direction.

 \item[$2.$]
 Assume that $\alpha\notin\mathbb{Z}_{\leq 0}$. Then for every direction $\theta \neq 0$ the function
 \[
 \phi_{\theta}(\tau)=\int_0^{+\infty {\rm e}^{{\rm i} \theta}}
 (1-\zeta)^{-\alpha} {\rm e}^{-\frac{\zeta}{\tau}} {\rm d} \zeta
 \]
 defines the $1$-sum of the power series $\hat{\phi}(\tau)$ in such a direction.

 \end{enumerate}

 When ${\rm Re} (\alpha) \leq 2$ $($resp.\ ${\rm Re} (\alpha) \geq 0)$ the function $\varphi_{\theta}(\tau)$
 $($resp.\ $\phi_{\theta}(\theta))$ is a holomorphic function in the open unlimited disc
 \[
 \mathcal{D}_{\theta}=\biggl\{ \tau\in\mathbb{C} \mid {\rm Re} \biggl(\frac{{\rm e}^{{\rm i} \theta}}{\tau}\biggr) > 0 \biggr\}
 \]
 for any direction $\theta \neq \pi$ $($resp.\ $\theta\neq 0)$.
 Otherwise, they are holomorphic functions in the open bounded disc
 \[
 \mathcal{D}_{\theta}(1)=\biggl\{ \tau\in\mathbb{C} \mid {\rm Re} \biggl(\frac{{\rm e}^{{\rm i} \theta}}{\tau}\biggr) > 1 \biggr\} .
 \]
 \end{Lemma}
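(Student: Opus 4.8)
The plan is to obtain both formulas by explicitly carrying out the Borel--Laplace ($k=1$) summation whose existence is already guaranteed by Theorem~\ref{R}. Since that theorem tells us the entries of $\hat H(\tau)$ are $1$-summable along every non-singular direction, it suffices to \emph{compute} their $1$-sums, and by construction the $1$-sum of a Gevrey-$1$ series $\hat f(\tau)=\sum_{n\ge 1}a_n\tau^n$ in a direction $\theta$ is $\int_0^{+\infty{\rm e}^{{\rm i}\theta}}g(\zeta){\rm e}^{-\zeta/\tau}\,{\rm d}\zeta$, where $g(\zeta)=\sum_{n\ge 1}\frac{a_n}{(n-1)!}\zeta^{n-1}$ is the formal Borel transform continued analytically along the ray $\arg\zeta=\theta$. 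The identity $\int_0^{+\infty{\rm e}^{{\rm i}\theta}}\zeta^{m}{\rm e}^{-\zeta/\tau}\,{\rm d}\zeta=m!\,\tau^{m+1}$, valid whenever ${\rm Re}({\rm e}^{{\rm i}\theta}/\tau)>0$, then forces the Laplace transform to be asymptotic to $\hat f$ and hence to be its $1$-sum. So the proof splits into (i) identifying the two Borel transforms in closed form, (ii) locating their singularities, and (iii) estimating the integrals to pin down the domains of holomorphy.

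First I would carry out (i). For $\hat\varphi(\tau)=2\sum_{n\ge1}(\alpha-1)_n\tau^n$ the Borel transform is $2\sum_{m\ge0}\frac{(\alpha-1)_{m+1}}{m!}\zeta^{m}$; factoring $(\alpha-1)_{m+1}=(\alpha-1)(\alpha-2)_m$ and using $\frac{(\alpha-2)_m}{m!}=\binom{\alpha-2}{m}$ turns this, by the binomial theorem, into $2(\alpha-1)\sum_{m\ge0}\binom{\alpha-2}{m}\zeta^m=2(\alpha-1)(1+\zeta)^{\alpha-2}$. Similarly, for $\hat\phi(\tau)=\sum_{n\ge0}\alpha^{(n)}\tau^{n+1}$ the Borel transform is $\sum_{n\ge0}\frac{\alpha^{(n)}}{n!}\zeta^{n}$, and since $\frac{\alpha^{(n)}}{n!}=\binom{-\alpha}{n}(-1)^n$ this sums to $(1-\zeta)^{-\alpha}$. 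Substituting these closed forms into the Laplace integral reproduces exactly the stated $\varphi_\theta$ and $\phi_\theta$, the prefactor $2(\alpha-1)$ being pulled outside.

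Step (ii) is then immediate. The function $2(\alpha-1)(1+\zeta)^{\alpha-2}$ is holomorphic on $\mathbb{C}$ cut along $(-\infty,-1]$, with its only finite branch point at $\zeta=-1$, so its single singular direction is $\arg(-1)=\pi$; likewise $(1-\zeta)^{-\alpha}$ has its branch point at $\zeta=1$, giving singular direction $0$. These match the sets $\Theta_2=\{\pi\}$ and $\Theta_3=\{0\}$ found earlier, and for every other $\theta$ the ray $\arg\zeta=\theta$ stays a positive distance from the branch point, so the continued Borel transform is well defined along it. Combining this with Theorem~\ref{R}, each integral is the genuine $1$-sum of the corresponding entry of $\hat H(\tau)$ along $\theta$, which is the first assertion of the lemma.

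The main obstacle is (iii), the domains of holomorphy. Along the ray one has $\bigl|(1+\zeta)^{\alpha-2}\bigr|\sim r^{{\rm Re}(\alpha)-2}$ and $\bigl|(1-\zeta)^{-\alpha}\bigr|\sim r^{-{\rm Re}(\alpha)}$ as $r=|\zeta|\to\infty$, while ${\rm Re}(\zeta/\tau)=r\,{\rm Re}({\rm e}^{{\rm i}\theta}/\tau)$. When the Borel transform stays bounded on the ray, i.e.\ ${\rm Re}(\alpha)\le 2$ for $\varphi_\theta$ and ${\rm Re}(\alpha)\ge 0$ for $\phi_\theta$, the factor ${\rm e}^{-\zeta/\tau}$ alone forces convergence, locally uniformly in $\tau$, on the whole half-plane $\mathcal{D}_\theta=\{{\rm Re}({\rm e}^{{\rm i}\theta}/\tau)>0\}$, whence holomorphy there by differentiation under the integral sign. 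In the remaining case the integrand grows polynomially, $|g(\zeta)|\le C(1+r)^{\beta}$ with $\beta>0$, and I would majorize using the elementary bound $(1+r)^{\beta}\le M_\beta\,{\rm e}^{r}$ (valid for all $r\ge 0$ since $\ln(1+r)\le r$ makes $\beta\ln(1+r)-r$ bounded above); then $\int_0^{\infty}(1+r)^{\beta}{\rm e}^{-r\,{\rm Re}({\rm e}^{{\rm i}\theta}/\tau)}\,{\rm d}r$ converges precisely for ${\rm Re}({\rm e}^{{\rm i}\theta}/\tau)>1$, i.e.\ on the bounded disc $\mathcal{D}_\theta(1)$. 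The delicate point I expect to require the most care is extracting the sharp threshold $1$ from this majorization — uniformly in $\theta\neq\pi$ (resp.\ $\theta\neq0$) — rather than an $\alpha$-dependent constant, and verifying that the stated disc is the domain on which holomorphy can be asserted cleanly.
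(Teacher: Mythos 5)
Your proposal is correct and follows essentially the same route as the paper: compute the formal Borel transforms in closed form as $2(\alpha-1)(1+\zeta)^{\alpha-2}$ and $(1-\zeta)^{-\alpha}$, locate the branch points at $\zeta=-1$ and $\zeta=1$, bound the growth along non-singular rays (bounded when ${\rm Re}(\alpha)\leq 2$, resp.\ ${\rm Re}(\alpha)\geq 0$, and of exponential size at most $1$ otherwise via $(1+r)^{\beta}\leq M_{\beta}{\rm e}^{r}$), and conclude that the Laplace transforms define the $1$-sums on $\mathcal{D}_{\theta}$ or $\mathcal{D}_{\theta}(1)$ accordingly. The only slip is the word ``precisely'' in your last step --- the polynomially growing integrand actually converges on all of $\mathcal{D}_{\theta}$, and the threshold $1$ is merely what the crude exponential majorization (which the paper also uses) guarantees --- but this does not affect the stated conclusion.
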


 \begin{proof}
 Consider the divergent power series $\hat{\varphi}(\tau)$ and $\hat{\phi}(\tau)$ defined by Proposition \ref{for}. For the corresponding
 formal Borel transforms, we obtain the power series
 \begin{gather*}
 \hat{\mathcal{B}}_1 \hat{\varphi}(\zeta)
 =
 2 \sum_{n=0}^{\infty} (\alpha-1)_{n+1} \frac{\zeta^n}{n!}=
 2 (\alpha-1) \sum_{n=0}^{\infty} (\alpha-2)_n \frac{\zeta^n}{n!},\qquad
 \hat{\mathcal{B}}_1 \hat{\phi}(\zeta)
 =
 \sum_{n=0}^{\infty} \alpha^{(n)} \frac{\zeta^n}{n!} .
 \end{gather*}
 Both of the series $\hat{\mathcal{B}}_1 \hat{\varphi}(\zeta)$ and $\hat{\mathcal{B}}_1 \hat{\phi}(\zeta)$ are convergent near the origin
 of the Borel $\zeta$-plane with finite radiuses of convergence. Therefore, both of the divergent series $\hat{\varphi}(\tau)$
 and $\hat{\phi}(\tau)$ are Gevrey series of order 1. The functions
 \[
 \varphi(\zeta)=2 (\alpha-1) (1+\zeta)^{\alpha-2},\qquad
 \phi(\zeta)=(1-\zeta)^{-\alpha}
 \]
 present the sums of the power series $\hat{\mathcal{B}}_1 \hat{\varphi}(\zeta)$ and $\hat{\mathcal{B}}_1 \hat{\phi}(\zeta)$, respectively.

 Consider the function $(1+\zeta)^{\alpha-2}$. We have that $\bigl|(1+\zeta)^{\alpha-2}\bigr|=A |1+\zeta|^{{\rm Re}(\alpha) -2}$, where
 $A={\rm e}^{-{\rm Im} (\alpha) \arg (1+\zeta)}$. Let $\theta=\arg (\zeta)$. If ${\rm Re} (\alpha) -2 \leq 0$, then
 \[
 \frac{A}{|1+\zeta|^{2 - {\rm Re}(\alpha)}} \leq A
 \]
 when $\cos \theta \geq 0$, while
 \[
 \frac{A}{|1+\zeta|^{2 - {\rm Re}(\alpha)}} \leq \frac{A}{|\sin \theta|^{2 - {\rm Re}(\alpha)}}
 \]
 when $\cos \alpha < 0$. If ${\rm Re} (\alpha) - 2 >0$, then
 \[
 A |1+\zeta|^{{\rm Re}(\alpha) -2} \leq B {\rm e}^{|\zeta|}
 \]
 for an appropriate constant $B > 0$. Therefore, the function $\varphi(\zeta)$ is of exponential size at most~1 at~$\infty$
 along any direction $\theta \neq \pi$ from $0$ to $+\infty {\rm e}^{{\rm i} \theta}$. Moreover, the function $\varphi(\zeta)$ is
 continued analytically along any such a direction. Then the corresponding Laplace transform $(\mathcal{L}_{\theta} \varphi)(\tau)$
 is well defined and gives the 1-sum of the divergent power series $\hat{\varphi}(\tau)$ in such a direction. If we denote
 by $\varphi_{\theta}(\tau)$ this 1-sum, we have that
 \[
 \varphi_{\theta}(\tau)=2 (\alpha-1)
 \int_0^{+\infty {\rm e}^{{\rm i} \theta}}
 (1+\zeta)^{\alpha-2} {\rm e}^{-\frac{\zeta}{\tau}} {\rm d} \zeta
 \]
 for every $\theta \neq \pi$. From the above estimates, it follows that when ${\rm Re} (\alpha) -2 \leq 0$, the 1-sum
 $\varphi_{\theta}(\tau)$ is a holomorphic function in the open unlimited disc
 \begin{equation}
 \mathcal{D}_{\theta}=\biggl\{ \tau\in\mathbb{C} \mid {\rm Re} \biggl(\frac{{\rm e}^{{\rm i} \theta}}{\tau}\biggr) > 0 \biggr\} ,\label{d}
 \end{equation}
 whose opening is $\pi$.
 When ${\rm Re} (\alpha) - 2 > 0$, the 1-sum is a holomorphic function in the open bounded disc
 \begin{equation}
 \mathcal{D}_{\theta}(1)=\biggl\{ \tau\in\mathbb{C} \mid {\rm Re} \biggl(\frac{{\rm e}^{{\rm i} \theta}}{\tau}\biggr) > 1 \biggr\}\label{d1}
 \end{equation}
 with opening $ < \pi$.

 Using similar arguments, we find that for any direction $\theta \neq 0$ the Laplace transform
 \[
 \phi_{\theta}(\tau)=\int_0^{+\infty {\rm e}^{{\rm i} \theta}}
 (1-\zeta)^{-\alpha} {\rm e}^{-\frac{\zeta}{\tau}} {\rm d} \zeta
 \]
 defines the 1-sum of the divergent power series $\hat{\phi}(\tau)$ in such a direction. When ${\rm Re} (\alpha) \geq 0$,
 the 1-sum $\phi_{\theta}(\tau)$ is a holomorphic function in the disc $\mathcal{D}_{\theta}$ from~\eqref{d}. Otherwise, the
 1-sum is a~holomorphic function in the disc $\mathcal{D}_{\theta}(1)$ from~\eqref{d1}.
%This ends the proof.
 \end{proof}

 \begin{Remark}\label{sum1}
 Let $I =(-\pi, \pi) \subset \mathbb{R}$ and $J=(0, 2 \pi) \subset \mathbb{R}$. When we move the direction $\theta\in I$, the
 holomorphic functions $\varphi_{\theta}(\tau)$ glue together analytically and define a holomorphic function~$\tilde{\varphi}(\tau)$ on a sector $\widetilde{\mathcal{D}}_1$ with opening
 $3 \pi$, $-\frac{3 \pi}{2} < \arg (\tau) < \frac{3 \pi}{2}$ when ${\rm Re}(\alpha) \leq 2$ or on a sector
 \[
 \widetilde{\mathcal{D}}_1(1)=\bigcup _{\theta\in I} \mathcal{D}_{\theta}(1)
 \]
 with opening $> \pi$ when ${\rm Re}(\alpha) > 2$.
 Similarly, when we move the direction $\theta\in J$, the
 holomorphic functions $\phi_{\theta}(\tau)$ glue together analytically and define a holomorphic function
 $\tilde{\phi}(\tau)$ on a sector~$\widetilde{\mathcal{D}}_2$ with opening
 $3 \pi$, $-\frac{\pi}{2} < \arg (\tau) < \frac{5 \pi}{2}$ when ${\rm Re}(\alpha) \geq 0$ or on a sector
 \[
 \widetilde{\mathcal{D}}_2(1)=\bigcup _{\theta\in J} \mathcal{D}_{\theta}(1)
 \]
 with opening $> \pi$ when ${\rm Re}(\alpha) < 0$. On these sectors, the functions $\tilde{\varphi}(\tau)$ and
 $\tilde{\phi}(\tau)$ are asymptotic to the power series $\hat{\varphi}(\tau)$ and \smash{$\hat{\phi}(\tau)$}, respectively,
 in Gevrey 1-sense and define the 1-sums of these power series there. Their restrictions on $\mathbb{C}^*$ are multivalued
 functions. In any direction~${\theta\neq \pi}$ (resp.\ $\theta\neq 0$) the function $\tilde{\varphi}(\tau)$
 \big(resp.\ \smash{$\tilde{\phi}(\tau)$}\big) has only one value which coincides with the function~$\varphi_{\theta}(\tau)$
 (resp.\ $\phi_{\theta}(\tau)$) defined by Lemma \ref{sum}. Near the singular direction $\theta=\pi$
 (resp.\ $\theta=0$) the function $\tilde{\varphi}(\tau)$ \big(resp.\ $\tilde{\phi}(\tau)$\big) has two different values:
 $\varphi^{+}_{\pi}(\tau)=\varphi_{\pi+\varepsilon}(\tau)$
 \big(resp.~${\phi^{+}_{0}(\tau)=\phi_{0+\varepsilon}(\tau)}$\big) and $\varphi^{-}_{\pi}(\tau)=\varphi_{\pi-\varepsilon}(\tau)$
 (resp.\ $\phi^{-}_{0}(\tau)=\phi_{0-\varepsilon}(\tau)$) for a small number $\varepsilon>0$.

 \end{Remark}

 Replacing the divergent power series entries of the matrix \smash{$\hat{H}(\tau)$} with their 1-sums, we get an actual fundamental
 matrix solution at the origin of the system~\eqref{eq}. More precisely,
 denote $F(\tau)=\tau^{\Lambda} \exp\bigl(\frac{Q}{\tau}\bigr)$.

 \begin{Proposition}\label{ac}
 For every non-singular direction $\theta$, the system~\eqref{eq} possesses an unique actual fundamental matrix solution
 at the origin in the form
 \begin{equation}
 \Psi_{\theta}(\tau)=H_{\theta}(\tau) F_{\theta}(\tau) ,\label{afms}
 \end{equation}
 where $F_{\theta}(\tau)$ is the branch of the matrix $F(\tau)$ for $\theta=\arg (\tau)$. The matrix $H_{\theta}(\tau)$
 is given by
 \[
 H_{\theta}(\tau)=\begin{pmatrix}
 1 & 2 + h_{12}(\tau) & 2 \tau & h_{14}(\tau)\\
 0 & 1 & -\alpha \tau^2 & -\alpha \tau^2\\
 0 & 0 & 1 & 0\\
 0 & 0 & 0 & 1
 \end{pmatrix} .
 \]
 The entries $h_{12}(\tau)$ and $h_{14}(\tau)$ of the matrix $H_{\theta}(\tau)$ are defined as follows:
 \begin{enumerate}
 	\itemsep=0pt
 \item[$1.$]
 If $\alpha\in\mathbb{N}$, then $h_{14}(\tau)=\phi_{\theta}(\tau)$, where $\phi_{\theta}(\tau)$ is defined by Lemma~$\ref{sum}$
 and extended by Remark~$\ref{sum1}$. The element $h_{12}(\tau)$ coincides with the function $\hat{\varphi}(\tau)$ from
 \eqref{fs1}.

 \item[$2.$]
 If $\alpha\in\mathbb{Z}_{\leq 0}$, then $h_{12}(\tau)=\varphi_{\theta}(\tau)$, where $\varphi_{\theta}(\tau)$ is defined
 by Lemma~$\ref{sum}$ and extended by Remark~$\ref{sum1}$. The element $h_{14}(\tau)$ coincides with the function $\hat{\phi}(\tau)$
 from~\eqref{fs2}.

 \item[$3.$]
 If $\alpha\notin\mathbb{Z}$, then $h_{12}(\tau)=\varphi_{\theta}(\tau)$, $h_{14}(\tau)=\phi_{\theta}(\tau)$,
 where $\varphi_{\theta}(\tau)$ and $\phi_{\theta}(\tau)$ are defined by Lemma~$\ref{sum}$ and extended by Remark~$\ref{sum1}$.
\end{enumerate}
 	
 Near the singular direction $\theta=0$ or $\theta=\pi$, the system~\eqref{eq} possesses two different actual
 fundamental matrix solutions at the origin in the form
 \[
 \Psi_{0}^+(\tau)=\Psi_{0+\varepsilon}(\tau) \qquad {\rm and} \qquad \Psi_0^{-}(\tau)=\Psi_{0-\varepsilon}(\tau)
 \] 	
 or
 \[
 \Psi_{\pi}^+(\tau)=\Psi_{\pi+\varepsilon}(\tau) \qquad {\rm and} \qquad \Psi_{\pi}^{-}(\tau)=\Psi_{\pi-\varepsilon}(\tau),
 \] 	
 where $\Psi_{0\pm \varepsilon}(\tau)$ and $\Psi_{\pi \pm \varepsilon}(\tau)$ are defined by~\eqref{afms} for a small number
 $\varepsilon > 0$.
 \end{Proposition}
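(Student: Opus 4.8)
The plan is to obtain the actual fundamental matrix solution $\Psi_{\theta}(\tau)$ by 1-summing the formal fundamental matrix solution $\hat{\Psi}(\tau)=\hat{H}(\tau)\tau^{\Lambda}\exp(Q/\tau)$ of Proposition~\ref{for}, invoking Ramis' summability theorem (Theorem~\ref{R}). First I would observe that every entry of $\hat{H}(\tau)$ other than the $(1,2)$ and $(1,4)$ positions is convergent: the entries $2\tau$ and $-\alpha\tau^2$, together with the constants $1$, $0$ and $2$, are already holomorphic at the origin and hence coincide with their own 1-sums. The only candidates for divergence are $\hat{\varphi}(\tau)$ (the $(1,2)$ entry, up to the additive constant $2$) and $\hat{\phi}(\tau)$ (the $(1,4)$ entry), whose 1-summability and explicit 1-sums $\varphi_{\theta}(\tau)$, $\phi_{\theta}(\tau)$ were already established in Lemma~\ref{sum}.

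Next I would apply Theorem~\ref{R} directly. Along any non-singular direction $\theta$ the matrix $\hat{H}(\tau)$ is 1-summable entrywise, its 1-sum $H_{\theta}(\tau)$ is holomorphic on a sector bisected by $\theta$, and the product $\Psi_{\theta}(\tau)=H_{\theta}(\tau)F_{\theta}(\tau)$ with $F_{\theta}(\tau)=\tau^{\Lambda}\exp(Q/\tau)$ is an actual fundamental matrix solution of~\eqref{eq}. Because 1-summation operates entrywise and fixes convergent series, $H_{\theta}(\tau)$ retains exactly the displayed shape, with $h_{12}(\tau)$ the 1-sum of $\hat{\varphi}(\tau)$ and $h_{14}(\tau)$ the 1-sum of $\hat{\phi}(\tau)$. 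Uniqueness of $\Psi_{\theta}(\tau)$ in the form~\eqref{afms} then follows from the uniqueness of the formal solution $\hat{\Psi}(\tau)$ asserted in Proposition~\ref{for} together with the uniqueness of the 1-sum in a fixed non-singular direction.

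It remains only to read off the three cases from Proposition~\ref{for}. When $\alpha\in\mathbb{N}$ the series $\hat{\varphi}(\tau)$ reduces to the polynomial~\eqref{fs1}, so $h_{12}(\tau)=\hat{\varphi}(\tau)$, whereas $\hat{\phi}(\tau)$ is genuinely divergent and $h_{14}(\tau)=\phi_{\theta}(\tau)$. Symmetrically, when $\alpha\in\mathbb{Z}_{\leq 0}$ the series $\hat{\phi}(\tau)$ collapses to the polynomial~\eqref{fs2}, giving $h_{14}(\tau)=\hat{\phi}(\tau)$ and $h_{12}(\tau)=\varphi_{\theta}(\tau)$; and when $\alpha\notin\mathbb{Z}$ both entries are divergent and equal their Laplace-integral 1-sums $\varphi_{\theta}(\tau)$, $\phi_{\theta}(\tau)$ from Lemma~\ref{sum}.

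For the final assertion about the singular directions I would use that, by Definition~\ref{st}, the only admissible singular directions are $\theta=\pi$ for $\hat{\varphi}(\tau)$ and $\theta=0$ for $\hat{\phi}(\tau)$. At such a direction 1-summation is undefined along $\theta$ itself, so one passes to the neighbouring non-singular directions $\theta\pm\varepsilon$; by Remark~\ref{sum1} the values $\varphi^{\pm}_{\pi}$ and $\phi^{\pm}_{0}$ differ, which yields the two distinct solutions $\Psi_{0}^{\pm}=\Psi_{0\pm\varepsilon}$ and $\Psi_{\pi}^{\pm}=\Psi_{\pi\pm\varepsilon}$. I expect no serious obstacle here: the genuine analytic labour---the Borel transform, the exponential-growth estimates and the explicit Laplace integrals---was already carried out in Lemma~\ref{sum}, and Theorem~\ref{R} guarantees that the entrywise 1-sum of $\hat{H}(\tau)$ produces an honest fundamental solution, so no separate verification that $\Psi_{\theta}(\tau)$ solves~\eqref{eq} is required. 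The one point demanding a word of care is simply the legitimacy of summing $\hat{H}(\tau)$ position by position, which is exactly the content of the cited theorem; the present argument is therefore an assembly of Proposition~\ref{for}, Lemma~\ref{sum} and Theorem~\ref{R}.
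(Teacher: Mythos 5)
Your argument is correct and follows exactly the route the paper takes: the paper gives no separate proof of Proposition~\ref{ac} beyond the sentence preceding it, which says precisely that one replaces the divergent entries of $\hat{H}(\tau)$ by their $1$-sums, i.e., an application of Theorem~\ref{R} combined with Proposition~\ref{for} and Lemma~\ref{sum}. Your case analysis and the treatment of the singular directions $\theta=0,\pi$ via the neighbouring directions $\theta\pm\varepsilon$ match the paper's intent.
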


 Now we can compute the analytic invariants at the origin of the system~\eqref{eq}.

 \begin{Theorem}%\label{stokes}
 With respect to the actual fundamental matrix solution at the origin, defined by Proposition $\ref{ac}$, the system~\eqref{eq} has a
 Stokes matrix $St_{\pi}$ in the form
 \[
 St_{\pi}=\begin{pmatrix}
 1 & \mu_1 & 0 & 0\\
 0 & 1 & 0 & 0\\
 0 & 0 & 1 & 0\\
 0 & 0 & 0 & 1
 \end{pmatrix} .
 \]
 The multiplier $\mu_1$ is defined as follows:
 \begin{enumerate}
 \itemsep=0pt 	
 \item[$1.$]
 If ${\rm Re} (\alpha-1) > 0$, then
 \[
 \mu_1= 4 {\rm i} (\alpha-1) (-1)^{\alpha-1} \sin((2-\alpha) \pi) \Gamma(\alpha-1) .
 \]
 	
 \item[$2.$]
 If ${\rm Re} (\alpha-1) \leq 0$ but $\alpha\notin\mathbb{Z}_{\leq 0}$, then
 \[
 \mu_1=
 \frac{4 \pi {\rm i} (\alpha-1) (-1)^{\alpha-1}}{\Gamma(2-\alpha)} .
 \]
 	
 \item[$3.$]
 If $\alpha\in\mathbb{Z}_{\leq 0}$, then
 \[
 \mu_1=\frac{4 \pi {\rm i} (-1)^{-\alpha}}{(-\alpha)!} .
 \]
 \end{enumerate}

 Similarly, with respect to the actual fundamental matrix solution at the origin, defined by Proposition $\ref{ac}$, the system~\eqref{eq} has a
 Stokes matrix $St_0$ in the form
 \[
 St_0=\begin{pmatrix}
 1 & 0 & 0 & \mu_2\\
 0 & 1 & 0 & 0\\
 0 & 0 & 1 & 0\\
 0 & 0 & 0 & 1
 \end{pmatrix} .
 \]
 The multiplier $\mu_2$ is defined as follows:
 \begin{enumerate}
 \itemsep=0pt
 	\item[$1.$]
 	If ${\rm Re} (1-\alpha) > 0$, then
 	\[
 	\mu_2=
 	 -2 {\rm i} \sin((1-\alpha) \pi) \Gamma(1-\alpha) .
 	\]
 	
 	\item[$2.$]
 	If ${\rm Re} (1-\alpha) \leq 0$ but $\alpha\notin\mathbb{N}$, then
 	\[
 	\mu_2=-\frac{2 \pi {\rm i}}{\Gamma(\alpha)} .
 	\]
 	
 	\item[$3.$]
 	If $\alpha\in\mathbb{N}$, then
 	\[
 	\mu_2=-\frac{2 \pi {\rm i}}{(\alpha-1)!} .
 	\]
 \end{enumerate}
 \end{Theorem}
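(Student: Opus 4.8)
The plan is to reduce each Stokes matrix to a single off-diagonal entry and then to evaluate that entry as a difference of lateral Borel--Laplace sums. Writing $\Psi_{\theta}(\tau)=H_{\theta}(\tau)F_{\theta}(\tau)$ with $F_{\theta}(\tau)=\tau^{\Lambda}\exp(Q/\tau)$ as in Proposition~\ref{ac}, and noting that $F_{\pi+\varepsilon}=F_{\pi-\varepsilon}=:F$ on the overlapping sector, Definition~\ref{Sto} gives $St_{\pi}=F^{-1}\bigl(H_{\pi+\varepsilon}^{-1}H_{\pi-\varepsilon}\bigr)F$. Since $H_{\pi-\varepsilon}$ and $H_{\pi+\varepsilon}$ differ only in the $(1,2)$ entry (the $1$-sum $\varphi_{\theta}$ of $\hat{\varphi}$) and the first column of $H_{\theta}^{-1}$ is $(1,0,0,0)^{T}$, one gets $H_{\pi+\varepsilon}^{-1}H_{\pi-\varepsilon}=I+\bigl(\varphi^{-}_{\pi}-\varphi^{+}_{\pi}\bigr)E_{12}$, where $E_{12}$ is the matrix unit and $\varphi^{\pm}_{\pi}$ are the two values of Remark~\ref{sum1}. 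Conjugating by the diagonal $F$ multiplies $E_{12}$ by $F_{22}/F_{11}=\tau^{1-\alpha}{\rm e}^{-1/\tau}$, so
\[
\mu_{1}=(St_{\pi})_{12}=\bigl(\varphi^{-}_{\pi}(\tau)-\varphi^{+}_{\pi}(\tau)\bigr)\tau^{1-\alpha}{\rm e}^{-1/\tau}.
\]
The identical argument for $\theta=0$, using that only the $(1,4)$ entry $\phi_{\theta}$ jumps and that $F_{44}/F_{11}=\tau^{\alpha-1}{\rm e}^{1/\tau}$, yields $\mu_{2}=\bigl(\phi^{-}_{0}-\phi^{+}_{0}\bigr)\tau^{\alpha-1}{\rm e}^{1/\tau}$. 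In particular both matrices are automatically unipotent with a single nonzero off-diagonal entry; the content of the theorem is the value of that entry.

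Next I would compute the jump $\varphi^{-}_{\pi}-\varphi^{+}_{\pi}$ from the Borel transform found in the proof of Lemma~\ref{sum}, namely $\varphi(\zeta)=2(\alpha-1)(1+\zeta)^{\alpha-2}$, whose only singularity on the ray $\arg\zeta=\pi$ is the branch point at $\zeta=-1$. Deforming the two lateral Laplace contours of Lemma~\ref{sum} onto the singular ray, the parts over $0\le|\zeta|<1$ cancel and the difference collapses to the discontinuity of $(1+\zeta)^{\alpha-2}$ across the cut $(-\infty,-1]$. The substitution $\zeta=-1-\eta$ factors out ${\rm e}^{1/\tau}$ and produces the Euler integral $\int_{0}^{\infty}\eta^{\alpha-2}{\rm e}^{\eta/\tau}\,{\rm d}\eta=(-1)^{\alpha-1}\Gamma(\alpha-1)\tau^{\alpha-1}$, convergent for ${\rm Re}(\alpha-1)>0$ on $\mathcal{D}_{\pi}$, while the branch jump supplies the factor $2{\rm i}\sin\bigl((2-\alpha)\pi\bigr)$. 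Multiplying by $\tau^{1-\alpha}{\rm e}^{-1/\tau}$ cancels all $\tau$- and exponential-dependence, confirming that $\mu_{1}$ is constant and giving case~$1$. Case~$2$ follows by analytic continuation in $\alpha$: the reflection formula $\Gamma(\alpha-1)\Gamma(2-\alpha)=\pi/\sin\bigl((\alpha-1)\pi\bigr)$, combined with $\sin\bigl((2-\alpha)\pi\bigr)=\sin\bigl((\alpha-1)\pi\bigr)$, turns $\sin\bigl((2-\alpha)\pi\bigr)\Gamma(\alpha-1)$ into $\pi/\Gamma(2-\alpha)$. When $\alpha\in\mathbb{Z}_{\le 0}$ the Borel transform has a pole rather than a branch point, so the jump is $2\pi{\rm i}$ times the residue of $(1+\zeta)^{\alpha-2}{\rm e}^{-\zeta/\tau}$ at $\zeta=-1$; extracting the coefficient of $\eta^{1-\alpha}$ in ${\rm e}^{\eta/\tau}$ produces case~$3$. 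The computation of $\mu_{2}$ is word-for-word the same with $\phi(\zeta)=(1-\zeta)^{-\alpha}$, branch point at $\zeta=+1$, cut $[1,\infty)$, and $\int_{0}^{\infty}\eta^{-\alpha}{\rm e}^{-\eta/\tau}\,{\rm d}\eta=\Gamma(1-\alpha)\tau^{1-\alpha}$ on $\mathcal{D}_{0}$; the three cases again correspond to the integrable branch point, its continuation through $\Gamma(\alpha)$, and the residue at the pole for $\alpha\in\mathbb{N}$.

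The main obstacle is the bookkeeping of branches and orientations in the jump integral, on which all the reported constants depend. One must fix, consistently with Remark~\ref{sum1}, which lateral sum lies on which side of the cut, hence the sign in $(1+\zeta)^{\alpha-2}=\eta^{\alpha-2}{\rm e}^{\pm{\rm i}\pi(\alpha-2)}$ and the precise meaning of $(-1)^{\alpha-1}$ arising from $(-1/\tau)^{1-\alpha}$ on the Riemann surface of the logarithm; a misstep here would flip the sign of $\mu_{1}$ or misplace the power of $(-1)$. The secondary difficulty is justifying that the lateral Laplace integrals may be deformed onto the singular ray and that, in the regimes ${\rm Re}(\alpha)>2$ (resp.\ ${\rm Re}(\alpha)<0$) where the direct integral only converges on the bounded disc $\mathcal{D}_{\theta}(1)$ of Lemma~\ref{sum}, the resulting expression still represents the same $1$-sum, so that the reflection-formula continuation is legitimate. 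Once these points are settled, the exact cancellation of the $\tau$-dependent factors provides an independent check that $St_{\pi}$ and $St_{0}$ are genuine constant unipotent matrices.
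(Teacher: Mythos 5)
Your proposal is correct and follows essentially the same route as the paper: the Stokes multiplier is the jump $\varphi^{-}_{\pi}-\varphi^{+}_{\pi}$ (resp.\ $\phi^{-}_{0}-\phi^{+}_{0}$) of the lateral Borel--Laplace sums, computed by deforming the difference of the two Laplace contours onto the Borel singularity at $\zeta=-1$ (resp.\ $\zeta=1$) and evaluated through Gamma-function identities, with residue calculus handling the integer cases. The only differences are minor: you make explicit the reduction $St_{\pi}=I+\mu_{1}E_{12}$ that the paper leaves implicit, and you obtain case~2 by analytic continuation in $\alpha$ plus the reflection formula, whereas the paper keeps a non-degenerate Hankel contour and invokes Hankel's integral representation of $1/\Gamma$, which covers cases~1 and~2 uniformly without a separate continuation step.
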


 \begin{proof}
 From the Definition \ref{Sto}, it follows that the multiplier $\mu_1$ is computed by comparing the solutions
 $\varphi^{-}_{\pi}(\tau)$ and $\varphi^{+}_{\pi}(\tau)$. Denote
 \[
 J_1= \varphi^{-}_{\pi}(\tau) - \varphi^{+}_{\pi}(\tau) .
 \]
 Then
 \[
 J_1=
 2 (\alpha-1) \int_\gamma (1+\zeta)^{\alpha-2} {\rm e}^{-\zeta/\tau} {\rm d} \zeta\qquad {\rm for}\
 \frac{\pi}{2} < \arg (\tau) < \frac{3 \pi}{2} ,
 \]
 where $\gamma=(\pi-\varepsilon) - (\pi+\varepsilon)$ for a small number $\varepsilon > 0$. Assume that $\alpha\notin\mathbb{Z}_{\leq 0}$. Then
 without changing the integral, we can deform the path $\gamma$ into a Hankel type contour $\gamma_1$ winding around
 the branch cut on $\mathbb{R}^{-}$ of the function $(1+\zeta)^{\alpha-2}$, starting on $-\infty$, encircling $-1$ in the positive sense and returning to $-\infty$.
 Then $J_1$ becomes
 \[
 J_1=
 2 (\alpha-1) \int_{\gamma_1}
 (1+\zeta)^{\alpha-2} {\rm e}^{-\zeta/\tau} {\rm d} \zeta .
 \]
 The transformation $1+\zeta=u$ takes the contour $\gamma_1$ into a Hankel type contour $\gamma_2$ going along the branch cut
 on $\mathbb{R}^{-}$ of the function $u^{\alpha-2}$,
 starting on $-\infty$, encircling $0$ in the positive sense and backing to $-\infty$. Then we have
 \[
 J_1=
 2 (\alpha-1) {\rm e}^{1/\tau}\int_{\gamma_2}
 u^{\alpha-2} {\rm e}^{-u/\tau} {\rm d} u .
 \]
 Now the change $u/\tau=-\eta$ takes the contour $\gamma_2$ into itself and we find that
 \[
 J_1=
 2 (\alpha-1) (-1)^{\alpha-1} \tau^{\alpha-1} {\rm e}^{1/\tau}\int_{\gamma_2}
 \eta^{\alpha-2} {\rm e}^{\eta} {\rm d} \eta .
 \]
 To obtain the formula for $\mu_1$, we use the well-known Hankel's representation of the Gamma function $\Gamma(1-\alpha)$
 and reciprocal Gamma function $1/\Gamma(\alpha)$ when $\alpha\neq 0, -1, -2, \dots$
 as a contour integral (see \cite[1.6\,(1) and 1.6\,(2)]{BE})
 \begin{equation*}
 \int_{\gamma_2} \eta^{-\alpha} {\rm e}^{\eta} {\rm d} \eta=2 {\rm i} \sin (\alpha \pi) \Gamma(1-\alpha),\qquad
 \frac{1}{2 \pi {\rm i}} \int_{\gamma_2} \eta^{-\alpha} {\rm e}^{\eta} {\rm d} \eta=\frac{1}{\Gamma(\alpha)} .%\label{gam}
 \end{equation*}
 Hence
 \[
 \varphi^{-}_{\pi}(\tau) - \varphi^{+}_{\pi}(\tau)=
 4 {\rm i} (\alpha-1) (-1)^{\alpha-1} \sin((2-\alpha) \pi) \Gamma(\alpha-1)\tau^{\alpha-1} {\rm e}^{1/\tau}
 \]
 when ${\rm Re} (\alpha-1) > 0$, and
 \[
 \varphi^{-}_{\pi}(\tau) - \varphi^{+}_{\pi}(\tau)=
 \frac{4 \pi {\rm i} (\alpha-1) (-1)^{\alpha-1}}{\Gamma(2-\alpha)} \tau^{\alpha-1} {\rm e}^{1/\tau}
 \]
 when ${\rm Re} (\alpha-1) \leq 0$ but $\alpha\notin\mathbb{Z}_{\leq 0}$.

 Assume now that $\alpha\in\mathbb{Z}_{\leq 0}$. Then from the Cauchy's differential formula it follows that
 \[
 \varphi^{-}_{\pi}(\tau) - \varphi^{+}_{\pi}(\tau)
 =
 \frac{4 (\alpha-1) \pi {\rm i}}{(1-\alpha)!}
 \bigl[ D_{\zeta}^{1-\alpha} \bigl({\rm e}^{-\frac{\zeta}{\tau}}\bigr)_{| \zeta=-1}\bigr]
 =\frac{4 \pi {\rm i} (-1)^{-\alpha}}{(-\alpha)!} \tau^{\alpha-1} {\rm e}^{\frac{1}{\tau}} ,
 \]
 where \smash{$D^n_{\zeta}=\frac{d^{n}}{{\rm d} \zeta^n}$}.

 In a similar manner comparing the solutions $ \tau^{2 \alpha-2} {\rm e}^{2/\tau} \phi_0^{-}(\tau)$ and $ \tau^{2 \alpha-2} {\rm e}^{2/\tau} \phi_0^{+}(\tau)$,
 one can derive the multiplier $\mu_2$.
%This ends the proof.
\end{proof}

 Now we can describe the local differential Galois group $G$ at the origin of the system~\eqref{eq}.

 \begin{Theorem}\label{gal-inf}
 With respect to the formal and actual fundamental matrix solutions, given by Propositions $\ref{for}$ and~$\ref{ac}$, the connected
 component $G^0$ of the unit element of the local differential Galois group $G$ at the origin of the system~\eqref{eq} is
 defined as follows:
 \begin{enumerate}
 \itemsep=0pt 	
 \item[$1.$]
 	If $\alpha\in\mathbb{N}$, then
 	\[
 	 G^0=\left\{\begin{pmatrix}
 	 \lambda & 0 & 0 & \mu\\
 	 0 & 1 & 0 & 0\\
 	 0 & 0 & \lambda^2 & 0\\
 	 0 & 0 & 0 & \lambda^2
 	 \end{pmatrix}, \ 	 \lambda\in\mathbb{C}^*,\, \mu\in\mathbb{C}\right\} .
 	\]
 	
 \item[$2.$]
 If $\alpha\in\mathbb{Z}_{\leq 0}$, then
 		\[
 		G^0=\left\{\begin{pmatrix}
 		\lambda & \mu & 0 & 0\\
 		0 & 1 & 0 & 0\\
 		0 & 0 & \lambda^2 & 0\\
 		0 & 0 & 0 & \lambda^2
 		\end{pmatrix}, \
 		\lambda\in\mathbb{C}^*,\, \mu\in\mathbb{C}\right\} .
 		\]
 		
 \item[$3.$]
 If $\alpha\notin\mathbb{Z}$, then
 	\[
 	G^0=\left\{\begin{pmatrix}
 	\lambda & \mu & 0 & \nu\\
 	0 & 1 & 0 & 0\\
 	0 & 0 & \lambda^2 & 0\\
 	0 & 0 & 0 & \lambda^2
 	\end{pmatrix}, \
 	\lambda\in\mathbb{C}^*,\, \mu, \nu\in\mathbb{C}\right\} .
 	\] 		
 	\end{enumerate}
 \end{Theorem}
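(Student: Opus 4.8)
The plan is to invoke Ramis's density theorem (Theorem~\ref{Ra}): the local differential Galois group $G$ at the origin of the system~\eqref{eq} is the Zariski closure of the group generated by the formal monodromy $\hat M_0$, the exponential torus $\mathcal T$, and the two Stokes matrices $St_{\pi}$, $St_0$ computed above. The first step I would take is to notice that $\hat M_0$ already lies in $\mathcal T$: substituting $\lambda={\rm e}^{2\pi{\rm i}\alpha}$ into the parametrisation ${\rm diag}(\lambda,1,\lambda^2,\lambda^2)$ of $\mathcal T$ reproduces $\hat M_0={\rm diag}\bigl({\rm e}^{2\pi{\rm i}\alpha},1,{\rm e}^{4\pi{\rm i}\alpha},{\rm e}^{4\pi{\rm i}\alpha}\bigr)$. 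Hence the formal differential Galois group equals the connected one-dimensional torus $\mathcal T$, which is therefore contained in $G^0$ and adds nothing to $G$ beyond $\mathcal T$ itself; in particular the distinction between rational and irrational $\alpha$ is irrelevant here.

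Next I would read off from the preceding theorem which Stokes multiplier vanishes in each regime. When $\alpha\in\mathbb N$ the entry $\hat\varphi(\tau)$ is the polynomial~\eqref{fs1}, so there is no Stokes phenomenon in the direction $\pi$ and $\mu_1=0$ (consistently, the factor $\sin((2-\alpha)\pi)$ in the formula for $\mu_1$ vanishes), while $\mu_2=-2\pi{\rm i}/(\alpha-1)!\neq0$. Symmetrically, when $\alpha\in\mathbb Z_{\leq0}$ the entry $\hat\phi(\tau)$ is the polynomial~\eqref{fs2}, so $\mu_2=0$ while $\mu_1=4\pi{\rm i}(-1)^{-\alpha}/(-\alpha)!\neq0$. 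When $\alpha\notin\mathbb Z$ both sine factors are nonzero and both reciprocal Gamma values are finite and non-vanishing, so $\mu_1\neq0$ and $\mu_2\neq0$.

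It then remains to compute the Zariski closure. Each nontrivial Stokes matrix is unipotent with a single nonzero off-diagonal entry, in position $(1,2)$ for $St_{\pi}$ and $(1,4)$ for $St_0$, and the Zariski closure of the cyclic group it generates is the full one-parameter unipotent subgroup in that position. Writing $N_{12}$ and $N_{14}$ for the corresponding nilpotents, one has $N_{12}N_{14}=N_{14}N_{12}=0$, so the two unipotent subgroups commute, and the torus $\mathcal T$ normalises each of them. Multiplying a torus element ${\rm diag}(\lambda,1,\lambda^2,\lambda^2)$ against $I+\mu N_{12}+\nu N_{14}$ produces the off-diagonal entries $\lambda\mu$ and $\lambda\nu$, which range over all of $\mathbb C$ as $\mu,\nu$ vary; relabelling these entries yields exactly the three matrix groups stated in the theorem. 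Since the torus and the unipotent factors are all connected, the resulting solvable group is connected and equals $G=G^0$, so no separate connectedness argument is required.

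The step I expect to demand the most care is the bookkeeping of this last paragraph: verifying that the semidirect product of $\mathcal T$ with the relevant unipotent directions is already Zariski closed and that its dimension (two in cases~$1$ and~$2$, three in case~$3$) is not accidentally enlarged, together with the consistency check that the polynomial cases of Proposition~\ref{for} match precisely the vanishing of the corresponding Stokes multiplier.
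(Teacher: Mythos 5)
Your proposal is correct and follows the same overall strategy as the paper: Ramis's density theorem (Theorem~\ref{Ra}), identification of which Stokes multiplier vanishes via the polynomial cases of Proposition~\ref{for}, and an explicit Zariski-closure computation for the group generated by the exponential torus and the non-trivial Stokes matrices; the non-vanishing checks for $\mu_1$, $\mu_2$, the fact that a non-trivial unipotent element generates a Zariski-dense subgroup of the corresponding one-parameter unipotent group, the relation $N_{12}N_{14}=N_{14}N_{12}=0$, and the normalisation of the unipotent part by $\mathcal{T}$ all match the paper's computation and give the same three groups. The one point where you genuinely diverge is the treatment of the formal monodromy: you note that $\hat{M}_0={\rm diag}\bigl({\rm e}^{2\pi{\rm i}\alpha},1,{\rm e}^{4\pi{\rm i}\alpha},{\rm e}^{4\pi{\rm i}\alpha}\bigr)$ is literally the torus element with $\lambda={\rm e}^{2\pi{\rm i}\alpha}$, so the formal differential Galois group is the connected torus $\mathcal{T}$ in every case and no splitting into $\alpha\in\mathbb{Q}$ versus $\alpha\notin\mathbb{Q}$ is needed. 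The paper instead treats the cyclic group generated by $\hat{M}_0$ separately, distinguishes $\alpha\in\mathbb{Q}\setminus\mathbb{Z}$ from $\alpha\notin\mathbb{Q}$, and asserts that $G\neq G^0$ in the former case; your observation shows this case split is superfluous for determining $G^0$ (and in fact indicates $G=G^0$ there too), which is a mild but genuine simplification that costs nothing, since the theorem only concerns $G^0$.
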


 \begin{proof}
 From Theorem \ref{Ra}, it follows that the local differential Galois group $G$ at $\tau=0$ of the system~\eqref{eq}
 is the Zariski closure of the group generated by the formal differential Galois group
 and the Stokes matrices $St_{\pi}$ and $St_0$. Since when $\alpha\in\mathbb{Z}$ the formal monodromy $\hat{M}_0$ is equal
 to the identity matrix $I_4$ then the formal differential Galois group coincides with the exponential torus $\mathcal{T}$.
 Therefore, in this case $G$ is generated by the exponential torus and the
 Stokes matrices $St_{\pi}$ and $St_0$. Since when $\alpha\in\mathbb{N}$ the Stokes matrix $St_{\pi}$ is equal to $I_4$ the
 Galois group $G$ is generated by the exponential torus $\mathcal{T}$ and the Stokes matrix $St_0$. Hence for $\alpha\in\mathbb{N}$
 the differential Galois group coincides with its connected component $G^0$ of the unit element and
 \[
 G=G^0=\left\{\begin{pmatrix}
 \lambda & 0 & 0 & \mu\\
 0 & 1 & 0 & 0\\
 0 & 0 & \lambda^2 & 0\\
 0 & 0 & 0 & \lambda^2
 \end{pmatrix}, \ \lambda\in\mathbb{C}^*,\, \mu\in\mathbb{C}\right\} .
 \]
Similarly, when $\alpha\in\mathbb{Z}_{\leq 0}$ we have that $G$
 is generated topologically by $\mathcal{T}$ and $St_{\pi}$ and
 \[
 G=G^0=\left\{\begin{pmatrix}
 \lambda & \mu & 0 & 0\\
 0 & 1 & 0 & 0\\
 0 & 0 & \lambda^2 & 0\\
 0 & 0 & 0 & \lambda^2
 \end{pmatrix}, \ \lambda\in\mathbb{C}^*,\, \mu\in\mathbb{C}\right\} .
 \]

 When $\alpha\in\mathbb{Q}$ but $\alpha\notin\mathbb{Z}$, the formal differential Galois group is not connected since the group generated
 by $\hat{M}_0$ is not connected. However, in this case the connected component of the unit element of the formal differential
 Galois group coincides with $\mathcal{T}$. Therefore, in this case $G$ does not coincides with $G^0$ but $G^0$ is generated by $\mathcal{T}$ and
 the Stokes matrices $St_{\pi}$ and $St_0$. Hence
 \[
 G^0=\left\{\begin{pmatrix}
 \lambda & \mu & 0 & \nu\\
 0 & 1 & 0 & 0\\
 0 & 0 & \lambda^2 & 0\\
 0 & 0 & 0 & \lambda^2
 \end{pmatrix}, \ \lambda\in\mathbb{C}^*,\, \mu, \nu\in\mathbb{C}\right\} .
 \]

 In the last case when $\alpha\notin\mathbb{Q}$, we have that
 \[
 G=G^0=\left\{\begin{pmatrix}
 \lambda & \mu & 0 & \nu\\
 0 & 1 & 0 & 0\\
 0 & 0 & \lambda^2 & 0\\
 0 & 0 & 0 & \lambda^2
 \end{pmatrix}, \ \lambda\in\mathbb{C}^*,\, \mu, \nu\in\mathbb{C}\right\} .
 \]
 This ends the proof.
 \end{proof}

 Directly from Theorem \ref{gal-inf}, we obtain the following important result

 \begin{Theorem}\label{g0}
 The connected component of the unit element of the differential Galois group of the system~\eqref{nve2}
 is not Abelian.
 \end{Theorem}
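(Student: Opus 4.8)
The plan is to read the conclusion off directly from the explicit description of $G^0$ in Theorem~\ref{gal-inf}, after first relating that local group to the global Galois group of~\eqref{nve2}. Since \eqref{eq} is obtained from \eqref{nve2} by the coordinate change $t=1/\tau$, which induces an isomorphism of the coefficient fields $\mathbb{C}(t)\cong\mathbb{C}(\tau)$, the global differential Galois group of \eqref{nve2} over $\mathbb{C}(t)$ coincides with that of \eqref{eq} over $\mathbb{C}(\tau)$. By Mitschi's theorem (Theorem~\ref{Mi}) this global group is topologically generated by the local groups attached to the two singular points $t=0$ and $t=\infty$; in particular the local group at $t=\infty$ of \eqref{nve2}---which is precisely the local group $G$ at the origin of \eqref{eq} computed in Theorem~\ref{gal-inf}---sits inside the global group as a closed subgroup. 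A connected subgroup containing the identity is contained in the identity component, so the connected group $G^0$ of Theorem~\ref{gal-inf} lies inside the connected component of the global Galois group of \eqref{nve2}. Hence it suffices to prove that $G^0$ itself is not Abelian.

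To do this I would exhibit, in each of the three cases of Theorem~\ref{gal-inf}, two explicit non-commuting elements. Every one of the three families contains the diagonal torus element
\[
T(\lambda)={\rm diag}\bigl(\lambda,1,\lambda^2,\lambda^2\bigr),\qquad \lambda\in\mathbb{C}^*,
\]
obtained by setting every off-diagonal parameter to zero, together with a unipotent ``Stokes'' element $S$ obtained by setting $\lambda=1$ and keeping a single nonzero off-diagonal entry $\mu$. In the case $\alpha\in\mathbb{N}$ this entry sits in position $(1,4)$, and a direct $4\times4$ multiplication gives for the $(1,4)$ entries
\[
\bigl(T(\lambda)S\bigr)_{14}=\lambda\mu,\qquad \bigl(S T(\lambda)\bigr)_{14}=\lambda^2\mu,
\]
so that $T(\lambda)S\neq S T(\lambda)$ as soon as $\mu\neq0$ and $\lambda\neq1$. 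In the case $\alpha\in\mathbb{Z}_{\leq 0}$ the free entry sits in position $(1,2)$; the relevant $2\times2$ block is the standard affine group $\bigl(\begin{smallmatrix}\lambda&\mu\\0&1\end{smallmatrix}\bigr)$, whose two orders of multiplication produce $(1,2)$ entries $\lambda\mu$ and $\mu$, again distinct for $\mu\neq0$, $\lambda\neq1$. The remaining case $\alpha\notin\mathbb{Z}$ contains both of these structures and is therefore non-Abelian a fortiori. Since $\lambda$ ranges over all of $\mathbb{C}^*$, a choice $\lambda\neq1$ is always available, and the two elements genuinely lie in $G^0$.

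I expect no genuine obstacle here: the statement is essentially a corollary of Theorem~\ref{gal-inf}, and the only computational content is the matrix multiplication above. The conceptual reason behind the non-commutativity is worth isolating, however. The diagonal entry governing the first coordinate is $\lambda$, whereas the diagonal entry in position $(4,4)$---the one that pairs with the off-diagonal Stokes entry in position $(1,4)$---is $\lambda^2$. The mismatch $\lambda\neq\lambda^2$ is exactly what prevents the exponential torus from commuting with the Stokes unipotents, and it is this structural feature, inherited from the exponents $Q={\rm diag}(1,0,2,2)$ of Proposition~\ref{for}, that ultimately supplies the non-Abelian obstruction. Combined with the Morales--Ramis--Sim\'o criterion (Theorem~\ref{MRS}) applied to $(G_2)^0$, which contains $G^0$, this is what will yield the non-integrability asserted in Theorem~\ref{key}.
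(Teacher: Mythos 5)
Your proposal is correct and follows essentially the same route as the paper: reduce to the explicit local group $G^0$ at the origin of~\eqref{eq} from Theorem~\ref{gal-inf}, then exhibit a non-commuting pair consisting of a torus element ${\rm diag}\bigl(\lambda,1,\lambda^2,\lambda^2\bigr)$ and a unipotent Stokes element. The only cosmetic difference is that you justify the choice $\lambda\neq 1$ directly from the parametrization of $G^0$, whereas the paper rules out $\lambda=1$ by noting it would force ${\rm e}^{1/\tau}\in\mathbb{C}(\tau)$; both arguments are sound.
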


 \begin{proof}
 If we prove that the connected component $G^0$ of the unit element of the local differential Galois group at the origin of the system~\eqref{eq} is not Abelian
 group, then we will have that the connected component of the unit element of the differential Galois group of the system~\eqref{nve2} will be not Abelian too.
 To prove that $G^0$, defined by Theorem \ref{gal-inf}, is not an Abelian group it is enough to show that the matrices{\samepage
 \[
 T_{\lambda}=\begin{pmatrix}
 	\lambda & 0 & 0 & 0\\
 	0 & 1 & 0 & 0\\
 	0 & 0 & \lambda^2 & 0\\
 	0 & 0 & 0 & \lambda^2
 \end{pmatrix} \qquad {\rm and} \qquad
 S_{\mu, \nu}=\begin{pmatrix}
 	1 & \mu & 0 & \nu\\
 	0 & 1 & 0 & 0\\
 	0 & 0 & 1 & 0\\
 	0 & 0 & 0 & 1
 \end{pmatrix}
 \]
 do not commute.}

 When $\mu$ and $\nu$ do not vanish together, the commutator between $S_{\mu, \nu}$ and $T_{\lambda}$
 \[
 S_{\mu, \nu} T_{\lambda} S^{-1}_{\mu, \nu} T^{-1}_{\lambda}=
 \begin{pmatrix}
 1 & \mu (1-\lambda) & 0 & \nu \bigl(1- \lambda^{-1}\bigr)\\
 0 & 1 & 0 & 0\\
 0 & 0 & 1 & 0\\
 0 & 0 & 0 & 1
 \end{pmatrix}
 \]
 is not identically equal to the identity matrix. The condition $\lambda=1$ implies than for every~${\sigma\in G}$ we have that \smash{$\sigma\bigl({\rm e}^{\frac{1}{\tau}}\bigr)={\rm e}^{\frac{1}{\tau}}$}, i.e.,
 \smash{${\rm e}^{\frac{1}{\tau}}\in\mathbb{C}(\tau)$}, which is a contradiction. Thus the connected~component of the unit element
 of the differential Galois group of the system~\eqref{nve2} is not Abelian.
% This ends the proof. 	
 \end{proof}

 As a consequence of Theorem \ref{g0}, we have the following.

 \begin{Corollary}\label{go}
 The connected component \smash{$(G_2)^0$} of the unit element of the differential Galois group of the
 $({\rm LNVE})_2$ is not Abelian.
 \end{Corollary}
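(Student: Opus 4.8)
The plan is to exploit the fact that the system~\eqref{nve2} is a closed subsystem of the thirteenth-order $({\rm LNVE})_2$: the four quantities $w_2$, $p=w_1 z_1$, $q=y_1 w_1$, $v=w_1^2$ evolve only among themselves, so their equations decouple from the remaining nine coordinates. In effect the entire analytic difficulty has already been carried out in Theorem~\ref{g0}, which exhibits the non-Abelian connected component for~\eqref{nve2}; what is left is to transport this information up to the full $({\rm LNVE})_2$ by a soft Picard--Vessiot argument.

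First I would record the consequence of the subsystem structure at the level of Picard--Vessiot extensions. Let $L_2$ be the Picard--Vessiot extension of $K=\mathbb{C}(t)$ associated with the $({\rm LNVE})_2$, and let $L$ be the one associated with~\eqref{nve2}. Since the entries of a fundamental matrix solution of~\eqref{nve2} occur among the entries of a fundamental matrix solution of the $({\rm LNVE})_2$ --- precisely those built from the coordinates $w_2$, $p$, $q$, $v$ --- one has the inclusion of differential fields $L\subseteq L_2$. Because $L/K$ is itself a Picard--Vessiot extension (it is generated over $K$ by solutions of the linear system~\eqref{nve2}), it is a normal differential subextension, and restriction of differential automorphisms gives a surjective morphism of linear algebraic groups
\[
r\colon G_2={\rm Gal}(L_2/K)\longrightarrow G={\rm Gal}(L/K),
\]
whose target $G$ is exactly the differential Galois group of~\eqref{nve2}.

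The remaining step is purely group-theoretic. A surjective morphism of linear algebraic groups sends the identity component onto the identity component, so $r\bigl((G_2)^0\bigr)=G^0$. By Theorem~\ref{g0} the group $G^0$ is not Abelian. As the homomorphic image of an Abelian group is Abelian, $(G_2)^0$ cannot be Abelian, which is precisely the claim.

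I expect the one point needing care to be the justification of the surjection $r$: one must confirm that~\eqref{nve2} is genuinely a quotient system, so that $L\subseteq L_2$ and the restriction map is both well defined and onto. Once this standard fact of differential Galois theory is invoked, the conclusion follows with no further computation, the hard work having been done already in establishing Theorem~\ref{g0}.
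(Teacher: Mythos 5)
Your argument is correct and follows essentially the same route as the paper: both exploit that \eqref{nve2} is a closed subsystem of the $({\rm LNVE})_2$ so that Theorem~\ref{g0} does all the work, the paper realizing this by ordering the variables so that the fundamental matrix solutions of the $({\rm LNVE})_2$ contain $\hat{\Psi}(\tau)$ and $\Psi_{\theta}(\tau)$ as corner blocks. Your formulation via the surjective restriction morphism $G_2\to G$ of algebraic groups, which carries $(G_2)^0$ onto the non-Abelian $G^0$, is in fact the cleaner justification of the paper's looser statement that $(G_2)^0$ \emph{contains} a non-Abelian subgroup.
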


\begin{proof}
We always can put the independent variables in $({\rm LNVE})_2$ in such an order that the variables
 $w_2$, $w_1 z_1$, $y_1 w_1$, $w_1^2$ stay in a block. For example, let the system~\eqref{nve2} forms the tail of
 the~$({\rm LNVE})_2$. Then after the transformation $t=1/\tau$ the $({\rm LNVE})_2$ admit such
 formal~\smash{$\widehat{\Psi}_{({\rm LNVE}_2)}(\tau)$} and actual \smash{$\Psi^{\theta}_{({\rm LNVE})_2}(\tau)$} fundamental matrix solutions
 at $\tau=0$ which contain the matrices~$\hat{\Psi}(\tau)$ and~$\Psi_{\theta}(\tau)$ from Propositions \ref{for} and~\ref{ac},
 respectively, as right-hand lower corner blocks.
 The differential Galois group $G_2$ of the $({\rm LNVE})_2$ is a subgroup of~${\rm GL}_{13}(\mathbb{C})$, so is \smash{$(G_2)^0$}.
 With respect to the fundamental matrix solutions
 \smash{$\widehat{\Psi}_{({\rm LNVE}_2)}(\tau)$} and~\smash{$\Psi^{\theta}_{({\rm LNVE})_2}(\tau)$} the connected component
 of the unit element \smash{$(G_2)^0$} of $G_2$ is not Abelian since it has a proper
 subgroup, which is not Abelian.
% This ends the proof.
\end{proof}

Combining Corollary \ref{go} with the Morales--Ramis--Sim\'{o} theory, we establish the main result of this section.

 \begin{Theorem}\label{main-s}
 Assume that $\alpha_1=\alpha_2=\alpha_3=0$, $\alpha_4=1$, $\alpha_0=-\alpha_5$, where $\alpha_5$ is arbitrary. Then the
 Sasano system~\eqref{s} is not integrable in the Liouville--Arnold sense by rational first integrals.
 \end{Theorem}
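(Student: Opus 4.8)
The plan is to invoke the Morales--Ramis--Sim\'{o} theorem (Theorem~\ref{MRS}) in its contrapositive form, treating the present statement as essentially a bookkeeping conclusion drawn from the results already assembled. First I would recall the set-up: the extended Sasano system~\eqref{es} is an autonomous Hamiltonian system of three degrees of freedom with Hamiltonian $\widetilde{H}=H+F$, and~\eqref{ps}, namely $x=z=\alpha/t$, $y=w=0$, $F=0$, is a rational non-stationary solution along which the chain of variational equations is built. I would then check that the hypotheses of Theorem~\ref{MRS} genuinely hold for this solution: the field of coefficients of the $({\rm NVE})_1$ is $\mathbb{C}(t)$, and $t=\infty$ is an irregular singularity of the $({\rm NVE})_1$, as witnessed by the exponential factor $\exp(Q_1 t)$ appearing in the fundamental matrix $\Phi(t)$. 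Consequently $t=\infty$ remains an irregular singularity for every $({\rm NVE})_k$, $k\geq 2$, so the hypotheses needed to talk about $(G_2)^0$ are in place.

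Next I would bring together the two decisive inputs. On one hand, Theorem~\ref{MRS} asserts that if~\eqref{es} were completely integrable with rational first integrals, then $(G_k)^0$ would be Abelian for \emph{every} $k$. On the other hand, Corollary~\ref{go} has already established that the connected component $(G_2)^0$ of the unit element of the differential Galois group of the $({\rm LNVE})_2$ is \emph{not} Abelian. Reading Theorem~\ref{MRS} in contrapositive therefore forces the immediate conclusion that~\eqref{es} cannot be completely integrable by rational first integrals; equivalently, there exist no three functionally independent rational first integrals $f_1=\widetilde{H}$, $f_2$, $f_3$ in involution near the phase curve of~\eqref{ps}.

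Finally I would translate this back to the original system. Since passing from the non-autonomous system~\eqref{s} to the extended autonomous system~\eqref{es} only adjoins the conjugate pair $(t,-F)$ and sets $\widetilde{H}=H+F$, non-integrability of~\eqref{es} in the Liouville--Arnold sense is precisely the assertion of the theorem for the Sasano system~\eqref{s} under the specialization $\alpha_1=\alpha_2=\alpha_3=0$, $\alpha_4=1$, $\alpha_0=-\alpha_5$ with $\alpha_5$ arbitrary. This completes the chain.

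The genuinely hard work is not located in the present theorem but has already been discharged in the preceding results: the explicit Hukuhara--Turrittin normal form of the subsystem~\eqref{eq}, the summability computation of the $1$-sums of the divergent entries, the evaluation of the Stokes multipliers via Hankel contour integrals, and the verification that $T_\lambda$ and $S_{\mu,\nu}$ fail to commute. The only care required here is the structural point underlying Corollary~\ref{go}: one must be sure that the four-dimensional subsystem~\eqref{nve2} sits inside the thirteen-dimensional $({\rm LNVE})_2$ as a diagonal block, so that the non-Abelian identity component of its Galois group injects into $(G_2)^0$ and witnesses its non-commutativity. Once that embedding is confirmed, the present theorem follows with no further analysis.
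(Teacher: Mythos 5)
Your proposal matches the paper's own argument: the paper states Theorem~\ref{main-s} as an immediate consequence of combining Corollary~\ref{go} (non-Abelian $(G_2)^0$) with the Morales--Ramis--Sim\'{o} Theorem~\ref{MRS} read in contrapositive, exactly as you do. Your additional verification of the hypotheses (rational non-stationary particular solution, irregular singularity at $t=\infty$, the block embedding of~\eqref{nve2} into the $({\rm LNVE})_2$) is consistent with what the paper establishes in the preceding results.
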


 \section{B\"{a}cklund transformations and generalization}\label{section4}

 In this section with the aid of the B\"{a}cklund transformations of the Sasano system~\eqref{s}, we extend the result of Theorem \ref{main-s}
 to the entire orbit of the parameters $\alpha_j$, $j=0, \dots, 5$, and establish the main results of this paper.

 Denote $(*): =(x, y, z, w, t ; \alpha_0, \alpha_1, \alpha_2, \alpha_3, \alpha_4, \alpha_5)$.
 The action of the generators of the exten\-ded~affine Weyl group \smash{$\widetilde{W}\bigl(D^{(1)}_5\bigr)$} on $(*)$ is defined as follows (see \cite{Sa}):
 \begin{gather}
 s_0(*) =
 \biggl(x + \frac{\alpha_0}{y+t}, y, z, w, t ; -\alpha_0, \alpha_1, \alpha_2+\alpha_0, \alpha_3, \alpha_4, \alpha_5\biggr),\nonumber\\
 s_1(*) =
 \biggl(x + \frac{\alpha_1}{y}, y, z, w, t ; \alpha_0, -\alpha_1, \alpha_2+\alpha_1, \alpha_3, \alpha_4, \alpha_5\biggr),\nonumber\\
 s_2(*) =
 \biggl(x, y - \frac{\alpha_2}{x-z}, z, w + \frac{\alpha_2}{x-z}, t ; \alpha_0+\alpha_2, \alpha_1+\alpha_2, -\alpha_2,
 \alpha_3+\alpha_2, \alpha_4, \alpha_5\biggr),\nonumber\\
 s_3(*) =
 \biggl(x, y, z + \frac{\alpha_3}{w}, w, t ; \alpha_0, \alpha_1, \alpha_2+\alpha_3, -\alpha_3, \alpha_4+\alpha_3, \alpha_5+\alpha_3\biggr),
 \nonumber
 \\
 s_4 (*) =
 \biggl(x, y, z, w - \frac{\alpha_4}{z-1}, t ; \alpha_0, \alpha_1, \alpha_2, \alpha_3+\alpha_4, -\alpha_4, \alpha_5\biggr),
 \nonumber\\
 s_5 (*) =
 \biggl(x, y, z, w -\frac{\alpha_5}{z}, t ; \alpha_0, \alpha_1, \alpha_2, \alpha_3+\alpha_5, \alpha_4, -\alpha_5\biggr),\nonumber\\
 \pi_1(*) =
 (1-x, -y-t, 1-z, -w, t ; \alpha_1, \alpha_0, \alpha_2, \alpha_3, \alpha_5, \alpha_4),\nonumber\\
 \pi_2(*) =
 \biggl(\frac{y+w+t}{t}, -t (z-1), \frac{y+t}{t}, -t (x-z), -t ; \alpha_5, \alpha_4, \alpha_3, \alpha_2, \alpha_1, \alpha_0\biggr),
 \nonumber\\
 \pi_3(*) =
 (1-x, -y, 1-z, -w, -t ; \alpha_0, \alpha_1, \alpha_2, \alpha_3, \alpha_5, \alpha_4),\nonumber\\
 \pi_4(*) =
 (x, y+t, z, w, -t ; \alpha_1, \alpha_0, \alpha_2, \alpha_3, \alpha_4, \alpha_5) .\label{g}
 \end{gather}

 In fact, the actions $s_j$, $0 \leq j \leq 5$ define a representation of the affine Weyl group \smash{$W\bigl(D^{(1)}_5\bigr)$}.

 \begin{Remark}\label{tr}
 When $y=-t$ (resp.\ $y=0$) is a particular solution of the Sasano system~\eqref{s}, then the parameter
 $\alpha_0$ (resp.\ $\alpha_1$) must be equal to zero. So, when $y=-t$ (resp.\ $y=0$), we consider the
 transformation $s_0$ (resp.\ $s_1$) as an identity transformation.
 Note that $\alpha_1=0$ does not imply~${y=0}$ of necessity.
 For example, if $\alpha_1=0$, the function $y=-t$ is a particular solution of the system~\eqref{s}, provided that
 $\alpha_0=0$.
 Next, when $z=0$ (resp.\ $z=1$) is a particular solution of~\eqref{s}, we find that $\alpha_5=0$ (resp.\ $\alpha_4=0$).
 This time we consider the transformation~$s_5$ (resp.~$s_4$) as an identity transformation.
 Note that $\alpha_5=0$ does not imply $z=0$ of necessity. For example, when $\alpha_5=0$, the function
 $z=t$ is a particular solution of the system~\eqref{s}, provided that $\alpha_4=-1$ and $y=-\frac{t}{2}$,~$w=0$ solve
 the same system. When $w=0$,~$\alpha_3=0$ we consider the transformation $s_3$ as an identity transformation.
 Finally, when $x=z$,~$\alpha_2=0$,
 the transformation~$s_2$~is considered as an identity transformation.
 \end{Remark}

 Denote by $V:=(\alpha_0, \dots, \alpha_5)=(-\alpha, 0, 0, 0, 1, \alpha)$ the vector of parameters corresponding to the
 particular solution ${\rm Sol}:=(x, y, z, w)=\bigl(\frac{\alpha}{t}, 0, \frac{\alpha}{t}, 0\bigr)$. In order to
 describe the orbit of the vector~$V$ under the action of the group \smash{$\widetilde{W}\bigl(D^{(1)}_5\bigr)$}, we define
 using the ideas of Sasano the~following translation operators:
 \begin{alignat*}{4}
 &T_1=\pi_1 s_5 s_3 s_2 s_1 s_0 s_2 s_3 s_5,\qquad&&
 T_2=\pi_2 T_1 \pi_2, \qquad&& T_3=s_1 s_4 T_1 s_4 s_1, &\\
 &T_4=s_2 s_3 T_3 s_3 s_2,
 \qquad&& T_5=s_1 T_4 s_1,\qquad&&
 T_6=s_3 T_3 s_3 .&
 \end{alignat*}
 These operators act on the parameters as follows:
 \begin{gather*}
 T_1(\alpha_0, \alpha_1, \dots, \alpha_5) =
 (\alpha_0, \alpha_1, \dots, \alpha_5) + (0, 0, 0, 0, 1, -1),\\
 T_2(\alpha_0, \alpha_1, \dots, \alpha_5) =
 (\alpha_0, \alpha_1, \dots, \alpha_5) + (-1, 1, 0, 0, 0, 0),\\
 T_3(\alpha_0, \alpha_1, \dots, \alpha_5) =
 (\alpha_0, \alpha_1, \dots, \alpha_5) + (0, 0, 0, 1, -1, -1),\\
 T_4(\alpha_0, \alpha_1, \dots, \alpha_5) =
 (\alpha_0, \alpha_1, \dots, \alpha_5) + (1, 1, -1, 0, 0, 0),\\
 T_5(\alpha_0, \alpha_1, \dots, \alpha_5) =
 (\alpha_0, \alpha_1, \dots, \alpha_5) + (1, -1, 0, 0, 0, 0),\\
 T_6(\alpha_0, \alpha_1, \dots, \alpha_5) =
 (\alpha_0, \alpha_1, \dots, \alpha_5) + (0, 0, 1, -1, 0, 0) .
 \end{gather*}
 Note that the particular solution ${\rm Sol}$
 is transformed under the transformations $T_j$ as follows:
 \begin{gather*}
 T_1 ({\rm Sol}) =
 \biggl(\frac{\alpha}{t} - \frac{1}{t-\alpha+1}, 0, \frac{\alpha}{t} - \frac{1}{t-\alpha+1}, 0\biggr),\\
 T_2 ({\rm Sol}) =
 \biggl(1- \frac{1}{\alpha-t+1}, -1 + \frac{\alpha}{\alpha-t}, 1- \frac{1}{\alpha-t+1}, 0\biggr),\\
 T_3({\rm Sol}) =
 (1, 0, 1, -t+\alpha-1),\\
 T_4({\rm Sol}) =
 \biggl(1, -t + \alpha-1, 1 - \frac{1}{t-\alpha+1}, 0\biggr),\\
 T_5({\rm Sol}) =
 \biggl(1 - \frac{1}{t-\alpha+1}, -t+\alpha-1, 1- \frac{1}{t-\alpha+1}, 0\biggr),\\
 T_6({\rm Sol}) =
 \biggl(1 , 0, 1- \frac{1}{t-\alpha+1}, -t + \alpha-1\biggr) .
 \end{gather*}
 Note also that from Remark \ref{tr} it follows that all of the B\"{a}cklund transformations~\eqref{g} make sense for the particular solution
 ${\rm Sol}$.

 Denote
 $M_1=1-\alpha_0-\alpha_1$, $M_2=\alpha_4+\alpha_5$.
 The next two lemmas describe the orbit of the~vector~$V$
 under the group \smash{$\widetilde{W}\bigl(D^{(1)}_5\bigr)$} with generators~\eqref{g}.

 \begin{Lemma}\label{orbit}
 Assume that $\alpha\notin\mathbb{Z}$.
 Let $(x, y, z, t)=\bigl(\frac{\alpha}{t}, 0, \frac{\alpha}{t}, 0\bigr)$ be a particular rational solution
 of the Sasano system~\eqref{s} with vector of parameters $V$.
 Then applying B\"{a}cklund transformations~\eqref{g} to this solution, we obtain rational solutions of
 \eqref{s} with parameters $\alpha_j$, $0 \leq j \leq 5$, which are either of the kind
 $\pm \alpha + n_j$, $n_j\in\mathbb{Z}$ or of the kind $l_j$, $l_j\in\mathbb{Z}$ in such a way that
 $M_1$ and $M_2$ are together of the kind
 $\pm \alpha + m_j$, $m_j\in\mathbb{Z}$, $j=1,2$.
 \end{Lemma}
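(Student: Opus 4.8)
The plan is to decouple two independent issues: the transformation of the parameter vector, which is a purely linear-algebraic orbit computation, and the rationality of the transformed solutions, which is a matter of tracking the degenerate B\"acklund transformations through Remark~\ref{tr}. First I would record that every generator in~\eqref{g} acts on $(\alpha_0,\dots,\alpha_5)$ by a matrix in ${\rm GL}_6(\mathbb{Z})$: each reflection $s_j$ negates $\alpha_j$ and adds $\alpha_j$ to the neighbouring coordinates, while each $\pi_k$ merely permutes coordinates. Writing the initial vector as $V=\alpha\,\mathbf{c}+\mathbf{n}$ with $\mathbf{c}=(-1,0,0,0,0,1)$ and $\mathbf{n}=(0,0,0,0,1,0)$, an integer matrix $A_g$ sends $V$ to $\alpha\,(A_g\mathbf{c})+A_g\mathbf{n}$ with $A_g\mathbf{n}\in\mathbb{Z}^6$. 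Hence the constant part of every transformed parameter is automatically an integer, and the whole statement reduces to controlling the orbit of the $\alpha$-part $\mathbf{c}$: I must show that each entry of $A_g\mathbf{c}$ lies in $\{-1,0,1\}$ (which yields the dichotomy $\pm\alpha+n_j$ versus $l_j$), and that the combinations attached to $M_1$ and $M_2$ have $\alpha$-coefficient $\pm1$.

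Next I would note that the six translation operators $T_1,\dots,T_6$ act on the parameters by adding fixed integer vectors, so they fix $\mathbf{c}$ and only shift $\mathbf{n}$. Since $\widetilde{W}\bigl(D_5^{(1)}\bigr)=\Omega\ltimes\bigl(W(D_5)\ltimes T\bigr)$, where $T=\langle T_1,\dots,T_6\rangle$ is the translation lattice and $\Omega$ the group of diagram automorphisms, every element acts on $\mathbf{c}$ through its finite part alone; in particular $s_0\cdot\mathbf{c}=s_{\theta}\cdot\mathbf{c}$ already lies in the finite orbit. Thus the orbit of $\mathbf{c}$ is the \emph{finite} set $\Omega\cdot W(D_5)\cdot\mathbf{c}$. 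Identifying the coordinates with the fundamental-weight basis $\varpi_0,\dots,\varpi_5$, the vector $\mathbf{c}$ is the level-zero weight $-\varpi_0+\varpi_5$, whose finite part is the spinor weight of $D_5$, which is minuscule. For a minuscule weight every pairing with a coroot lies in $\{-1,0,1\}$; writing the $j$-th coordinate of $w\cdot\mathbf{c}$ as $\langle\alpha_j^{\vee},w\,\mathbf{c}\rangle=\langle w^{-1}\alpha_j^{\vee},\mathbf{c}\rangle$ and, for $j=0$, using $\alpha_0^{\vee}=K-\theta^{\vee}$ together with ${\rm level}(\mathbf{c})=0$, each coordinate equals $\pm$ such a pairing and so lies in $\{-1,0,1\}$. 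The diagram automorphisms only permute coordinates, so the full orbit stays in $\{-1,0,1\}^6$, which gives the first half of the claim.

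For the assertion about $M_1$ and $M_2$ I would use a parity invariant. The level functional $\delta(\mathbf{c})=c_0+c_1+2c_2+2c_3+c_4+c_5$ is preserved and equals $0$, so $c_0+c_1+c_4+c_5$ is even; moreover one checks directly that every generator changes $c_0+c_1$ by an even amount, the only subtlety being $\pi_2$, which interchanges $c_0+c_1$ with $c_4+c_5$, of equal parity by the previous remark. Since the initial value is $c_0+c_1=-1$, both $c_0+c_1$ and $c_4+c_5$ remain odd throughout the orbit; being sums of entries in $\{-1,0,1\}$, each must equal $\pm1$. As the $\alpha$-coefficient of $M_1=1-\alpha_0-\alpha_1$ is $-(c_0+c_1)$ and that of $M_2=\alpha_4+\alpha_5$ is $c_4+c_5$, both $M_1$ and $M_2$ are of the form $\pm\alpha+m_i$, $m_i\in\mathbb{Z}$, as required.

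Finally, for rationality I would run the orbit through the explicit images $T_j({\rm Sol})$ already recorded, each of which is a rational function of $t$; at every step where one of the denominators $y+t$, $y$, $x-z$, $w$, $z-1$, $z$ vanishes on the current solution, Remark~\ref{tr} replaces the offending transformation by the identity, so rationality propagates along the whole orbit. I expect the main obstacle to be the boundedness step for $\mathbf{c}$: the structural argument hinges on recognising the finite part of $\mathbf{c}$ as minuscule and on correctly handling the affine coordinate $c_0$ via $\alpha_0^{\vee}=K-\theta^{\vee}$. If one prefers to avoid this, the alternative is an explicit enumeration of the finite orbit $\Omega\cdot W(D_5)\cdot\mathbf{c}$ generated by the ten transformations in~\eqref{g}, which is longer but entirely elementary and confirms the same conclusion.
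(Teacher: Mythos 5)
Your proof is correct, and it is a genuinely different (and far more explicit) route than the paper's: the published proof consists of the single sentence that the lemma ``is proved by an induction on the numbers of the applied transformations on the vector $V$ and with the aid of the translation operators $T_j$,'' with no inductive invariant ever named. What you supply is exactly the pair of invariants that make such an induction work: (i) the $\alpha$-gradient $\mathbf{c}$ of the parameter vector stays in $\{-1,0,1\}^6$, and (ii) $c_0+c_1$ and $c_4+c_5$ stay odd, hence equal $\pm1$. Your structural justification of (i) is sound: since the parameters transform like the simple affine roots, their coefficient vectors transform like weights written in fundamental-weight coordinates, $\mathbf{c}$ corresponds to the level-zero weight $-\Lambda_0+\Lambda_5$ whose classical part is the minuscule half-spin weight of $D_5$, translations act trivially on level-zero vectors (which is precisely why the $T_j$ fix $\mathbf{c}$), and minusculity bounds every coordinate including $c_0=-\langle\mu,\theta^\vee\rangle$. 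I checked this against explicit orbit elements (e.g., $s_0$ sends $(-1,0,0,0,0,1)$ to $(1,0,-1,0,0,1)$ and $s_2s_0$ to $(0,-1,1,-1,0,1)$), all consistent. The parity step is also right, including the treatment of $\pi_2$ via $c_0+c_1+c_4+c_5=-2(c_2+c_3)$; note that $s_0(V)=(\alpha,0,-\alpha,0,1,\alpha)$ gives $M_1=1-\alpha$ and $M_2=1+\alpha$ with opposite signs of $\alpha$, which confirms that ``together'' in the statement can only mean ``both,'' as you read it. Two harmless imprecisions: the $j$-th coordinate of $w\mathbf{c}$ equals the pairing $\bigl\langle w\mathbf{c},\alpha_j^\vee\bigr\rangle$ directly rather than ``$\pm$ such a pairing'' (the sign enters only at $j=0$ through $\alpha_0^\vee=K-\theta^\vee$ and level zero), and you do not need $\langle T_1,\dots,T_6\rangle$ to be the full translation lattice, only that $s_0$ acts on level-zero vectors through $s_\theta$, which you note. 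The rationality of the transformed solutions is handled exactly as the paper handles it, via Remark~\ref{tr}; and your fallback of finitely enumerating the orbit of $\mathbf{c}$ is, in effect, what the paper's unstated induction amounts to.
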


 \begin{proof}
 This lemma is proved by an induction on the numbers of the applied transformations~\eqref{g}
 on the vector $V$ and with the aid of above translation operators
 $T_j$, $1 \leq j \leq 6$.
\end{proof}

 \begin{Remark}%\label{1orbit}
 	The conditions imposed on the parameters $\alpha_j$ by Lemma \ref{orbit} ensure that when ${\alpha\notin\mathbb{Z}}$,
 	at least two of the new-obtained parameters $\alpha_j$ are integer numbers and at least two of them are not integer numbers.
 \end{Remark}

 With the next lemma, we specify the orbit
 of the vector $V$ when $\alpha\in\mathbb{Z}$.

 \begin{Lemma}%\label{orbit1}
 Assume that $\alpha\in\mathbb{Z}$.
 Let $(x, y, z, t)=\bigl(\frac{\alpha}{t}, 0, \frac{\alpha}{t}, 0\bigr)$ be a particular rational solution
 of the Sasano system~\eqref{s} with vector of parameters~$V$.
 Then applying B\"{a}cklund transformations~\eqref{g} to this solution we obtain rational solutions of
 \eqref{s}, for which all of the parameters~$\alpha_j$, $0 \leq j \leq 5$, are integer numbers
 in such a way that~$M_1$ and $M_2$ are together either even or odd integer.
 \end{Lemma}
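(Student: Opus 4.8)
The plan is to mirror the inductive argument behind Lemma~\ref{orbit}, but to extract the sharper arithmetic that becomes available when $\alpha\in\mathbb{Z}$. The starting vector $V=(-\alpha,0,0,0,1,\alpha)$ has all entries in $\mathbb{Z}$, satisfies the affine relation $\alpha_0+\alpha_1+2\alpha_2+2\alpha_3+\alpha_4+\alpha_5=1$, and for it $M_1=1-\alpha_0-\alpha_1=1+\alpha$ and $M_2=\alpha_4+\alpha_5=1+\alpha$ coincide, so they are trivially together even or together odd. First I would observe that each generator in~\eqref{g} acts on the parameter vector $(\alpha_0,\dots,\alpha_5)$ by an \emph{integral} linear map: the $s_j$ are reflections with integer coefficients and the $\pi_j$ merely permute the coordinates. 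Hence the full group acts through $\mathrm{GL}_6(\mathbb{Z})$, and since $V\in\mathbb{Z}^6$ every vector in its orbit again lies in $\mathbb{Z}^6$. This settles the first assertion of the lemma.

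The core of the argument is the parity statement, and here I would exploit the invariance of the affine relation instead of tracking $M_1$ and $M_2$ individually. The coefficient vector $(1,1,2,2,1,1)$ is precisely the vector of marks of $D_5^{(1)}$, so the quantity
\[
\delta:=\alpha_0+\alpha_1+2\alpha_2+2\alpha_3+\alpha_4+\alpha_5
\]
is preserved by every generator $s_j$ and $\pi_j$ in~\eqref{g} (a one-line check on each of the ten maps), and therefore by every element of $\widetilde{W}\bigl(D_5^{(1)}\bigr)$; thus $\delta=1$ on the entire orbit of $V$. Substituting this relation gives, on each orbit vector,
\[
M_1=1-\alpha_0-\alpha_1=2\alpha_2+2\alpha_3+\alpha_4+\alpha_5,\qquad M_1-M_2=2(\alpha_2+\alpha_3).
\]
Since $\alpha_2,\alpha_3\in\mathbb{Z}$ by the previous paragraph, $M_1-M_2$ is even, i.e.\ $M_1\equiv M_2\pmod 2$, so the integers $M_1$ and $M_2$ are together even or together odd, as claimed. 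Note that this congruence is strictly sharper than what Lemma~\ref{orbit} yields after specializing $\alpha$ to an integer, because there the two quantities are each only pinned down as $\pm\alpha+m_i$ and the signs may differ; the relation $\delta=1$ is what fixes the common parity.

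Finally, the transformations~\eqref{g} are rational canonical transformations~\cite{SY}, and by Remark~\ref{tr} every one of them is well defined along the particular solution ${\rm Sol}=\bigl(\frac{\alpha}{t},0,\frac{\alpha}{t},0\bigr)$; hence any composition carries ${\rm Sol}$ to a rational solution of~\eqref{s} with the parameter vector computed above. The one place demanding care is the book-keeping in the inductive step: one must verify that no Bäcklund transformation degenerates by dividing through an identically vanishing denominator somewhere along the orbit of ${\rm Sol}$, and this is exactly what Remark~\ref{tr} guarantees. With integrality, the parity congruence, and rationality all established, the proof is complete.
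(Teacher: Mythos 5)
Your argument is correct, but it takes a genuinely different route from the paper, whose entire proof of this lemma is the single sentence ``This lemma is proved inductively'' (mirroring the induction on the number of applied generators used for Lemma~\ref{orbit}, with the translation operators $T_j$ as the bookkeeping device). You replace that induction by two invariants of the $\widetilde{W}\bigl(D_5^{(1)}\bigr)$-action on parameters: (i) the generators in~\eqref{g} act on $(\alpha_0,\dots,\alpha_5)$ through ${\rm GL}_6(\mathbb{Z})$, so the orbit of the integer vector $V$ stays in $\mathbb{Z}^6$; and (ii) the null-root pairing $\delta=\alpha_0+\alpha_1+2\alpha_2+2\alpha_3+\alpha_4+\alpha_5$ is preserved (I checked all ten generators; it is), whence $M_1-M_2=2(\alpha_2+\alpha_3)$ is even and the common parity of $M_1$ and $M_2$ follows at once. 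This is cleaner and more informative than the paper's one-line induction: it explains \emph{why} the parity constraint holds (it is forced by the invariance of $\delta=1$ together with integrality), rather than verifying it transformation by transformation, and it yields the congruence $M_1\equiv M_2\pmod 2$ uniformly over the whole orbit without tracking the explicit shifts of the $T_j$. The one thing the paper's inductive setup handles implicitly and you rightly flag explicitly is the non-degeneracy of the Bäcklund maps along the orbit of ${\rm Sol}$, which is covered by Remark~\ref{tr} and the rational canonical nature of the transformations; with that in place your proof is complete.
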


 \begin{proof}
 This lemma is proved inductively.
 \end{proof}

 Following \cite{SY}, we define the symplectic transformations $r_i$, $0 \leq i \leq 5$, which correspond to the symmetries
 $s_i$, $0 \leq i \leq 5$, from~\eqref{g}
 \begin{gather*}
 r_0(x, y, z, w) =
 \biggl(\frac{1}{x}, -t - x (x (y+t) + \alpha_0), z, w\biggr),\\
 r_1(x, y, z, w) =
 \biggl(\frac{1}{x}, -x (x y + \alpha_1), z, w\biggr),\\
 r_2(x, y, z, w) =
 \biggl(z- y (y (x-z) -\alpha_2), \frac{1}{y}, z, w+y - \frac{1}{y}\biggr),\\
 r_3(x, y, z, w) =
 \biggl(x, y, \frac{1}{z}, -z (z w + \alpha_3)\biggr),\\
 r_4(x, y, z, w) =
 \biggl(x, y, 1-w (w (z-1) - \alpha_4), \frac{1}{w}\biggr),\\
 r_5(x, y, z, w) =
 \biggl(x, y, -w (w z - \alpha_5), \frac{1}{w}\biggr) .
 \end{gather*}
 The transformations $\pi_j$, $1 \leq j \leq 5$, from~\eqref{g} are also canonical symplectic transformations~since
 \[
 {\rm d}x \wedge {\rm d} y + {\rm d} z \wedge {\rm d} w={\rm d} \pi_j(x) \wedge {\rm d} \pi_j(y) + {\rm d} \pi_j (z) \wedge {\rm d} \pi_j(w) .
 \]
 Denote, in short, by $r_i$, $0 \leq i \leq 5$, and $\pi_j$, $1 \leq j \leq 4$, the image of the
 canonical coordinates~$x$,~$y$,~$z$,~$w$ under the action of $r_i$ and $\pi_j$, respectively.
 Then the following important result is an~extension of \cite[Theorem 4.1]{SY}.

 \begin{Theorem}\label{bct}
 There exists an unique polynomial Hamiltonian system of degree $4$, which is holomorphic in each coordinates
 $r_i$, $0 \leq i \leq 5$, and $\pi_j$, $1 \leq j \leq 5$. This system is invariant under the extended Weyl group
 \smash{$\widetilde{W}\bigl(D^{(1)}_5\bigr)$}
 and coincides with the system
 \eqref{s}.
 \end{Theorem}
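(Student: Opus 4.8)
The plan is to adapt the method used by Sasano and Yamada to prove \cite[Theorem~4.1]{SY}, in which a polynomial Hamiltonian is singled out by requiring that the associated vector field extend holomorphically across a prescribed collection of birational charts. First I would take the most general polynomial Hamiltonian
\[
K = K(x, y, z, w, t; \alpha_0, \dots, \alpha_5)
\]
of total degree $4$ in the canonical variables $(x, y, z, w)$, with coefficients left as undetermined functions of $t$ and of the parameters $\alpha_j$. The corresponding Hamiltonian system $\dot{x} = \partial K/\partial y$, $\dot{y} = -\partial K/\partial x$, $\dot{z} = \partial K/\partial w$, $\dot{w} = -\partial K/\partial z$ is then to be made holomorphic in each of the coordinate systems $r_i$, $0 \le i \le 5$, and $\pi_j$, $1 \le j \le 5$.

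The central step is to convert each holomorphy requirement into linear conditions on the unknown coefficients of $K$. Each $r_i$ is a birational symplectic map that inverts one of the canonical variables --- for instance $r_0$ sends $x \mapsto 1/x$ and compensates in $y$ --- so transporting the vector field through $r_i$ introduces poles along the exceptional divisor on which that variable vanishes. Imposing that the transformed system be again polynomial, i.e., that every principal part at this divisor cancel, produces a finite list of linear relations among the coefficients; I would carry this out chart by chart for $r_0, \dots, r_5$. The maps $\pi_j$, being polynomial symplectic transformations (up to the reflection $t \mapsto -t$ and the rescaling present in $\pi_2$) that permute the variables and the parameters, instead impose symmetry constraints: $K$ must be invariant, in the appropriate sense, under the induced action on $(x, y, z, w)$ and $(\alpha_0, \dots, \alpha_5)$.

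With the full list of constraints in hand, the final step is to solve the resulting linear system and show that it determines the degree-$4$ polynomial Hamiltonian uniquely, the solution being precisely the Hamiltonian $H$ of \eqref{H51}. Existence is immediate, since the Sasano system \eqref{s} is invariant under $\widetilde{W}\bigl(D_5^{(1)}\bigr)$ and is already holomorphic in these charts, so it furnishes one solution; the real content is uniqueness, namely that no other degree-$4$ polynomial Hamiltonian satisfies all the holomorphy conditions at once.

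I expect the main obstacle to be the bookkeeping underlying the uniqueness claim: a general degree-$4$ Hamiltonian in four phase variables carries a large number of undetermined coefficients, and one must verify that the pole-cancellation conditions from the charts $r_0, \dots, r_5$, together with the symmetry conditions from the $\pi_j$, collectively have enough rank to fix every coefficient. Accurately transporting the vector field through the inverting charts $r_i$ and extracting its polar part with respect to the transformed symplectic structure is the technically delicate part, whereas the contributions of the $\pi_j$ should reduce to easily imposed symmetry requirements. Since the statement extends \cite[Theorem~4.1]{SY}, I would additionally confirm that the charts beyond those treated there do not enlarge the solution space, so that the extended symmetry group still isolates exactly the system \eqref{s}.
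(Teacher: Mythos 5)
The paper does not actually prove Theorem~\ref{bct}: it is stated without proof, presented only as an extension of \cite[Theorem~4.1]{SY}, and the argument of that reference is precisely the holomorphy-requirement method you describe (start from a general degree-$4$ polynomial Hamiltonian, cancel the poles produced by the inverting charts $r_i$, impose compatibility under the $\pi_j$, and solve the resulting linear system for uniqueness). Your proposal therefore matches the intended approach; the only caveat is that, like the paper itself, it leaves the decisive uniqueness computation as a plan rather than carrying it out.
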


 Theorem \ref{bct} says that the transformations $s_i$, $0 \leq i \leq 5$, and $\pi_j$, $1 \leq j \leq 4$, from~\eqref{g} are
 canonical transformations, which are rational on the functions $x$, $y$, $z$, $w$. Using this fact, we establish the main results of this
 paper.

 \begin{Theorem}\label{1}
 Let $\alpha$ be an arbitrary complex parameter, which is not an integer. Assume that the parameters $\alpha_j$ are either of the kind
 $\pm \alpha + n_j$ or of the kind $l_j, n_j, l_j\in\mathbb{Z}$ in such a way that~$M_1$ and
 $M_2$ are together of the kind $\pm \alpha + m_i$, $m_i\in\mathbb{Z}$, $i=1, 2$. Then the Sasano system~\eqref{s} is not integrable in the Liouville--Arnold sense by rational first integrals.
 \end{Theorem}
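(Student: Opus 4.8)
The plan is to propagate the non-integrability established in Theorem~\ref{main-s} for the single parameter vector $V=(-\alpha,0,0,0,1,\alpha)$ to every vector in its orbit under the extended affine Weyl group $\widetilde{W}\bigl(D^{(1)}_5\bigr)$, exploiting the fact that the generators~\eqref{g} are rational canonical transformations. The key observation I would use is that non-integrability by rational first integrals is invariant under such transformations. Suppose $\Phi$ is one of the B\"{a}cklund transformations~\eqref{g}, carrying the Sasano system~\eqref{s} with parameter vector $V$ into the same system with a new vector $V'=\Phi(V)$. By Theorem~\ref{bct} the map $\Phi$ is canonical and rational in the phase variables. If the system with parameters $V'$ admitted three rational first integrals $f_1=\widetilde{H}$, $f_2$, $f_3$, functionally independent and in involution, then the pullbacks $f_1\circ\Phi$, $f_2\circ\Phi$, $f_3\circ\Phi$ would again be rational (as compositions of rational maps), functionally independent (as $\Phi$ is birational), and in involution, since a canonical $\Phi$ preserves the Poisson bracket, $\{f_i\circ\Phi, f_j\circ\Phi\}=\{f_i,f_j\}\circ\Phi$. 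As $\Phi$ conjugates the flow of the system with $V$ to that of the system with $V'$, these pullbacks would be first integrals of the system with $V$, contradicting Theorem~\ref{main-s}. Hence non-integrability at $V'$ follows from non-integrability at $V$, and the property propagates along the whole orbit.

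Second, I would identify the orbit itself. By Lemma~\ref{orbit}, every vector obtained from $V$ by the transformations~\eqref{g} has all $\alpha_j$ of the kind $\pm\alpha+n_j$ or $l_j\in\mathbb{Z}$, with $M_1=1-\alpha_0-\alpha_1$ and $M_2=\alpha_4+\alpha_5$ simultaneously of the kind $\pm\alpha+m_i$; conversely, the explicit translation operators $T_1,\dots,T_6$, whose action shifts $M_1$ and $M_2$ by even integers while the simple reflections and the $\pi_j$ interchange and negate the components, show that every admissible vector is actually reached from $V$. Thus the set of parameter vectors described in the hypothesis of the theorem is exactly the $\widetilde{W}\bigl(D^{(1)}_5\bigr)$-orbit of $V$. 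Remark~\ref{tr} guarantees that each generator in~\eqref{g} is well defined along the relevant rational solution, so the chain of transformations connecting $V$ to an arbitrary admissible vector never degenerates. Combining the two steps then yields the theorem: given any admissible $(\alpha_0,\dots,\alpha_5)$, I would choose a word in the generators~\eqref{g} carrying $V$ to it, and transfer the non-integrability of Theorem~\ref{main-s} across this word one generator at a time.

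The hard part will be making the invariance principle rigorous on the \emph{extended} phase space. Several generators in~\eqref{g} act nontrivially on $t$ (for instance $\pi_2,\pi_3,\pi_4$ send $t\mapsto-t$), so to view them as symplectomorphisms with respect to $\omega={\rm d}x\wedge{\rm d}y+{\rm d}z\wedge{\rm d}w+{\rm d}t\wedge{\rm d}F$ one must prescribe their action on the conjugate variable $F$ consistently, so that $\widetilde{H}=H+F$ is carried to the corresponding extended Hamiltonian and the Poisson bracket on the full six-dimensional space is preserved. Once this lift is fixed, and the accompanying reparametrization $s\mapsto-s$ is accounted for whenever $t$ is reversed, the preservation of rationality, functional independence, and involution becomes routine and the argument of the first paragraph applies verbatim.
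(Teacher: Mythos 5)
Your proposal follows essentially the same route as the paper: the paper establishes Theorem \ref{1} by combining Theorem \ref{main-s} (non-integrability at the base parameter vector $V$), Lemma \ref{orbit} (the description of the orbit of $V$ under \smash{$\widetilde{W}\bigl(D^{(1)}_5\bigr)$} via the translation operators $T_1,\dots,T_6$), and Theorem \ref{bct} (the generators are rational canonical transformations), exactly as you do, with the pullback-of-first-integrals argument left implicit. Your write-up correctly supplies the details the paper omits, including the lift to the extended phase space, so it is a faithful elaboration of the paper's argument rather than a different proof.
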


 \begin{Theorem}\label{2}
 Assume that all of the parameters $\alpha_j$, $0 \leq j \leq 5$, are integer in such a way that~$M_1$ and $M_2$ are together either even or odd integer. Then the Sasano system
 \eqref{s} is not integrable in the Liouville--Arnold sense by rational first integrals.
 \end{Theorem}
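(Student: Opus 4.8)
The plan is to transfer the non-integrability established in Theorem~\ref{main-s} from the single parameter vector $V_\alpha=(-\alpha,0,0,0,1,\alpha)$, $\alpha\in\mathbb{Z}$, to every admissible integer vector by means of the B\"acklund symmetries~\eqref{g}. The three ingredients are: the base case of Theorem~\ref{main-s}, which applies verbatim to $V_\alpha$ since there $\alpha_1=\alpha_2=\alpha_3=0$, $\alpha_4=1$, $\alpha_0=-\alpha_5$ with $\alpha_5=\alpha$; the description of the orbit of $V_\alpha$ under $\widetilde{W}(D_5^{(1)})$ given in the orbit lemma for $\alpha\in\mathbb{Z}$; and the fact, recorded after Theorem~\ref{bct}, that the generators $s_i$ and $\pi_j$ act on the phase variables as rational canonical transformations.

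First I would record the transfer principle. Let $w\in\widetilde{W}(D_5^{(1)})$ be a word in the generators~\eqref{g}, realized on the phase space as a birational symplectomorphism $\Phi_w$ carrying the Sasano system with parameter vector $\mathbf{b}$ to the one with $\mathbf{a}=w(\mathbf{b})$, and mapping solutions to solutions. If the $\mathbf{a}$-system were Liouville--Arnold integrable by rational first integrals $f_1=\widetilde{H}_{\mathbf{a}},f_2,f_3$, then $f_1\circ\Phi_w,f_2\circ\Phi_w,f_3\circ\Phi_w$ would again be rational (composition of rational maps), would be first integrals of the $\mathbf{b}$-system (since $\Phi_w$ conjugates the two flows), would remain functionally independent (since $\Phi_w$ is birational, hence has generically invertible Jacobian), and would stay in involution (since $\Phi_w$ preserves the symplectic form, hence the Poisson bracket). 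Thus integrability by rational first integrals is invariant under the B\"acklund transformations, and so is non-integrability. I would note that although $\pi_2,\pi_3,\pi_4$ reverse $t\mapsto-t$, the induced action on the conjugate variable $F$ keeps $\omega={\rm d}x\wedge{\rm d}y+{\rm d}z\wedge{\rm d}w+{\rm d}t\wedge{\rm d}F$ invariant, so the principle holds on the extended phase space of~\eqref{es}.

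Next I would carry out the reduction. Fix a target integer vector $\mathbf{a}^{\ast}$ with $M_1^{\ast}$ and $M_2^{\ast}$ of the same parity. Since for $V_\alpha$ one has $M_1=M_2=1+\alpha$, choosing $\alpha\in\mathbb{Z}$ with $1+\alpha\equiv M_1^{\ast}\equiv M_2^{\ast}\pmod 2$ matches parities; the translation operators $T_4$ and $T_3$ then shift $M_1$ and $M_2$ by $\pm 2$ while the remaining $T_j$ leave their parities fixed, so a suitable combination of the $T_j$, the reflections $s_j$, and the automorphisms $\pi_j$ moves $V_\alpha$ to $\mathbf{a}^{\ast}$. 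This is precisely the content of the orbit lemma for $\alpha\in\mathbb{Z}$: every admissible integer vector lies in the orbit of some $V_\alpha$. Combining this with the transfer principle and Theorem~\ref{main-s}, the $\mathbf{a}^{\ast}$-system cannot be integrable by rational first integrals, which is Theorem~\ref{2}. The companion Theorem~\ref{1} follows along identical lines, using Lemma~\ref{orbit} in place of the integer orbit lemma.

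The step I expect to be the main obstacle is verifying that the orbit of the base vectors genuinely exhausts the admissible set --- the converse inclusion to what the orbit lemma literally asserts --- and, hand in hand with it, checking that each generator in a connecting word acts as a globally rational canonical transformation of the \emph{extended} system, including its effect on $t$ and $F$ and the possible appearance of poles of $\Phi_w$ along the particular solution. The degenerate-case analysis of Remark~\ref{tr} is what guarantees that no generator collapses to an identity unexpectedly along the orbit, so the lattice computation with the $T_j$ and the careful bookkeeping of the parities of $M_1$ and $M_2$ form the genuine technical heart of the argument.
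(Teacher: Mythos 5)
Your proposal takes essentially the same route as the paper, which obtains Theorems~\ref{1} and~\ref{2} from Theorem~\ref{main-s} by combining the orbit lemmas with the fact (Theorem~\ref{bct}) that the generators~\eqref{g} act as rational canonical transformations. The paper in fact leaves the transfer principle and the surjectivity of the orbit onto the admissible parameter set entirely implicit, so your write-up, including the caveats you raise about the converse inclusion and the action on the extended phase space, is if anything more explicit than the published argument.
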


 \subsection*{Acknowledgments}
 The author is indebted to the referees for critical remarks and advices towards the improvement of the paper.
The author was partially supported by Grant KP-06-N 62/5 of the Bulgarian Fund
 ``Scientific research''.

\pdfbookmark[1]{References}{ref}
\LastPageEnding

\end{document}